\newcommand{\BBs}{\ensuremath{\mathbf{B}_{\mathrm{sp}}}}
\newcommand{\BBst}{\ensuremath{\widetilde{\mathbf{B}}_{\mathrm{sp}}}}
\newcommand{\IB}{\ensuremath{\mathbf{I}}}
\newcommand{\WB}{\ensuremath{\mathbf{W}}}
\newcommand{\rb}{\ensuremath{\mathbf{r}}}
\newcommand{\vb}{\ensuremath{\mathbf{v}}}
\newcommand{\fb}{\ensuremath{\mathbf{f}}}
\newcommand{\gb}{\ensuremath{\mathbf{g}}}
\newcommand{\cb}{\ensuremath{\mathbf{c}}}
\newcommand{\bb}{\ensuremath{\mathbf{b}}}
\newcommand{\xb}{\ensuremath{\mathbf{x}}} 
\newcommand{\yb}{\ensuremath{\mathbf{y}}} 
\newcommand{\zb}{\ensuremath{\mathbf{z}}} 
\newcommand{\tb}{\ensuremath{\mathbf{t}}}
\newcommand{\PPhi}{\ensuremath{\boldsymbol{\Phi}}}
\newcommand{\SBl}{\ensuremath{\mathbf{S}_{\lambda}}}
\newcommand{\SBlx}{\ensuremath{\mathbf{S}_{\lambda_x}}}
\newcommand{\betah}{\ensuremath{\hat{\beta}}}
\newcommand{\betahps}{\ensuremath{\hat{\beta}}}
\newcommand{\betahpr}{\ensuremath{\hat{\beta}_{\mathrm{pr}}}}
\newcommand{\betahts}{\ensuremath{\hat{\beta}^+}}
\newcommand{\fbh}{\ensuremath{\hat{\mathbf{f}}}}
\newcommand{\fbhps}{\ensuremath{\mathbf{\hat{f}}}}
\newcommand{\fbhpr}{\ensuremath{\mathbf{\hat{f}}_{\mathrm{pr}}}}
\newcommand{\fbhts}{\ensuremath{\mathbf{\hat{f}}^+}}
\newcommand{\fbhpsi}{\ensuremath{\hat{f}_i}}
\newcommand{\fbhtsi}{\ensuremath{\hat{f}^+_i}}
\newcommand{\eepsilon}{\ensuremath{\boldsymbol{\epsilon}}}
\newcommand{\pphi}{\ensuremath{\boldsymbol{\phi}}}
\newcommand{\muh}{\ensuremath{\hat{\mu}}}
\newcommand{\etah}{\ensuremath{\hat{\eta}}}
\newcommand{\LO}{\ensuremath{\mathcal{O}}}
\newcommand{\lo}{\ensuremath{o}}
\newcommand{\xxi}{\ensuremath{\boldsymbol{\xi}}}
\newcommand{\E}{\ensuremath{\mathrm{E}}}
\newcommand{\var}{\ensuremath{\mathrm{Var}}}
\newcommand{\amse}{\ensuremath{\mathrm{AMSE}}}
\newcommand{\Btp}{\ensuremath{B_{\mathrm{tp}}^2}}
\newcommand{\Bps}{\ensuremath{B^2}}
\newcommand{\Bts}{\ensuremath{B_+^2}}
\newcommand{\Bpr}{\ensuremath{B_{\mathrm{pr}}^2}}
\newcommand{\Vps}{\ensuremath{V}}
\newcommand{\Vts}{\ensuremath{V_+}}
\newcommand{\Vpr}{\ensuremath{V_{\mathrm{pr}}}}
\newcommand{\RR}{\ensuremath{\mathbb{R}}} 
\newtheorem{theorem}{Theorem}[section]
\newtheorem{lemma}[theorem]{Lemma}
\newtheorem{corollary}[theorem]{Corollary}
\theoremstyle{definition}
\theoremstyle{remark}
\newcommand{\eqnspatial}{1}
\newcommand{\secasymps}{3.2}
\newcommand{\secasymts}{3.3}
\newcommand{\thmpsest}{1}
\newcommand{\thmpslambda}{2}
\newcommand{\eqnpsestimates}{3}
\newcommand{\corps}{1}
\newcommand{\thmtsest}{3}
\newcommand{\thmtslambda}{4}
\newcommand{\eqntsestimatesb}{7}
\newcommand{\eqntsestimatesf}{8}
\newcommand{\corts}{2}
\newcommand{\sects}{2.2}
\newcommand{\eqnmod}{10}
\newcommand{\eqnmodplus}{14}
\newcommand{\eqnmodgsemtwo}{13}
\newcommand{\secspatialplusglm}{6.2}
\newcommand{\eqnspatialglm}{16}
\title{Spatial+: a novel approach to spatial confounding}
\author{Emiko Dupont$^{1,*}$, 
Simon N. Wood$^{2}$, and Nicole Augustin$^{2}$ \\
$^{1}$Department of Mathematical Sciences, University of Bath, Bath BA2 7AY, U.K \\
$^{2}$School of Mathematics, University of Edinburgh, Edinburgh EH9 3FD, U.K.\\
$^{*}$e.dupont@bath.ac.uk}
\date{}
\begin{document}
\maketitle

\begin{abstract}
In spatial regression models, collinearity between covariates and spatial effects can lead to significant bias in effect estimates. This problem, known as spatial confounding, is encountered modelling forestry data to assess the effect of temperature on tree health. Reliable inference is difficult as results depend on whether or not spatial effects are included in the model. The mechanism behind spatial confounding is poorly understood and methods for dealing with it are limited. We propose a novel approach, spatial+, in which collinearity is reduced by replacing the covariates in the spatial model by their residuals after spatial dependence has been regressed away. Using a thin plate spline model formulation, we recognise spatial confounding as a smoothing-induced bias identified by \citet{rice1986}, and through asymptotic analysis of the effect estimates, we show that spatial+ avoids the bias problems of the spatial model. 
This is also demonstrated in a simulation study. Spatial+ is straight-forward to implement using existing software and, as the response variable is the same as that of the spatial model, standard model selection criteria can be used for comparisons. A major advantage of the method is also that it extends to models with non-Gaussian response distributions. 
Finally, while our results are derived in a thin plate spline setting, the spatial+ methodology transfers easily to other spatial model formulations.
\end{abstract}

\section{Introduction}\label{sec:intro}
Regression models for spatially referenced data use spatial random effects to capture residual spatial correlation that cannot be explained by the covariates in the model. While such models are an effective tool for predictions of the response variable, as first noted by \citet{clayton1993}, they can be problematic when estimation of individual covariate effects are of interest. So-called spatial confounding arises because spatial effects may have elements of collinearity with spatially dependent covariates and therefore interfere with their effect estimates. \citet{reich2006} analysed the issue using an example modelling the effect of socio-economic status on stomach cancer incidence in the municipalities of Slovenia. When spatial effects are added to the model, the covariate effect disappears, suggesting the spatial effects have taken over a disproportionate part of the explanatory power. While in this example, the spatial effects take the form of an Intrinsic Conditional Auto-Regressive (ICAR) random effect, spatial confounding is widely acknowledged as an issue that affects spatial models in general \citep[see e.g.][]{hodges2010,paciorek}.

In this paper we model data from the Terrestrial Crown Condition Inventory (TCCI) forest health monitoring survey which has been carried out yearly since 1983 by the Forest Research Institute Baden-W\"urttemberg. Crown defoliation (an indicator of poor tree health) has generally been worsening over time, and there is growing interest in understanding the effects of climate change in order to decide on forest management strategies for mitigation. Here, using a linear regression model, we consider the effect of temperature on crown defoliation. However, our results are highly dependent on whether or not we include spatial random effects in the model. As illustrated in Figure \ref{fig:intro}, in the null model (with no spatial effects), the estimated covariate effect is positive but not significant, whereas in the corresponding spatial model, the covariate effect is significant and the effect size more than triples. This behaviour suggests there is spatial confounding and makes reliable inference difficult. 

\begin{figure}
\centerline{
\includegraphics[width=0.32\textwidth]{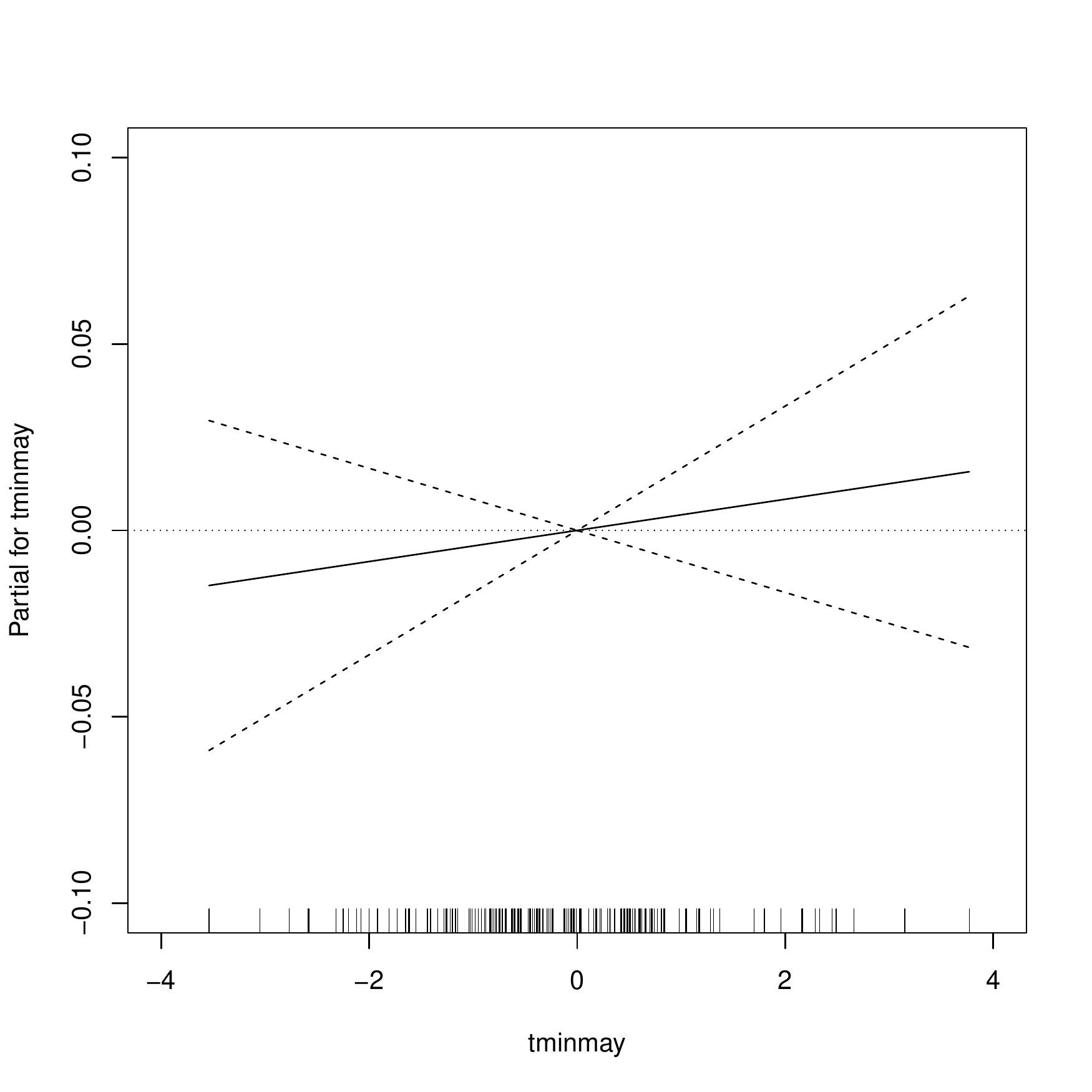}
\includegraphics[width=0.32\textwidth]{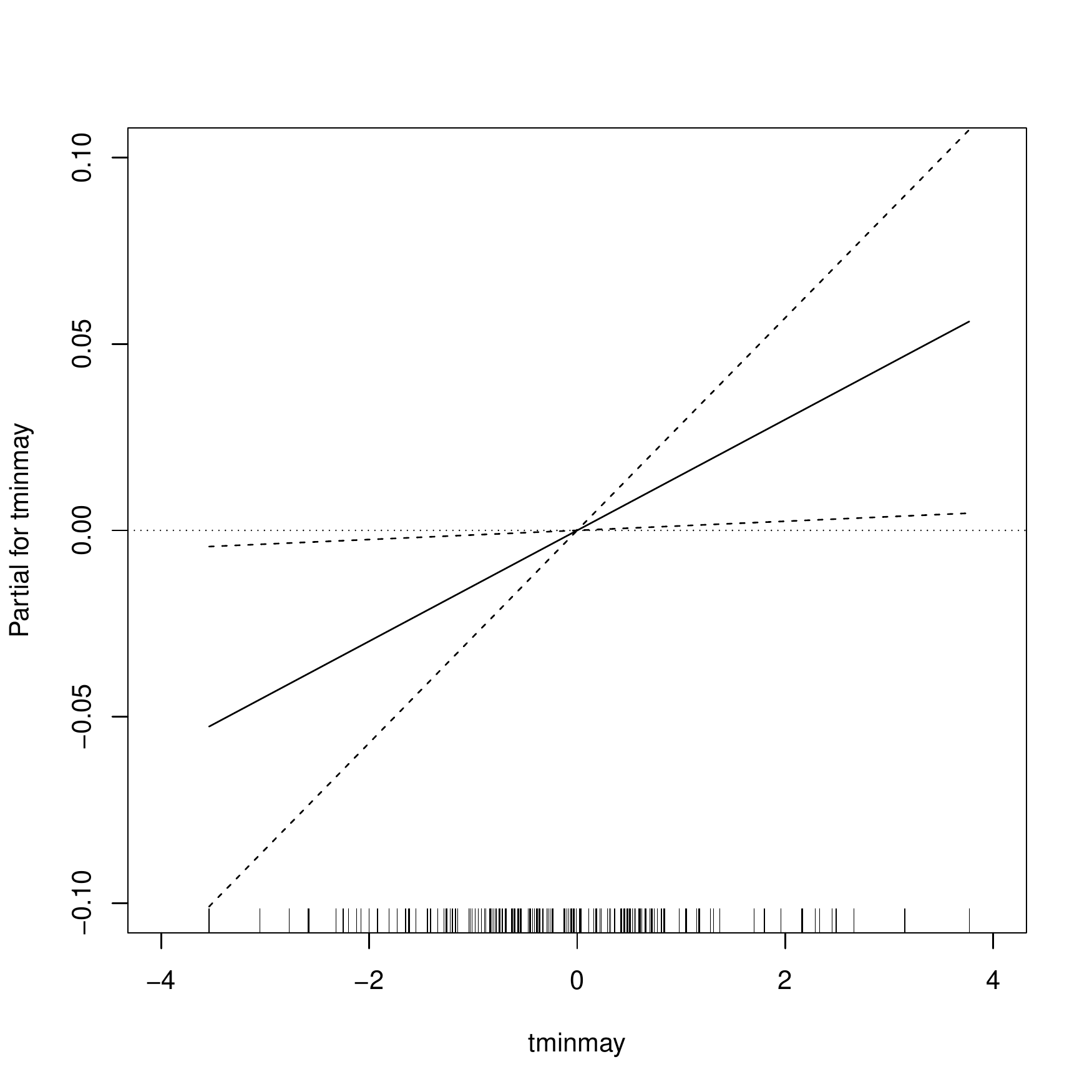}
\includegraphics[width=0.32\textwidth]{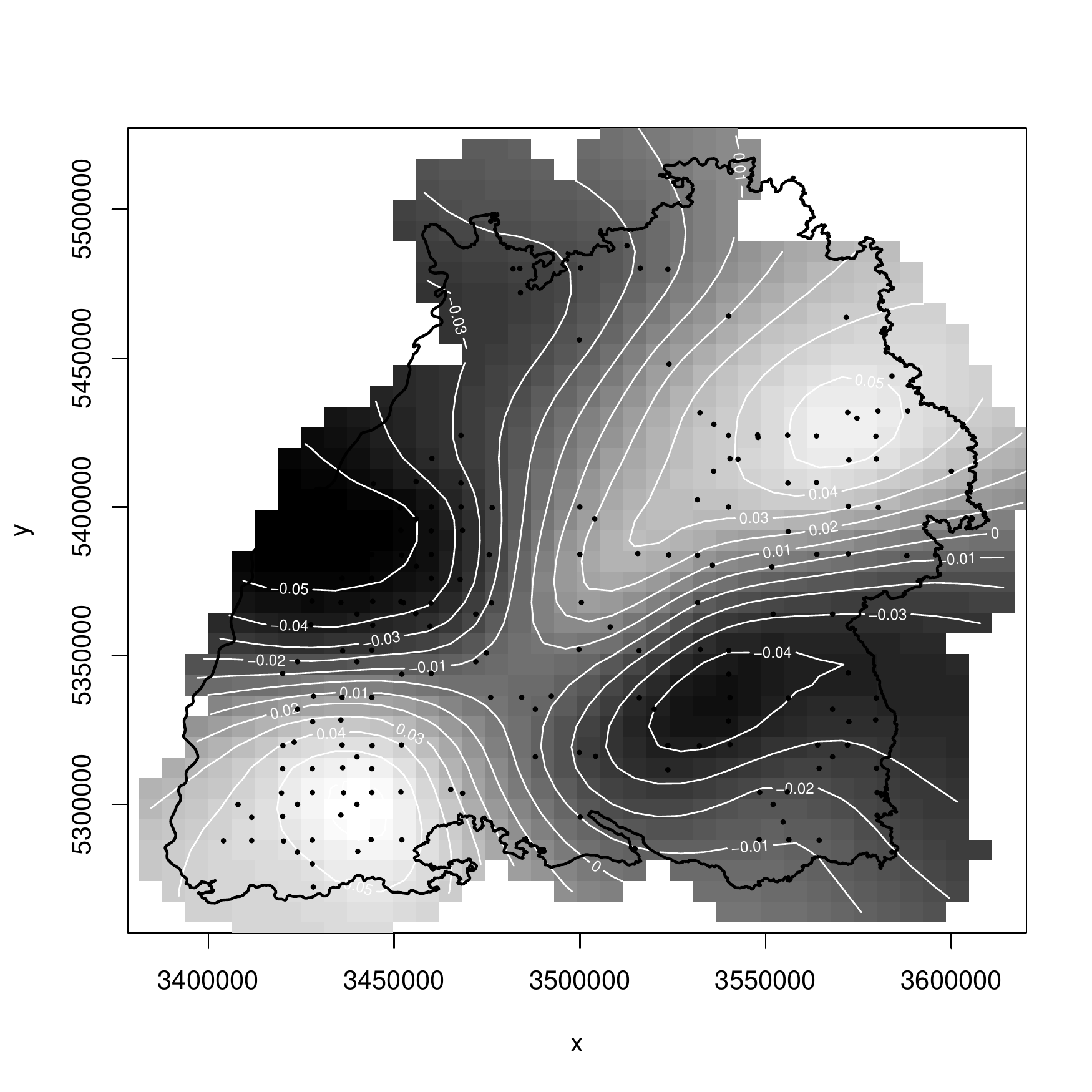}
}
\caption{Forestry example. Estimated effect of minimum temperature in May on crown defoliation in the null model (left) and the spatial model (middle), where for each model the plot shows the contribution of the centered covariate to the predicted response (with two times standard error bands). Estimated spatial effect in the spatial model (right) with the border of Baden-W\"urttemberg outlined and dots showing the data locations.}
\label{fig:intro}
\end{figure}

A commonly used method for dealing with spatial confounding is restricted spatial regression (RSR), introduced by \citet{reich2006} for the ICAR model, and further developed by \citet{hoeting} for continuous space models. In RSR the spatial random effects are restricted to the orthogonal complement of the covariates while keeping the overall column space of the model matrix in the regression unchanged. RSR directly eliminates collinearity and is designed to preserve the covariate effect estimates of the null model. However, it is a general misconception that the estimates in the null model (and thus, RSR) are unbiased. In fact, each covariate effect estimate includes, not only the effect of the covariate, but also any unmeasured spatial effects with the same spatial pattern \citep[see e.g.][]{hoeting}. Therefore, unless the unmeasured effects are independent of the covariates (a rather strong assumption), RSR introduces bias by construction.

Other methods for avoiding spatial confounding bias are limited, and with theoretical derivations often intractable, methodology tends to rely on simulations alone.
Recently, \citet{thaden2018} proposed the geoadditive structural equations model (gSEM) based on the structural equations framework commonly used for causal inference problems. In this approach, spatial dependence is regressed away from both the response and the covariates, and a regression involving the residuals only is used to identify the original covariate effects. 
In a simulation study, Thaden and Kneib show that the gSEM approach appears to give unbiased estimates while the original spatial model gives biased results in some scenarios. However, it is not immediately clear why the method works, and in examples such as ours, where the variables of interest are naturally spatially dependent, it seems undesirable to eliminate all spatial information from the modelling.

Here, we propose a novel approach, the spatial+ model, and using asymptotic analysis as well as a simulation study, we show that the estimates in this model avoid the bias problems of the spatial model. Spatial+ is a simple modification of the spatial model where the covariates are replaced by their residuals after spatial dependence has been regressed away. Thus, spatial+ retains the column space of the model matrix but reduces collinearity between covariate and spatial effects. A practical advantage over gSEM is that, as the response variable is unchanged from the spatial model, standard model selection criteria can be used for comparisons with the spatial and null models. Moreover, while the main properties of spatial+ are studied for models with a Gaussian response variable, we show that the method generalises naturally to any response distribution from the exponential family of distributions.

Key to our analysis is that we formulate the spatial model as a partial thin plate spline model where spatial correlation is modelled by imposing a smoothing penalty on the spatial effects in the fitting process. We can then recognise spatial confounding as a smoothing-induced bias identified by \citet{rice1986} for partial spline models (i.e.\ models where the domain of the spline, here the spatial domain, is one-dimensional). Spatial+ is a higher-dimensional version of a model introduced by \citet{chen_shiau1991} to overcome this type of bias in the one-dimensional case. Intuitively, as smoothing is only applied to the spatial part of the model, by making covariate effect estimates broadly independent of the spatial effects, they avoid this bias. For one-dimensional models, Rice, Chen and Shiau derived results for the asymptotic behaviour of estimates as the number of fitted data points $n\rightarrow\infty$. Due to results by \citet{utreras1988}, we are able to extend these derivations to models of arbitrary spatial dimension. We confirm that, as is the case in dimension one, the bias in the covariate effect estimates in the spatial model can become disproportionately large, while in the spatial+ model, the bias converges to $0$ strictly faster than the standard deviation.

Using the thin plate spline model formulation, we also see that, in the one-dimensional case, the gSEM estimates are in fact the same as the partial residual estimates introduced by Denby (1986) and independently by \citet{speckman1988}. Like the spatial+ estimates, the partial residual estimates (in dimension one) are shown by \citet{chen_shiau1991} to avoid the disproportionate smoothing-induced bias identified by \citet{rice1986}, and we can once again use \citet{utreras1988} to generalise this result to arbitrary spatial dimensions. For completeness, these derivations are included in Appendix D. 

Finally, we note that, although our results are derived in the thin plate spline context, the methodology of spatial+, namely, the modification of the model matrix, can be directly applied to other commonly used spatial models, including, for example, Gaussian Markov random field (GMRF) models and the (discrete space) ICAR model. In fact, it can be shown that modelling spatial random effects through the use of a smoothing penalty is equivalent to a Bayesian model formulation in which the spatial correlation structure is determined by a prior distribution. This equivalence is explained, for example, in \citet{kimeldorf1970}, Section 6.1 of \citet{silverman1985}, pages 239-240 of \citet{wood2017} and \citet{fahrmeir2004}. Thus, while different spatial models correspond to different smoothing penalties, the underlying idea of reducing collinearity in this way to keep covariate effect estimates unaffected by spatial smoothing would apply in general. 

This paper is structured as follows. In Section \ref{sec:method}, we introduce the spatial and spatial+ models that form the basis of our analysis. Section \ref{sec:asym} summarises the main theoretical results which show that the covariate effect estimates in the spatial model can become disproportionately biased due to spatial smoothing, while in the spatial+ model, bias is negligible. In Section \ref{sec:sim}, we illustrate these results in a simulation study which also compares spatial+ with RSR and the gSEM. In Section \ref{sec:app}, we demonstrate how spatial+ can be implemented by applying it to our forestry example. Finally, in Section \ref{sec:glm}, we generalise the spatial+ methodology to non-Gaussian response distributions and confirm that the method works in simulations for three different distributions.

\section{Method}\label{sec:method}
\subsection{Spatial model}\label{sec:ps}
Our starting point is a spatial model formulated as a partial thin plate spline model \citep[see e.g.][]{wahba1990} of the form
\begin{equation}\label{eqn:spatial}
y_i=\beta x_i+f(\tb_i)+\epsilon_i, \quad \epsilon_i\underset{\textrm{iid}}\sim N(0,\sigma^2)
\end{equation}
where $\yb=(y_1,\ldots,y_n)^T$ is the response, $\xb=(x_1,\ldots,x_n)^T$ an observed covariate with unknown effect $\beta$ and $f\in H^m(\Omega)$ an unknown bounded function defined on an open bounded domain $\Omega\subset\RR^d$ which includes the known values $\tb_1,\ldots,\tb_n$. In the spatial context, $\tb_1,\ldots,\tb_n$ are the spatial locations of the observations. The estimates $\betahps$ and $\hat{f}$ in this model (known as the partial thin plate spline estimates of order $m > d/2$) are obtained as the minimisers of
\[
\frac{1}{n}\sum_{i=1}^n (y_i-\beta x_i-f(\tb_i))^2+\lambda \sum_{i_1,\ldots,i_m}\int_{\RR^d} \big\vert \frac{\partial^m f(\tb)}{\partial t_{i_1}\cdots \partial t_{i_m}}\big\vert^2 d\tb
\]
where $\lambda>0$ is an unknown smoothing parameter. 
Minimisation here is over all $\beta\in \RR$ and functions $f\in H^m(\RR^d)$ with $\frac{\partial^m f}{\partial t_{i_1}\cdots \partial t_{i_m}}\in L^2(\RR^d)$ for all subsets $i_1,\ldots,i_m$ of $1,\ldots,n$. The first term encourages fitted values that are close to the data while the second term induces smoothing by penalising the wiggliness of the function $f$.

\citet{duchon1977} showed that the estimate of $f$ can be obtained by estimating its coefficients in a basis known as the natural thin plate spline basis. This basis spans a finite-dimensional subspace in the space of functions defined on $\RR^d$ and has dimension $N=M+n$ where $M={{m+d-1}\choose{d}}$. Using this basis, the partial thin plate spline estimates $\betahps$ and $\fbhps=(\hat{f}(\tb_1),\ldots,\hat{f}(\tb_n))^T$ are the minimisers of
\begin{equation}\label{eqn:ps_min}
\Vert\yb-\beta\xb-\fb\Vert^2+n\lambda\fb^T\boldsymbol{\boldsymbol{\Gamma}}\fb
\end{equation}
with $\boldsymbol{\Gamma}$ an $n\times n$ penalty matrix. Solving the resulting normal equations, we see that
\begin{equation}\label{eqn:ps_estimates}
\betahps=\left(\xb^T(\IB-\SBl)\xb\right)^{-1}\xb^T(\IB-\SBl)\yb,\quad \fbhps=\SBl(\yb-\betahps\xb)
\end{equation}
where $\SBl=(\IB+n\lambda\boldsymbol{\Gamma})^{-1}$ is known as the smoother matrix. $\SBl$ is the influence matrix for the model (\ref{eqn:spatial}) with no covariate term, and $\SBl\yb$ is the thin plate spline fitted to the data $\yb$.

\subsection{Spatial+ model}\label{sec:ts}
Starting with the model (\ref{eqn:spatial}), in line with \citet{rice1986}, we assume the covariate $\xb$ has the form 
\begin{equation}\label{eqn:co_model}
x_i=f^x(\tb_i)+\epsilon^x_i,\quad \epsilon^x_i\underset{\textrm{iid}}\sim N(0,\sigma_x^2)
\end{equation}
where $f^x\in H^m(\Omega)$ is bounded. This means that $\xb$ is correlated with the smooth term $f$ through the component $f^x$. Extending the two-stage smoothing spline model defined in \citet{chen_shiau1991} to models of dimension $d\ge 1$, we define the spatial+ model as follows. Let $\fbhps^x=\SBlx\xb$ and $\rb^x=(\IB-\SBlx)\xb$ be the fitted values and residuals in the thin plate spline regression (\ref{eqn:co_model}) with smoothing parameter $\lambda_x>0$. The spatial+ model is then the partial thin plate spline model
\begin{equation}\label{eqn:spatial+}
y_i=\beta r^x_i+f^+(\tb_i)+\epsilon_i, \quad \epsilon_i\underset{\textrm{iid}}\sim N(0,\sigma^2)
\end{equation}
where $\beta$ is the unknown effect of $\rb^x=(r^x_1,\ldots,r^x_n)^T$ and $f^+$ models the combined effect $f+\beta f^x$ in the original model (\ref{eqn:spatial}). The spatial+ estimate $\betahts$ of $\beta$ is its partial thin plate spline estimate in this model, i.e.\
\begin{align*}
&\betahts=\big((\rb^x)^T(\IB-\SBl)\rb^x\big)^{-1}(\rb^x)^T(\IB-\SBl)\yb&\\
&=\big(\xb^T(\IB-\SBlx)(\IB-\SBl)(\IB-\SBlx)\xb\big)^{-1}\!\xb^T(\IB-\SBlx)(\IB-\SBl)\yb,&\nonumber
\end{align*}
and the spatial+ estimate $\fbhts$ of $\fb=(f(\tb_1),\ldots,f(\tb_n))^T$ is given by
\[
\fbhts=\widehat{\fb^+}-\betahts \fbhps^x=\SBl(\yb-\betahts\xb)-(\IB-\SBl)\SBlx\betahts\xb
\]
where $\widehat{\fb^+}$ denotes the partial thin plate spline estimate of $\fb^+=(f^+(\tb_1),\ldots,f^+(\tb_n))^T$ in (\ref{eqn:spatial+}).

\subsection{Smoothness selection}
Smoothing penalties introduce bias in estimates but reduce variance. The smoothing parameters $\lambda$ and $\lambda_x$ are usually estimated based on a separate smoothness selection criterion, that balances this bias-variance trade-off.

For the analysis in Section \ref{sec:asym}, in line with \citet{rice1986, chen_shiau1991}, we choose the value of the smoothing parameter that minimises the average mean squared error (AMSE) of the estimated spatial effect. The AMSE for an estimated effect $\fbh=(\hat{f}_1,\ldots,\hat{f}_n)^T$ of the function $f$ evaluated at data points is defined as
\[
\amse(\fbh)=\frac{1}{n}\sum_{i=1}^n \E\left[(\hat{f}_i-f(\tb_i))^2\right]
=B^2(f,\lambda)+V(f,\lambda)
\]
where $B^2(f,\lambda)=\frac{1}{n}\sum_{i=1}^n \big(\E(\hat{f}_i)-f(\tb_i)\big)^2$ and $V(f,\lambda)=\frac{1}{n}\sum_{i=1}^n\var(\hat{f}_i)$ are the average squared bias and the average variance, respectively.

For the simulations in Section \ref{sec:sim} and the example in Section \ref{sec:app} we use the generalized cross validation (GCV) criterion, which is the default option in the \texttt{R}-package \texttt{mgcv} used for implementation. Asymptotically (as the sample size $n\rightarrow\infty$), GCV selects the optimal smoothing parameter for minimising prediction error. Thus, GCV is not dissimilar to the criterion used for the theoretical derivations. Indeed, \citet{chen_shiau1994} show that their asymptotic results in \citet{chen_shiau1991} for one-dimensional models also hold for GCV and Mallows' $C_L$. In practice, the restricted maximum likelihood (REML) criterion is often used instead of GCV as, for finite samples, GCV usually has more uncertain estimates than REML and tends to undersmooth (i.e.\ overfit) the data  (\citet{wood2017} p. 266-267). Repeating the simulations and the data example using REML gave similar results to GCV.

\section{Asymptotic results}\label{sec:asym}
In Sections \ref{sec:asym_ps} and \ref{sec:asym_ts} we derive asymptotic results for the models defined in Sections \ref{sec:ps} and \ref{sec:ts} with proofs provided in Appendix C. These results are based on a number of technical lemmas and assumptions, details of which are provided in Appendices A and B. For dimension $d=1$, \citet{rice1986}, \citet{chen_shiau1991} use the Demmler-Reinsch basis for natural splines to diagonalise the smoother matrix $\SBl$. This enables them to explicitly study the asymptotic behaviour of model estimates. Our generalisation to dimensions $d\ge 1$ is fascilitated by \citet{utreras1988} who shows how the asymptotic properties of the Demmler-Reinsch basis generalise to higher dimensions (see Lemma 1 in Appendix A). 
Using this, we are able to prove results in a similar way to Rice, Chen and Shiau. Where possible, we have simplified the derivations and notation, particularly, we have adapted the structure of Rice's proofs to be more in line with the approach used by \citet{chen_shiau1991}. For the rest of this paper, we assume that $m> d/2$ and that the domain $\Omega$ and data locations $\tb_1,\ldots,\tb_n$ satisfy the conditions of Lemma 1 in Appendix A. 
We use the notation $a(n)\approx b(n)$ to mean that $a(n)/b(n)$ is bounded away from zero and infinity as $n\rightarrow \infty$.

\subsection{Asymptotic results for the spatial model}\label{sec:asym_ps}
In the model (\ref{eqn:spatial}), spatial correlation is modelled through smoothing of the term $f$. Without the smoothing penalty, the model is an ordinary linear model in which all effect estimates are unbiased. Therefore, bias in the covariate effect estimate arises as a direct result of smoothing. \citet{rice1986} showed for dimension $d=1$ that, while this bias is asymptotically $0$ as $n\rightarrow\infty$, the rate of convergence may be slow. More specifically, we cannot ensure that the bias converges faster than the standard deviation if the smoothing parameter $\lambda$ converges at the optimal rate (minimising the AMSE of the estimated spatial effect). Therefore, the bias can in practice become disproportionately large. Here, we generalise Rice's results and see that the problem of potentially excessive bias in $\betahps$ persists in models where the spatial domain has dimension $d\ge 1$. As an aside, we note that, as in the $d=1$ case, the rate of convergence of the variance of $\betahps$, is the same as that in a model with no smoothing penalty.

\begin{theorem}\label{thm:ps_est}
Suppose $\lambda\approx n^{-\delta}$ for some $0<\delta<1$, $f, f^x\in H^m(\Omega)$ are bounded and $m\ge d$.  Then for the partial thin plate spline estimate of $\beta$ we have that
\begin{description}
\item[(a)] $\E(\betahps)-\beta=\lo(n^{-1/2})+\LO(\lambda^{1/2})$,
\item[(b)] $n\var(\betahps)\rightarrow \sigma^2/\sigma_x^2$ as $n\rightarrow\infty$.
\end{description}
In particular, $\var(\betahps)=\LO(n^{-1})$ and we need $\lambda=\lo(n^{-1})$ to ensure that the bias converges faster than the standard deviation of $\betahps$.
\end{theorem}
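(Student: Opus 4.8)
The plan is to substitute the model equation $\yb=\beta\xb+\fb+\eepsilon$ into the closed form (\ref{eqn:ps_estimates}) for $\betahps$, obtaining $\betahps-\beta=\big(\xb^T(\IB-\SBl)\xb\big)^{-1}\xb^T(\IB-\SBl)(\fb+\eepsilon)$, and then to condition on the covariate vector $\xb$. Since the response error $\eepsilon$ is independent of $\xb$ (which, by (\ref{eqn:co_model}), depends only on the nuisance smooth $\fb^x=(f^x(\tb_1),\dots,f^x(\tb_n))^T$ and the covariate error $\eepsilon^x$) and retains mean zero and covariance $\sigma^2\IB$,
\[
\E(\betahps\mid\xb)-\beta=\frac{\xb^T(\IB-\SBl)\fb}{\xb^T(\IB-\SBl)\xb},\qquad
\var(\betahps\mid\xb)=\sigma^2\,\frac{\xb^T(\IB-\SBl)^2\xb}{\big(\xb^T(\IB-\SBl)\xb\big)^2}.
\]
The argument then reduces to controlling the four quadratic forms appearing here, and this is where Lemma~1 of Appendix~A (the $d$-dimensional Demmler--Reinsch estimates of \citet{utreras1988}) and the technical lemmas of Appendix~B enter. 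The two facts I would extract first are: a \emph{smoothing-bias bound}, $n^{-1}\gb^T(\IB-\SBl)^k\gb=\LO(\lambda)$ for $k=1,2$ and every bounded $g\in H^m(\Omega)$ — which follows from Lemma~1 together with the elementary inequality $\big(n\lambda\rho/(1+n\lambda\rho)\big)^2\le n\lambda\rho/(1+n\lambda\rho)\le n\lambda\rho$ for the penalty eigenvalues $\rho$ and the minimum-norm property of the interpolating thin plate spline — and a \emph{trace estimate}, $n^{-1}\mathrm{tr}(\SBl)\to0$ (again from Lemma~1; the hypotheses $m\ge d$ and $0<\delta<1$ are used here). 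Writing $\xb=\fb^x+\eepsilon^x$ and splitting each quadratic form into a deterministic part, a mean-zero part linear in $\eepsilon^x$, and a part quadratic in $\eepsilon^x$, these two facts give, in probability and in $L^2$,
\[
\tfrac1n\xb^T(\IB-\SBl)\xb\longrightarrow\sigma_x^2,\qquad \tfrac1n\xb^T(\IB-\SBl)^2\xb\longrightarrow\sigma_x^2,
\]
with deterministic parts of order $\lambda$, linear parts of order $(\lambda/n)^{1/2}$, and quadratic parts of mean $\sigma_x^2\big(1-n^{-1}\mathrm{tr}\SBl\big)\to\sigma_x^2$ and variance $\LO(1/n)$.

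For part~(a), I would write the conditional bias as $N/D$ with $N=n^{-1}\xb^T(\IB-\SBl)\fb=N_1+N_2$, where $N_1=n^{-1}(\fb^x)^T(\IB-\SBl)\fb$ is deterministic and $N_2=n^{-1}(\eepsilon^x)^T(\IB-\SBl)\fb$ has mean zero, and $D=n^{-1}\xb^T(\IB-\SBl)\xb$. Cauchy--Schwarz and the smoothing-bias bound give $|N_1|\le\big(n^{-1}\Vert\fb^x\Vert^2\big)^{1/2}\big(n^{-1}\fb^T(\IB-\SBl)^2\fb\big)^{1/2}=\LO(1)\cdot\LO(\lambda^{1/2})=\LO(\lambda^{1/2})$, while $\E N_2^2=n^{-1}\sigma_x^2\cdot n^{-1}\fb^T(\IB-\SBl)^2\fb=\LO(\lambda/n)$. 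Since $D$ concentrates at $\sigma_x^2>0$, I would expand $1/D$ about $1/\sigma_x^2$ on the event $\{D\ge\sigma_x^2/2\}$ and treat its complement separately, obtaining $\E(\betahps)-\beta=N_1\big(\sigma_x^{-2}+\lo(1)\big)+\E[N_2(D^{-1}-\sigma_x^{-2})]=\LO(\lambda^{1/2})+\LO\big((\lambda/n)^{1/2}\big)$; since $\lambda\to0$ forces $(\lambda/n)^{1/2}=\lo(n^{-1/2})$, this is $\LO(\lambda^{1/2})+\lo(n^{-1/2})$, which is (a). For part~(b), I would use the law of total variance $\var(\betahps)=\E[\var(\betahps\mid\xb)]+\var(\E[\betahps\mid\xb])$. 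The second term is the variance of $N/D$; its only random ingredients are the $\LO((\lambda/n)^{1/2})$ pieces (the deterministic $N_1$ does not contribute to the variance), so it is $\LO(\lambda/n)=\lo(1/n)$ and drops out. For the first term, the displayed convergences give $n\,\var(\betahps\mid\xb)=\sigma^2\big(\tfrac1n\xb^T(\IB-\SBl)^2\xb\big)\big/\big(\tfrac1n\xb^T(\IB-\SBl)\xb\big)^2\to\sigma^2/\sigma_x^2$ in probability, and a truncation argument upgrades this to $n\,\E[\var(\betahps\mid\xb)]\to\sigma^2/\sigma_x^2$; adding the two terms gives (b). The closing remarks that we need $\lambda=\lo(n^{-1})$ for the bias to be negligible relative to the standard deviation are then immediate, since (a) only certifies a bias of order $\lambda^{1/2}$ and (b) gives a standard deviation of order $n^{-1/2}$, whence the ratio is of order $(n\lambda)^{1/2}$.

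The step I expect to be the main obstacle is the passage from the conditional statements to unconditional moments: $\betahps$ is a ratio of quadratic forms in the Gaussian vector $\xb$, and although the denominator $D$ concentrates at $\sigma_x^2$, it is not bounded away from $0$ along sample paths, so controlling $\E(\betahps)$ and $\var(\betahps)$ requires a truncation backed by an \emph{exponential} lower-tail bound for $D$. Such a bound holds because $D\ge w\,n^{-1}\Vert(\IB-\PB)\xb\Vert^2$, where $\PB$ is the rank-$M$ projection onto the polynomial null space of the penalty and $w=n\lambda\rho_{M+1}/(1+n\lambda\rho_{M+1})$ is the smallest nonzero eigenvalue of $\IB-\SBl$, which tends to $1$ by Lemma~1 since $n\lambda\to\infty$; the Gaussian vector $(\IB-\PB)\eepsilon^x$ then makes $n^{-1}\Vert(\IB-\PB)\xb\Vert^2$ exponentially concentrated, so $\P(D\le\sigma_x^2/16)$ is exponentially small, while a $\chi^2$ comparison gives $\E[D^{-p}]=\LO(n^p)$ for every $p$; together these render the contribution of $\{D\le\sigma_x^2/16\}$ negligible in both the bias and the variance. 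The only other slightly delicate point is checking that the smoothing-bias and trace estimates hold with the stated rates in arbitrary spatial dimension $d$, which is exactly what the \citet{utreras1988} results assembled in Appendix~A provide.
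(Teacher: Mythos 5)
Your proof is correct in substance and reaches the stated rates, but it takes a genuinely different route from the paper. The paper (following Rice and Chen--Shiau) works \emph{conditionally} on $\xb$: the expectation and variance in the theorem are taken over the response noise $\eepsilon$ only, and all randomness in $\xb=\fb^x+\eepsilon^x$ is absorbed into the deterministic assumptions (A1)--(A3) on the basis coefficients $\xxi^x=\PPhi^T\eepsilon^x$ (in particular $\sup_k|\xi^x_k|=\LO(\log n)$), which hold almost surely. With that convention the proof is three lines: substitute $\E(\yb)=\beta\xb+\fb$ and $\var(\yb)=\sigma^2\IB$ into the closed form for $\betahps$, and invoke Lemma 4(a), (c) for the bias and Lemma 4(a), (b) for the variance --- exactly the three quadratic-form limits $n^{-1}\xb^T(\IB-\SBl)\xb\to\sigma_x^2$, $n^{-1}\xb^T(\IB-\SBl)^2\xb\to\sigma_x^2$ and $n^{-1}\xb^T(\IB-\SBl)\fb=\lo(n^{-1/2})+\LO(\lambda^{1/2})$ that you re-derive. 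Your version instead computes \emph{unconditional} moments, which is why you need the extra machinery (law of total variance, an exponential lower-tail bound for the denominator $D$, negative moments of $D$, and uniform integrability to pass from convergence in probability to convergence of expectations); none of this appears in the paper because conditioning on $\xb$ makes $D$ a deterministic sequence bounded away from zero. Your treatment of the noise cross-term also differs: you bound $N_2=n^{-1}(\eepsilon^x)^T(\IB-\SBl)\fb$ in $L^2$ as $\LO((\lambda/n)^{1/2})$, whereas the paper bounds it pathwise via (A3) and Lemma 2 as $\lo(n^{-1/2})$; both suffice. The unconditional approach buys a stronger (marginal) statement at the price of Gaussian concentration arguments; the paper's buys brevity and matches the convention of the one-dimensional literature it generalises. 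One small imprecision to fix: in $\var(N/D)$ the deterministic numerator piece $N_1$ \emph{does} contribute through the randomness of $D$, via $N_1^2\var(1/D)=\LO(\lambda)\cdot\LO(n^{-1})$; this is still $\lo(n^{-1})$, so your conclusion stands, but the claim that $N_1$ "does not contribute to the variance" should be restated.
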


\begin{theorem}\label{thm:ps_lambda}
Suppose $\lambda\approx n^{-\delta}$ for some $0<\delta<1$, $f, f^x\in H^m(\Omega)$ are bounded and $m\ge d$. Then the average squared bias $\Bps(f,\lambda)$ and average variance $\Vps(f,\lambda)$ of the partial thin plate spline estimate of $f$ satisfy
\begin{description}
\item[(a)] $\Bps(f,\lambda)=n^{-1}\sum_i (\E(\fbhpsi)-f(\tb_i))^2=\LO(\lambda)$,
\item[(b)] $\Vps(f,\lambda)=n^{-1}\sum_i\var (\fbhpsi)=\LO(n^{-1}\lambda^{-d/2m})$.
\end{description}
In particular, the optimal rate for $\lambda$ in terms of minimising $\amse(\fbhps)$ is $\lambda=\LO(n^{-2m/(2m+d)})$, and when $\lambda$ converges at this optimal rate, $\amse(\fbhps)=\LO(n^{-2m/(2m+d)})$.
\end{theorem}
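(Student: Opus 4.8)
The plan is to reduce both quantities to the spectrum of the smoother matrix $\SBl$ via the Demmler--Reinsch-type eigenbasis supplied by Lemma~1 of Appendix~A: an orthogonal system $\mathbf{u}_k$ with $\SBl\mathbf{u}_k=(1+n\lambda\nu_k)^{-1}\mathbf{u}_k$, where $\nu_1=\dots=\nu_M=0$ (the polynomial null space of $\boldsymbol{\Gamma}$) and $\nu_k\approx k^{2m/d}/n$ for $k>M$. Starting from $\fbhps=\SBl(\yb-\betahps\xb)$ and substituting $\yb=\beta\xb+\fb+\eepsilon$, I would work with
\[
\fbhps=\SBl\fb+\SBl\eepsilon+(\beta-\betahps)\SBl\xb ,
\]
whose first two terms are exactly the smoothing-bias and noise parts of the thin plate spline fit to $\fb+\eepsilon$ and whose last term is a correction controlled by Theorem~\ref{thm:ps_est}. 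Writing $\betahps=\beta+a+\bb^{T}\eepsilon$ with $a=\big(\xb^{T}(\IB-\SBl)\xb\big)^{-1}\xb^{T}(\IB-\SBl)\fb$ and $\bb^{T}=\big(\xb^{T}(\IB-\SBl)\xb\big)^{-1}\xb^{T}(\IB-\SBl)$, this exhibits $\fbhps$ as an affine function of $\eepsilon$, namely $\fbhps=\big(\SBl\fb-a\,\SBl\xb\big)+A\eepsilon$ with $A:=\SBl-(\SBl\xb)\bb^{T}$ (working conditionally on $\xb$, or with the almost-sure bounds of Appendix~B).

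For part~(a), $\E(\fbhps)-\fb=(\SBl-\IB)\fb-a\,\SBl\xb$. Expanding $\fb=\sum_k c_k\mathbf{u}_k$, the main term satisfies $\Vert(\SBl-\IB)\fb\Vert^{2}=\sum_k\big(\tfrac{n\lambda\nu_k}{1+n\lambda\nu_k}\big)^{2}\Vert\mathbf{u}_k\Vert^{2}c_k^{2}\le n\lambda\,\fb^{T}\boldsymbol{\Gamma}\fb$, using $\big(t/(1+t)\big)^{2}\le t$. Since $\fb^{T}\boldsymbol{\Gamma}\fb$ is the thin plate penalty of the natural spline interpolating the data $(\tb_i,f(\tb_i))$, it is at most the penalty of $f$ itself, which is $\LO(1)$ for $f\in H^{m}(\Omega)$ bounded; hence this term contributes $\LO(\lambda)$ to $\Bps$. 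For the correction, $|a|=|\E(\betahps)-\beta|=\lo(n^{-1/2})+\LO(\lambda^{1/2})$ by Theorem~\ref{thm:ps_est}(a) and $\Vert\SBl\xb\Vert^{2}\le\Vert\xb\Vert^{2}=\LO(n)$, so $n^{-1}a^{2}\Vert\SBl\xb\Vert^{2}=\lo(n^{-1})+\LO(\lambda)$; combining the two pieces via $(u+v)^{2}\le2u^{2}+2v^{2}$ gives $\Bps(f,\lambda)=\LO(\lambda)$.

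For part~(b), $\Vps(f,\lambda)=\tfrac{\sigma^{2}}{n}\Vert A\Vert_{F}^{2}\le\tfrac{2\sigma^{2}}{n}\big(\mathrm{tr}(\SBl^{2})+\Vert\SBl\xb\Vert^{2}\Vert\bb\Vert^{2}\big)$, using symmetry of $\SBl$. For the first summand, $\mathrm{tr}(\SBl^{2})=M+\sum_{k>M}(1+n\lambda\nu_k)^{-2}$; inserting $n\lambda\nu_k\approx\lambda k^{2m/d}$ and comparing the sum with $\lambda^{-d/2m}\int_{0}^{\infty}(1+v^{2m/d})^{-2}\,dv$ — a finite integral since $m>d/2$, reached because the effective cut-off $\lambda^{d/2m}n$ diverges when $m>d/2$ and $\delta<1$ — yields $\mathrm{tr}(\SBl^{2})=\LO(\lambda^{-d/2m})$, hence $n^{-1}\mathrm{tr}(\SBl^{2})=\LO(n^{-1}\lambda^{-d/2m})$. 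For the second summand, $\Vert\SBl\xb\Vert^{2}=\LO(n)$ while $\xb^{T}(\IB-\SBl)\xb\approx n\sigma_x^{2}$ and $\Vert(\IB-\SBl)\xb\Vert^{2}\le\Vert\xb\Vert^{2}=\LO(n)$ (as in the proof of Theorem~\ref{thm:ps_est}(b)), so $\Vert\bb\Vert^{2}=\LO(n^{-1})$ and this summand is $\LO(n^{-1})$, which is absorbed into $\LO(n^{-1}\lambda^{-d/2m})$. This gives~(b). Finally $\amse(\fbhps)=\Bps+\Vps=\LO(\lambda)+\LO(n^{-1}\lambda^{-d/2m})$; the two bounds balance at $\lambda^{1+d/2m}\approx n^{-1}$, i.e.\ $\lambda\approx n^{-2m/(2m+d)}$ (of the assumed form $n^{-\delta}$ with $0<\delta=2m/(2m+d)<1$), and at this rate $\amse(\fbhps)=\LO(n^{-2m/(2m+d)})$; a matching lower bound $\mathrm{tr}(\SBl^{2})\approx\lambda^{-d/2m}$ from the same comparison confirms this is genuinely the optimal polynomial rate.

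I expect the main obstacle to be the variance estimate: extracting from Lemma~1 the two-sided eigenvalue asymptotics $\nu_k\approx k^{2m/d}/n$ \emph{uniformly in} $n$, and then turning the spectral sum $\sum_{k>M}(1+n\lambda\nu_k)^{-2}$ into the claimed power of $\lambda$ rigorously — in particular handling the contribution of the $M$-dimensional null space and checking that the endpoint $\lambda^{d/2m}n$ of the comparison integral diverges, which is exactly where $m>d/2$ and $\delta<1$ enter. A secondary point needing care is showing that the $\xb$-dependent correction terms (those carrying $a$ and $\bb$) are of strictly smaller order than the pure-smoothing terms; this relies on Theorem~\ref{thm:ps_est} (through which the hypothesis $m\ge d$ is inherited) together with $\xb^{T}(\IB-\SBl)\xb\approx n\sigma_x^{2}$.
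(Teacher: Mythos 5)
Your argument is correct and follows essentially the same route as the paper's proof: the same decompositions $\E(\fbhps)-\fb=-(\E(\betahps)-\beta)\SBl\xb-(\IB-\SBl)\fb$ and $\fbhps-\E(\fbhps)=\SBl\eepsilon-(\betahps-\E(\betahps))\SBl\xb$, with the correction terms controlled by Theorem~\ref{thm:ps_est} and the pure smoothing terms by the eigenvalue asymptotics of Lemma~\ref{lem:eigen}, followed by the identical balancing of $\LO(\lambda)$ against $\LO(n^{-1}\lambda^{-d/2m})$. The only difference is that you re-derive in place what the paper delegates to Lemmas~\ref{lem:trace}, \ref{lem:sqbias} and \ref{lem:tech_rice}(d) --- the trace bound $\mathrm{Tr}(\SBl^2)=\LO(\lambda^{-d/2m})$ via the integral comparison, the bound $n^{-1}\Vert(\IB-\SBl)\fb\Vert^2=\LO(\lambda)$ via $(t/(1+t))^2\le t$ and $\fb^T\boldsymbol{\Gamma}\fb\le\vert f\vert_m^2$, and the crude but sufficient $n^{-1}\Vert\SBl\xb\Vert^2=\LO(1)$ --- which is consistent with how those lemmas are proved in Appendix~B.
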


We have therefore proved the following result which shows that we cannot avoid the potential for excessive bias in $\betahps$, unless $\lambda$ converges at a rate slower than the optimal rate of convergence, i.e.\ unless the smooth term is undersmoothed.
\begin{corollary}\label{cor:ps}
Suppose $\lambda\approx n^{-\delta}$ for some $0<\delta<1$, $f, f^x\in H^m(\Omega)$ are bounded and $m\ge d$.
The optimal rate of convergence for $\lambda$ in terms of minimising $\amse(\fbhps)$ is slower than the required rate of $\lo(n^{-1})$ for ensuring that the bias of $\betahps$ converges faster than the standard deviation of the estimate.
\end{corollary}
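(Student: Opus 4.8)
The plan is to deduce the corollary directly from Theorems~\ref{thm:ps_est} and~\ref{thm:ps_lambda} by a comparison of rates. First I would recall from Theorem~\ref{thm:ps_lambda} that, since $\Bps(f,\lambda)=\LO(\lambda)$ and $\Vps(f,\lambda)=\LO(n^{-1}\lambda^{-d/2m})$, the value of $\lambda$ minimising $\amse(\fbhps)=\Bps(f,\lambda)+\Vps(f,\lambda)$ is found by balancing these two terms (set $\lambda\approx n^{-1}\lambda^{-d/2m}$ and solve), which gives the optimal rate $\lambda=\LO(n^{-2m/(2m+d)})$.

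Next I would compare this with the requirement of Theorem~\ref{thm:ps_est}: since $\E(\betahps)-\beta=\lo(n^{-1/2})+\LO(\lambda^{1/2})$ while $\sqrt{\var(\betahps)}=\LO(n^{-1/2})$, the bias is negligible relative to the standard deviation only if $\LO(\lambda^{1/2})=\lo(n^{-1/2})$, i.e.\ $\lambda=\lo(n^{-1})$. Writing the optimal rate as $\lambda\approx n^{-\delta}$, it corresponds to $\delta=2m/(2m+d)$; since $d\ge 1$ and $m$ is a fixed finite integer, $2m/(2m+d)<1$, so the ratio of the optimal $\lambda$ to $n^{-1}$ is $n^{1-2m/(2m+d)}=n^{d/(2m+d)}\rightarrow\infty$. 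Hence the optimal $\lambda$ does not satisfy $\lambda=\lo(n^{-1})$; in fact it decays strictly more slowly than $n^{-1}$, which is exactly the assertion of the corollary.

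I do not anticipate any real obstacle: the statement is an immediate arithmetic consequence of the exponents in the two preceding theorems. The only points needing a little care are the bookkeeping on the direction of the inequality (a smaller exponent $\delta$ means \emph{slower} convergence of $\lambda$ to $0$) and the observation that $0<2m/(2m+d)<1$, which both delivers the conclusion and shows that the optimal rate is consistent with the standing assumption $\lambda\approx n^{-\delta}$, $0<\delta<1$, so that the corollary is not vacuous.
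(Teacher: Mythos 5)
Your argument is correct and is exactly the route the paper takes: Corollary~\ref{cor:ps} is presented there as an immediate consequence of Theorems~\ref{thm:ps_est} and~\ref{thm:ps_lambda}, obtained by comparing the optimal rate $\lambda=\LO(n^{-2m/(2m+d)})$ with the requirement $\lambda=\lo(n^{-1})$ and noting $2m/(2m+d)<1$. Your additional bookkeeping (the ratio $n^{d/(2m+d)}\rightarrow\infty$ and the consistency with $0<\delta<1$) is a harmless, correct elaboration of the same comparison.
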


\subsection{Asymptotic results for the spatial+ model}\label{sec:asym_ts}
In dimension $d=1$, \citet{chen_shiau1991} show that for the model (\ref{eqn:spatial+}), the problems identified by Rice disappear. That is, when the parameters $\lambda$ and $\lambda_x$ converge at the optimal rate (for minimising the AMSE of the estimated spatial effect), the bias of the covariate effect estimate $\betahts$ converges to $0$ faster than the standard deviation and, therefore, does not become disproportionately large. We now generalise these results to dimensions $d\ge 1$.
\begin{theorem}\label{thm:ts_est}
Suppose $\lambda\approx n^{-\delta}, \lambda_x\approx n^{-\delta_x}$ for some $0<\delta,\delta_x<1$, $f, f^x\in H^m(\Omega)$ are bounded and $m\ge d$. Then for the spatial+ estimate of $\beta$ we have that
\begin{description}
\item[(a)] $\E(\betahts)-\beta=\lo(n^{-1/2})+\LO((\lambda\lambda_x)^{1/2})$,
\item[(b)] $n\var(\betahts)\rightarrow\sigma^2/\sigma_x^2$ as $n\rightarrow\infty$.
\end{description}
In particular, $\var(\betahts)=\LO(n^{-1})$ and we need $\lambda\lambda_x=\lo(n^{-1})$ to ensure that the bias converges faster than the standard deviation of $\betahts$.
\end{theorem}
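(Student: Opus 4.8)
The plan is to follow the argument behind Theorem~\ref{thm:ps_est}, the only genuinely new feature being that the covariate has itself been passed through the smoother $\SBlx$, which is what contributes the extra factor $\lambda_x$ to the bias. Writing $\mathbf{A}=\IB-\SBl$, $\mathbf{A}_x=\IB-\SBlx$ and substituting $\yb=\beta\xb+\fb+\eepsilon$ together with $\xb=\SBlx\xb+\rb^x=\fbhps^x+\rb^x$ into the formula for $\betahts$ gives
\[
\betahts-\beta=\frac{\beta\,(\rb^x)^T\mathbf{A}\fbhps^x+(\rb^x)^T\mathbf{A}\fb+(\rb^x)^T\mathbf{A}\eepsilon}{(\rb^x)^T\mathbf{A}\rb^x}=:\frac{N_1+N_2+N_3}{D}.
\]
Since the thin plate spline of order $m$ on $\tb_1,\dots,\tb_n$ has a single penalty matrix $\boldsymbol{\Gamma}$, the matrices $\SBl,\SBlx,\mathbf{A},\mathbf{A}_x$ commute and are simultaneously diagonalised by the Demmler--Reinsch basis, and I would record up front the facts the rest rests on: the elementary bound $\|(\IB-\SBl)\vb\|^2\le n\lambda\,\vb^T\boldsymbol{\Gamma}\vb$ for every $\vb$ (and the same with $\SBlx,\lambda_x$); the eigenvalue asymptotics of Lemma~1, giving $\mathrm{tr}(\SBl)=\LO(\lambda^{-d/2m})$ and $\mathrm{tr}(\SBlx)=\LO(\lambda_x^{-d/2m})$, both $\lo(n^{1/2})$ since $m\ge d$ (so $d/2m\le1/2$) and $\delta,\delta_x<1$, and $\vb^T\boldsymbol{\Gamma}\vb=\LO(1)$ whenever $\vb$ comes from evaluating a bounded $H^m(\Omega)$ function, or from smoothing such values; and hence $\|(\IB-\SBl)\fb\|^2=\LO(n\lambda)$, $\|(\IB-\SBlx)\fb^x\|^2=\LO(n\lambda_x)$, $\|(\IB-\SBl)\SBlx\fb^x\|^2=\LO(n\lambda)$.

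I would first show $D$ concentrates: writing $D=\xb^T\mathbf{A}_x\mathbf{A}\mathbf{A}_x\xb$ with $\xb=\fb^x+\eepsilon^x$, the dominant term $(\eepsilon^x)^T\mathbf{A}_x\mathbf{A}\mathbf{A}_x\eepsilon^x$ has mean $\sigma^2_x\,\mathrm{tr}(\mathbf{A}_x^2\mathbf{A})=\sigma_x^2(n-\lo(n))$ (because $\mathrm{tr}(\IB-\mathbf{A}_x^2\mathbf{A})\le 2\,\mathrm{tr}(\SBlx)+\mathrm{tr}(\SBl)=\lo(n)$) and variance $\LO(n)$ by the variance formula for Gaussian quadratic forms, while the $\fb^x$- and cross-terms are $\LO(n(\lambda\lambda_x)^{1/2})=\lo(n)$; hence $D/n\to\sigma_x^2$ in $L^2$, and a standard argument gives $D\ge cn$ outside an event of exponentially small probability. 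Conditioning on $\xb$, only $N_3$ is random, with $\E(N_3\mid\xb)=0$ and $\var(N_3\mid\xb)=\sigma^2(\rb^x)^T\mathbf{A}^2\rb^x$, so $\var(\betahts\mid\xb)=\sigma^2(\rb^x)^T\mathbf{A}^2\rb^x/D^2$; since $(\rb^x)^T\mathbf{A}^2\rb^x=D-(\rb^x)^T\mathbf{A}\SBl\rb^x$ with the correction $\lo(n)$ in mean and concentrating (by the same kind of estimate), $(\rb^x)^T\mathbf{A}^2\rb^x/n\to\sigma_x^2$ as well, so $n\,\var(\betahts\mid\xb)\to\sigma^2/\sigma_x^2$ in probability, and $n\,\var(\betahts\mid\xb)\le n/D$ with the lower bound on $D$ gives uniform integrability, whence $n\,\E[\var(\betahts\mid\xb)]\to\sigma^2/\sigma_x^2$. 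Part~(b) then follows once $\var[(N_1+N_2)/D]=\lo(n^{-1})$ is checked: split $N_1+N_2$ into its deterministic part $W$ (of size $\LO(n(\lambda\lambda_x)^{1/2})$, from the bias estimates below) and a stochastic remainder of mean $\lo(n^{1/2})$ and variance $\lo(n)$; then $\var[W/D]=W^2\var[1/D]$ with $\var[1/D]\le \var(D)/(c^2n^2(\E D)^2)+\text{(negligible)}=\LO(n^{-3})$ (centring $1/D$ at $1/\E D$, so that only $\var(D)=\LO(n)$, and not the $\LO(n(\lambda\lambda_x)^{1/2})$ bias of $D$, enters), giving $\var[W/D]=\LO(\lambda\lambda_x/n)=\lo(n^{-1})$, and the remainder over $D$ has variance $\le\lo(n)/(c^2n^2)=\lo(n^{-1})$.

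For (a), $\E(N_3/D)=\E[\E(N_3/D\mid\xb)]=0$, so only $\E(N_1/D)$ and $\E(N_2/D)$ remain. With $N_2=((\IB-\SBlx)\fb^x)^T\mathbf{A}\fb+(\eepsilon^x)^T\mathbf{A}_x\mathbf{A}\fb$, the first term is $\le\|(\IB-\SBlx)\fb^x\|\,\|(\IB-\SBl)\fb\|=\LO(n(\lambda\lambda_x)^{1/2})$, giving $\LO((\lambda\lambda_x)^{1/2})$ after dividing by $D$ (which is of order $n$); the second term has mean $0$ and variance $\le\sigma_x^2\|(\IB-\SBl)\fb\|^2=\LO(n\lambda)$, and the identity $\E[Z/D]=\E[Z(D^{-1}-(\E D)^{-1})]$ with Cauchy--Schwarz, the lower bound on $D$ and $\E(D-\E D)^2=\LO(n)$ bounds its contribution by $\lo(n^{-1/2})$. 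For $N_1=\beta((\IB-\SBlx)\xb)^T\mathbf{A}\SBlx\xb$ I would expand using $\xb=\fb^x+\eepsilon^x$: the purely deterministic piece is $\le\|(\IB-\SBlx)\fb^x\|\,\|(\IB-\SBl)\SBlx\fb^x\|=\LO(n(\lambda\lambda_x)^{1/2})$; the two pieces linear in $\eepsilon^x$ have mean $0$ and variances $\LO(n\lambda_x)$ and $\LO(n\lambda)$; and the quadratic piece $(\eepsilon^x)^T\mathbf{A}_x\mathbf{A}\SBlx\eepsilon^x$ has mean $\sigma_x^2\,\mathrm{tr}(\mathbf{A}_x\mathbf{A}\SBlx)\le\sigma_x^2\,\mathrm{tr}(\SBlx(\IB-\SBlx))\le\sigma_x^2\,\mathrm{tr}(\SBlx)=\lo(n^{1/2})$ and variance $\lo(n)$ — so, again via the ratio identity, $\E(N_1/D)=\LO((\lambda\lambda_x)^{1/2})+\lo(n^{-1/2})$. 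Assembling the pieces gives $\E(\betahts)-\beta=\LO((\lambda\lambda_x)^{1/2})+\lo(n^{-1/2})$, which is (a); the final claim is immediate on comparing the $\LO((\lambda\lambda_x)^{1/2})$ bias with the $\LO(n^{-1/2})$ standard deviation from (b).

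The step I expect to be the real obstacle is not any individual estimate but making the ratio $N/D$ rigorous: each numerator is a bilinear or quadratic form in the two independent Gaussian vectors $\eepsilon,\eepsilon^x$, and its deterministic part, mean and variance must be controlled \emph{simultaneously} at the right orders and then transferred to $N/D$ through the $L^2$-concentration of $D$, a high-probability lower bound $D\ge cn$, and a crude estimate on the exponentially small complement — the technical heart of the one-dimensional arguments of \citet{rice1986} and \citet{chen_shiau1991}, here carried out in dimension $d$ by means of Lemma~1. A subsidiary subtlety, for part~(b), is that $\var[(N_1+N_2)/D]$ must be bounded by $\lo(n^{-1})$ and not merely by the $\LO(\lambda\lambda_x)$ that the size of $W$ naively suggests, which forces the centring of $1/D$ at $1/\E D$ described above.
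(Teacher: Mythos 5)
Your algebraic decomposition is exactly the paper's: the bias splits into the term $\beta\,(\rb^x)^T(\IB-\SBl)\SBlx\xb$ and the term $(\rb^x)^T(\IB-\SBl)\fb$, each of order $\LO(n(\lambda\lambda_x)^{1/2})+\lo(n^{1/2})$ by Cauchy--Schwarz pairing of the two averaged squared biases $\Btp(f^x,\lambda_x)$ and $\Btp(f,\lambda)$ (resp.\ $\Btp(f^x,\lambda)$), and the denominator and the variance numerator both behave like $\sigma_x^2 n$; these are precisely parts (a)--(d) of the paper's Lemma~A.6, from which Theorem~\ref{thm:ts_est} follows in three lines. Where you genuinely diverge is the probabilistic framework. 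The paper, following Rice and Chen--Shiau, treats $\xb$ as a \emph{fixed} design whose noise coefficients $\xxi^x=\PPhi^T\eepsilon^x$ satisfy the deterministic conditions (A1)--(A3); consequently $D=(\rb^x)^T(\IB-\SBl)\rb^x=na_1$ is a deterministic sequence with $a_1\to\sigma_x^2$, the expectation and variance in the theorem are over $\eepsilon$ alone, and there is no ratio-of-random-variables problem at all. You instead compute unconditional moments over both $\eepsilon$ and $\eepsilon^x$, which is a stronger (and arguably more natural) reading of the statement but is what forces all the extra machinery you correctly identify as the ``real obstacle'': $L^2$-concentration and an exponential lower-tail bound for $D$, centring $1/D$ at $1/\E D$ so that only $\var(D)=\LO(n)$ enters, uniform integrability of $n\var(\betahts\mid\xb)$, and implicitly the existence of negative moments of $D$ on the small-probability bad event (which does hold here since $(\IB-\SBlx)(\IB-\SBl)(\IB-\SBlx)$ has rank $n-M$, but is not free). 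None of this appears in the paper, and if your goal is to reproduce the published result you can discard it by conditioning on $\xb$ from the outset and invoking (A1)--(A3) as properties of the realized design, exactly as the paper's Appendix~B does; what your version buys in exchange is an unconditional statement and a cleaner treatment of the quadratic-form terms via their exact means and variances rather than the $\sup_k|\xi^x_k|=\LO(\log n)$ device.
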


\begin{theorem}\label{thm:ts_lambda}
Suppose $\lambda\approx n^{-\delta}, \lambda_x\approx n^{-\delta_x}$ for some $0<\delta,\delta_x<1$, $f, f^x\in H^m(\Omega)$ are bounded and $m\ge d$.
Then the average squared bias $\Bts(f,\lambda,\lambda_x)$ and average variance $\Vts(f,\lambda,\lambda_x)$ of the spatial+ estimate of $f$ satisfy
\begin{description}
\item[(a)] $\Bts(f,\lambda,\lambda_x)=n^{-1}\sum_i (\E(\fbhtsi)-f(\tb_i))^2=\LO(\lambda)+\LO(n^{-1}\lambda_x^{-d/2m}\log^2 n)$,
\item[(b)] $\Vts(f,\lambda,\lambda_x)=n^{-1}\sum_i\var (\fbhtsi)=\LO(n^{-1}\lambda^{-d/2m})$.
\end{description}
In particular, the optimal rates for $\lambda$ and $\lambda_x$ in terms of minimising $\amse(\fbhts)$ are given by $\lambda=\LO(n^{-2m/(2m+d)})$ and $\lambda_x=\LO(n^{-2m/(2m+d)}(\log n)^{4m/d})$, assuming the convergence rates for $\Bts(f,\lambda,\lambda)$ and $\Vts(f,\lambda,\lambda_x)$ are equal. When $\lambda$ and $\lambda_x$ converge at these rates, $\amse(\fbhts)=\LO(n^{-2m/(2m+d)})$.
\end{theorem}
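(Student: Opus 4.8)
The plan is to follow the proof of Theorem~\ref{thm:ps_lambda}, exploiting that $\SBl$ and $\SBlx$ are built from the \emph{same} penalty matrix $\boldsymbol{\Gamma}$ and so are simultaneously diagonalised in the Demmler--Reinsch basis, with eigenvalues $(1+n\lambda\gamma_{k})^{-1}$ and $(1+n\lambda_{x}\gamma_{k})^{-1}$ whose asymptotics ($n\gamma_{k}\approx k^{2m/d}$) are supplied by Lemma~1. Writing $\fb^{+}=\fb+\beta\fb^{x}$ and $\PB=\SBl+(\IB-\SBl)\SBlx$, so that $\fbhts=\SBl\yb-\betahts\PB\xb$, and substituting $\yb=\beta\xb+\fb+\eepsilon$ and $\xb=\fb^{x}+\eepsilon^{x}$, I would record the decomposition
\[
\fbhts-\fb=(\SBl-\IB)\fb^{+}-\beta(\SBl-\IB)(\IB-\SBlx)\fb^{x}+\SBl\eepsilon-\beta(\IB-\SBl)\SBlx\eepsilon^{x}-(\betahts-\beta)\PB\xb .
\]
This isolates five contributions: the pure smoothing bias in $\fb^{+}$; a second deterministic term produced by residualising $\xb$; the two mean--zero noise terms driven by $\eepsilon$ and by $\eepsilon^{x}$; and a coupling term carrying the estimation error of $\betahts$.

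For part~(b) I would take variances termwise (the deterministic terms drop out). The $\SBl\eepsilon$ term contributes $\sigma^{2}n^{-1}\mathrm{tr}(\SBl^{2})=\LO(n^{-1}\lambda^{-d/2m})$, exactly as in Theorem~\ref{thm:ps_lambda}(b). The $\eepsilon^{x}$ term contributes $\beta^{2}\sigma_{x}^{2}n^{-1}\mathrm{tr}\big((\IB-\SBl)^{2}\SBlx^{2}\big)$; bounding the $k$th eigenvalue by $\min\{1,(n\lambda\gamma_{k})^{2}\}$ and summing with Lemma~1 as in the proof of Theorem~\ref{thm:ps_lambda}(b) shows this trace is $\LO(\lambda^{-d/2m})$, so this term is again $\LO(n^{-1}\lambda^{-d/2m})$. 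The coupling term contributes only a lower-order term, since $\E[(\betahts-\beta)^{2}]$ is small by Theorem~\ref{thm:ts_est} and $n^{-1}\E\Vert\PB\xb\Vert^{2}=\LO(1)$ (as $\PB$ has operator norm at most one). Adding up gives $\Vts(f,\lambda,\lambda_x)=\LO(n^{-1}\lambda^{-d/2m})$.

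For part~(a) I would take expectations in the same decomposition. The two noise terms vanish. The term $(\SBl-\IB)\fb^{+}$ has average squared norm $\LO(\lambda)$ by the estimate $n^{-1}\Vert(\IB-\SBl)g\Vert^{2}=\LO(\lambda)$ for bounded $g\in H^{m}(\Omega)$ that is proved inside Theorem~\ref{thm:ps_lambda}(a), applied to $g=f^{+}=f+\beta f^{x}$. The term $\beta(\SBl-\IB)(\IB-\SBlx)\fb^{x}$ is also $\LO(\lambda)$ in average squared norm, because the $k$th eigenvalue of $(\SBl-\IB)^{2}$ is at most $n\lambda\gamma_{k}$ while the Demmler--Reinsch coefficients $c_{k}$ of $f^{x}$ have finite semi-norm $\sum_{k}\gamma_{k}c_{k}^{2}$ ($f^{x}\in H^{m}(\Omega)$). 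The remaining term $\E[(\betahts-\beta)\PB\xb]$ is the delicate one: writing $\PB\xb=\PB\fb^{x}+\PB\eepsilon^{x}$ and $\betahts-\beta=N/D$, using the concentration of $D=(\rb^{x})^{T}(\IB-\SBl)\rb^{x}$ around $n\sigma_{x}^{2}$ that underlies Theorem~\ref{thm:ts_est}(b), and feeding in Theorem~\ref{thm:ts_est}(a) for $\E[\betahts-\beta]$ together with the sharpest smoother bounds from Lemma~1, one bounds its average squared norm by $\LO(\lambda\lambda_{x})+\LO(n^{-1}\lambda_{x}^{-d/2m}\log^{2}n)$; the factor $\lambda_{x}^{-d/2m}$ comes from a trace of powers of $\SBlx$ exactly as in Theorem~\ref{thm:ps_lambda}(b), and the $\log^{2}n$ from the (logarithmically lossy) uniform bound on the entries of the first--stage smoother $\SBlx$ that Lemma~1 affords. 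Since $\LO(\lambda\lambda_{x})\subseteq\LO(\lambda)$, collecting the pieces gives (a).

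The stated optimal rates are then routine: minimising $\amse(\fbhts)=\Bts+\Vts=\LO(\lambda)+\LO(n^{-1}\lambda_{x}^{-d/2m}\log^{2}n)+\LO(n^{-1}\lambda^{-d/2m})$, the balance $\lambda\approx n^{-1}\lambda^{-d/2m}$ forces $\lambda=\LO(n^{-2m/(2m+d)})$, the balance of the $\lambda_{x}$--bias term against this common order forces $\lambda_{x}=\LO(n^{-2m/(2m+d)}(\log n)^{4m/d})$, and then $\amse(\fbhts)=\LO(n^{-2m/(2m+d)})$. The main obstacle is the coupling term $\E[(\betahts-\beta)\PB\xb]$: getting the interaction between the ratio estimator $\betahts$ and the noisy residual design $\rb^{x}$ (hence the smoother $\SBlx$) under control at the claimed $n^{-1}\lambda_{x}^{-d/2m}\log^{2}n$ level, rather than with a cruder bound, is where one must combine Theorem~\ref{thm:ts_est} with the finest consequences of Lemma~1.
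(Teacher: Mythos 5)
Your decomposition is algebraically the same as the paper's: recombining your first, second and fourth terms gives exactly $-(\IB-\SBl)\fb-\beta(\IB-\SBl)\SBlx\xb$, so after adding the coupling term you recover the identity $\fbhts-\fb=-(\IB-\SBl)\fb-\beta(\IB-\SBl)\SBlx\xb+\SBl\eepsilon-(\betahts-\beta)\PB\xb$ that the paper works with, and your final rate-balancing is the paper's. The genuine gap is in how you distribute randomness. The paper's whole analysis is conditional on $\xb$: assumptions (A1)--(A3) treat the coefficients $\xxi^x$ of $\eepsilon^x$ as a fixed realization, so $-\beta(\IB-\SBl)\SBlx\eepsilon^x$ is a \emph{bias} term, not a noise term. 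It is precisely this term, controlled by Lemma 5(e) via the sup bound (A3) (which is where the $\log^2 n$ comes from --- not from ``entries of $\SBlx$''), that produces the $\LO(n^{-1}\lambda_x^{-d/2m}\log^2 n)$ in part (a). You instead attribute that rate to the coupling term $(\betahts-\beta)\PB\xb$, which in the conditional framework contributes only $\bigl(\E(\betahts)-\beta\bigr)^2\cdot n^{-1}\Vert\PB\xb\Vert^2=\bigl(\lo(n^{-1})+\LO(\lambda\lambda_x)\bigr)\LO(1)$ by Theorem 3.3(a) and Lemma 5(f) --- no concentration of the denominator, no fourth moments. By unconditioning on $\eepsilon^x$ you make the coupling term genuinely hard (a ratio of quadratic forms in $\eepsilon^x$ multiplying a vector depending on $\eepsilon^x$) and then leave it unproven at exactly the point you yourself flag as the main obstacle.

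There is also a concrete error in your part (b): the bound $\min\{1,(\lambda\mu_k)^2\}$ on the $k$th eigenvalue of $(\IB-\SBl)^2$ equals $1$ for all $k\gtrsim\lambda^{-d/2m}$, of which there are of order $n$, so summing it gives $\LO(n)$, not $\LO(\lambda^{-d/2m})$. The trace $\mathrm{Tr}\bigl((\IB-\SBl)^2\SBlx^2\bigr)$ must be controlled through the $\SBlx^2$ factor, i.e.\ by $\mathrm{Tr}(\SBlx^2)=M+\LO(\lambda_x^{-d/2m})$ (Lemma 2(b)), which puts you back at a $\lambda_x$-rate rather than the $\LO(n^{-1}\lambda^{-d/2m})$ you claim --- and for general $\delta,\delta_x$ that would break the stated form of part (b). In the paper's conditional framework this whole issue disappears: the only stochastic term in the variance is $\SBl\eepsilon$, giving $n^{-1}\sigma^2\mathrm{Tr}(\SBl^2)=\LO(n^{-1}\lambda^{-d/2m})$, plus $\var(\betahts)\cdot n^{-1}\Vert\PB\xb\Vert^2=\LO(n^{-1})$ from the coupling term. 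I recommend you keep your decomposition but fix the probability model: condition on $\xb$, fold $-\beta(\IB-\SBl)\SBlx\eepsilon^x$ back into the bias, and invoke Lemma 5(e) and (f) for the two quadratic forms in $\xb$.
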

From this we obtain the following result which shows that, unlike $\betah$, the estimate $\betahts$ does not need undersmoothing to avoid excessive bias.
\begin{corollary}\label{cor:ts}
Suppose $\lambda\approx n^{-\delta}, \lambda_x\approx n^{-\delta_x}$ for some $0<\delta,\delta_x<1$, $f, f^x\in H^m(\Omega)$ are bounded and $m\ge d$. If $\lambda$ and $\lambda_x$ converge at the optimal rates in terms of minimising $\amse(\fbhts)$, then
$\lambda\lambda_x=\lo(n^{-1})$. In particular, the optimal rates for $\lambda$ and $\lambda_x$ ensure that the bias of the spatial+ estimate $\betahts$ converges faster than the standard deviation of the estimate. 
\end{corollary}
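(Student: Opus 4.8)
The plan is to obtain Corollary~\ref{cor:ts} as an immediate consequence of Theorems~\ref{thm:ts_est} and~\ref{thm:ts_lambda}: its content is just that the optimal-rate choices of $\lambda$ and $\lambda_x$ identified in Theorem~\ref{thm:ts_lambda} satisfy $\lambda\lambda_x=\lo(n^{-1})$, and that this is precisely the condition under which Theorem~\ref{thm:ts_est}(a)--(b) forces the bias of $\betahts$ to vanish relative to its standard deviation. No new asymptotic estimates are needed — it is a substitute-and-compare argument.

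First I would insert the optimal rates supplied by Theorem~\ref{thm:ts_lambda}, namely $\lambda=\LO(n^{-2m/(2m+d)})$ and $\lambda_x=\LO(n^{-2m/(2m+d)}(\log n)^{4m/d})$, into the product, obtaining
\[
\lambda\lambda_x=\LO\!\big(n^{-4m/(2m+d)}(\log n)^{4m/d}\big).
\]
Next I would verify that the right-hand side is $\lo(n^{-1})$. Writing $\tfrac{4m}{2m+d}=1+\tfrac{2m-d}{2m+d}$, the standing hypothesis $m\ge d$ together with $d\ge 1$ gives $2m\ge 2d>d$, so the exponent strictly exceeds $1$ by the fixed positive amount $\tfrac{2m-d}{2m+d}$. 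Hence $n\lambda\lambda_x=\LO\!\big(n^{-(2m-d)/(2m+d)}(\log n)^{4m/d}\big)\to 0$, since a fixed negative power of $n$ dominates any power of $\log n$.

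Finally, with $\lambda\lambda_x=\lo(n^{-1})$ established, Theorem~\ref{thm:ts_est}(a) yields $\E(\betahts)-\beta=\lo(n^{-1/2})+\LO((\lambda\lambda_x)^{1/2})=\lo(n^{-1/2})$, while Theorem~\ref{thm:ts_est}(b) gives $\sqrt{\var(\betahts)}\approx n^{-1/2}$ because $n\var(\betahts)\to\sigma^2/\sigma_x^2>0$. Dividing, the bias-to-standard-deviation ratio is $\lo(1)$, which is the claim. I do not expect any genuine obstacle here: the analytic work sits entirely in Theorems~\ref{thm:ts_est} and~\ref{thm:ts_lambda}. The one place to be a little careful is the exponent comparison — one must check that $2m>d$ really follows from $m\ge d$ (with $d\ge1$) so that the inequality $\tfrac{4m}{2m+d}>1$ is strict, and that the polylogarithmic factor $(\log n)^{4m/d}$ in the optimal $\lambda_x$, which is the feature distinguishing the $d\ge 1$ case treated here from the $d=1$ analysis of \citet{chen_shiau1991}, is harmless; both points are transparent once the exponents are written out as above.
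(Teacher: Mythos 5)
Your proposal is correct and follows essentially the same route as the paper: substitute the optimal rates from Theorem \ref{thm:ts_lambda} into $\lambda\lambda_x$, check that the resulting exponent $4m/(2m+d)=1+(2m-d)/(2m+d)$ strictly exceeds $1$ so the polylogarithmic factor is absorbed, and then invoke Theorem \ref{thm:ts_est}(a)--(b). The only cosmetic difference is that the paper handles the $(\log n)^{4m/d}$ factor via an $\epsilon$-slack bound $n^{-2m/(2m+d)}(\log n)^{4m/d}=\lo(n^{-2m/(2m+d)+\epsilon})$ with $\epsilon=\tfrac{2m-d}{2m+d}$, whereas you compare exponents directly; the substance is identical.
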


\section{Simulation}\label{sec:sim}
Partial thin plate spline models can be implemented in the $\texttt{R}$-package $\texttt{mgcv}$ using the computationally efficient reduced rank approximation known as thin plate regression splines. We use this implementation (with GCV as the smoothness selection criterion) to compare the results of models fitted to simulated data for which we know the true underlying covariate and spatial dependence.
\subsection{Data}\label{sec:sim_data}
We generate 100 independent replicates of covariate data $\xb=(x_1,\ldots,x_n)^T$ and response data $\yb=(y_1,\ldots,y_n)^T$, observed at $n=1000$ randomly selected locations in the spatial domain $[0,10]\times[0,10]$ in $\RR^2$ (using a $50\times50$ grid), as follows. Let $\zb=(z_1,\ldots,z_n)^T$ and $\zb'=(z'_1,\ldots,z'_n)^T$ denote observations at the selected locations of independently generated Gaussian spatial fields with an exponential and a spherical covariance structure, respectively. That is, each spatial field is sampled from a multivariate normal distribution centered at $\boldsymbol{0}$ with covariance structure defined by $C(h)=\exp(-(h/R)^p)$ with $R=5$ and $p=1$ for the exponential field and $C(h)=-1-1.5h/R+0.5(h/R)^3$ for $h\le R$, $C(h)=0$ for $h>R$ with $R=1$ for the spherical field (where $h$ denotes Euclidean distance). To ensure that the fields lie in the span of the spatial basis vectors used for the models in Section \ref{sec:sim_models}, each is replaced by the fitted values of a spatial thin plate regression spline fitted to them. 
We then let
\begin{eqnarray*}
\xb&=&0.5\zb +\eepsilon^x\textrm{ where }\eepsilon^x\sim N(\boldsymbol{0},\sigma_x^2\IB),\\
\yb&=&\beta\xb+\fb +\eepsilon^y\textrm{ where }\eepsilon^y\sim N(\boldsymbol{0},\sigma_y^2\IB),
\end{eqnarray*}
with true covariate effect $\beta=3$, true residual spatial effect $\fb=-\zb-\zb'$
and $\sigma_y=1$, $\sigma_x=0.1$. Thus, $\fb$ is directly correlated with the spatial pattern $0.5\zb$ of the covariate. This approach is similar to \citet{thaden2018}, except we have added the component $-\zb'$ so that $\fb$ could represent, for example, the combined effect of an unobserved covariate (with a similar spatial pattern to that of $\xb$) as well as an independent short-range spatial process.
Also, rather than treating the spatial fields as fixed, we generate new fields for each replicate in the simulation. Finally, we have chosen $\sigma_x$ relatively small (such that the model matrix for the spatial model has nearly collinear columns) and $\sigma_y$ relatively large (to encourage smoothing). This is the situation in which we would expect spatial confounding issues to arise which is also confirmed by the simulations in \citet{thaden2018}.

\subsection{Models}\label{sec:sim_models}
To each replicate of simulated response data $\yb$ and covariate data $\xb$, we fit the following models (with basis size $k=300$ for the thin plate regression splines). Models 2 - 5 are fitted twice: once with and once without smoothing penalties applied. In \texttt{mgcv}, smoothing penalties are applied by default but can be removed using the option \texttt{fx=TRUE}.
\begin{enumerate}
\item[1.] Null model: The model with no spatial effects given by
\begin{equation}\label{eqn:mod_null}
y_i=\beta x_i +\epsilon_i,\quad \eepsilon\sim N(\boldsymbol{0},\sigma^2\IB)
\end{equation}
where $\beta$ and $\sigma^2$ are estimated parameters.
\item[2.] Spatial model: The model given by
\begin{equation}\label{eqn:mod}
y_i=\beta x_i +f(\tb_i)+\epsilon_i,\quad \eepsilon\sim N(\boldsymbol{0},\sigma^2\IB)
\end{equation}
where $\beta$ and $\sigma^2$ are estimated parameters and $f$ a thin plate regression spline with $\tb_1,\ldots,\tb_n$ the observed data locations.
\item[3.] RSR model:
Let $\BBs$ be the matrix whose columns are the spatial basis vectors in the model matrix from (\ref{eqn:mod}) (i.e.\ the thin plate regression spline basis functions evaluated at the data locations) and let 
$\BBst=(\IB-\xb(\xb^T\xb)^{-1}\xb^T)\BBs$ 
be the projection of this onto the orthogonal compliment of $\xb$. The RSR model is given by
\begin{equation}\label{eqn:mod_rsr}
y_i=\beta x_i +\tilde{f}_i+\epsilon_i,\quad \eepsilon\sim N(\boldsymbol{0},\sigma^2\IB)
\end{equation}
where $\beta$ and $\sigma^2$ are estimated parameters and $\tilde{\fb}=(\tilde{f}_1,\ldots,\tilde{f}_n)^T$ is modelled the same way as the spatial effect in (\ref{eqn:mod}) but with $\BBs$ replaced by $\BBst$ in the model matrix.
\item[4.] gSEM: Let $\rb^x=(r_1^x,\ldots,r_n^x)^T$ and $\rb^y=(r_1^y,\ldots,r_n^y)^T$ denote the spatial residuals of $\xb$ and $\yb$, that is, $\rb^x=\xb-\hat{\fb}^x$ where $\hat{\fb}^x$ are the fitted values in the regression
\begin{equation}\label{eqn:mod_gsem1}
x_i=f^x(\tb_i) +\epsilon^x_i,\quad \eepsilon^x\sim N(\boldsymbol{0},\sigma_x^2\IB)
\end{equation}
where $\sigma_x^2$ is estimated and $f^x$ a thin plate regression spline, and $\rb^y$ is the same but replacing $\xb$ by $\yb$. The gSEM model is then the linear model given by
\begin{equation}\label{eqn:mod_gsem2}
r^y_i=\beta r^x_i +\epsilon_i,\quad \eepsilon\sim N(\boldsymbol{0},\sigma^2\IB),
\end{equation}
where $\beta$ and $\sigma^2$ are estimated.
\item[5.] Spatial+: Let $\rb^x$ denote the spatial residuals of $\xb$ as above. The spatial+ model is then
\begin{equation}\label{eqn:mod_plus}
y_i=\beta r^x_i +f^+(\tb_i)+\epsilon_i,\quad \eepsilon\sim N(\boldsymbol{0},\sigma^2\IB)
\end{equation}
where $\beta$ and $\sigma^2$ are estimated parameters and $f^+$ a thin plate regression spline.
\end{enumerate}

\subsection{Results}\label{sec:sim_results}
The results of the simulation are summarised in Figure \ref{fig:sim_results}. For each data replicate, the output is the estimated covariate effect and the mean squared error (MSE) of fitted values for each model fit. For ease of notation, in this section, we use $\betah$ to mean the estimated covariate effect in any of the fitted models (rather than the partial thin plate spline estimate alone). The MSE of fitted values is calculated as $\Vert\hat{\yb}-(\beta\xb+\fb)\Vert^2$ where for models 1, 2, 3 and 5, $\hat{\yb}$ is the fitted values in the regressions (\ref{eqn:mod_null}), (\ref{eqn:mod}), (\ref{eqn:mod_rsr}) and (\ref{eqn:mod_plus}), respectively, and for model 4, $\hat{\yb}=\fbh^y+\hat{\rb}^y$ where $\fbh^y$ and $\hat{\rb}^y$ are the fitted values in the regressions (\ref{eqn:mod_gsem1}) and (\ref{eqn:mod_gsem2}). Here $\beta=3$ and $\fb=-\zb-\zb'$ are the true values of the estimated effects with $\beta\xb+\fb$ the true mean of $\yb$.
\begin{figure}
\begin{tabular}{cc}
\includegraphics[width=0.48\textwidth]{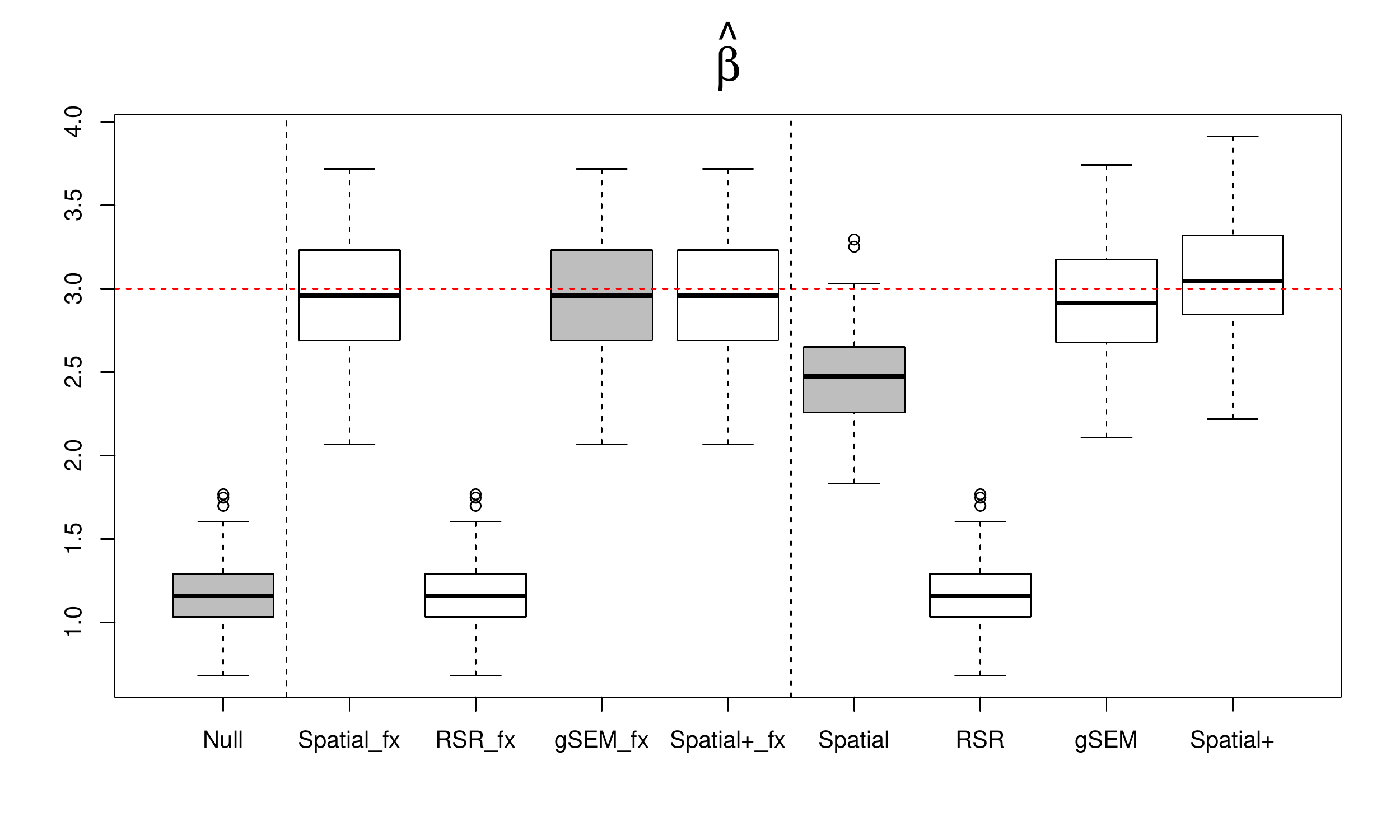}&
\includegraphics[width=0.48\textwidth]{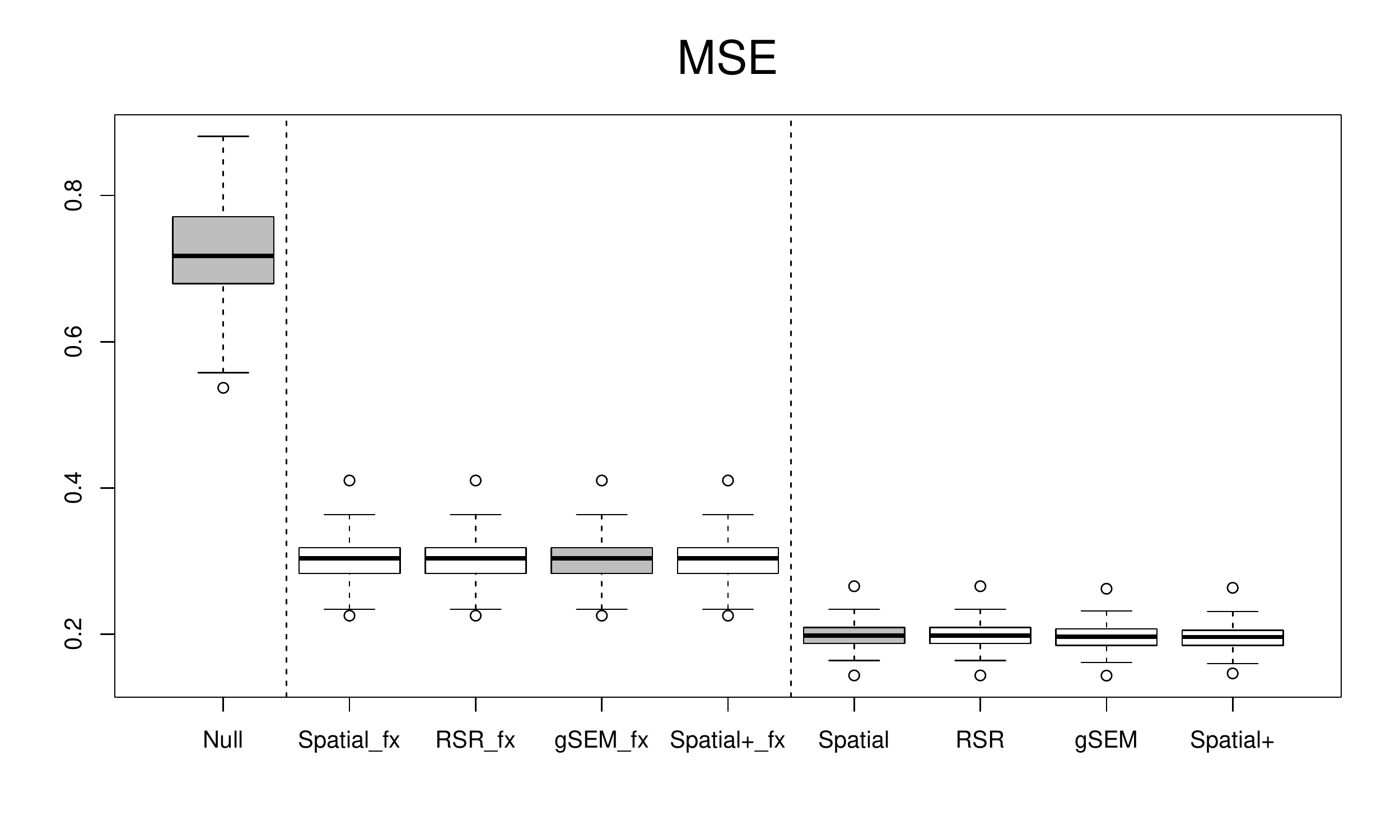}
\end{tabular}
\bigskip
\caption{Estimated covariate effect $\hat{\beta}$ (left) and MSE of fitted values (right) for each model fitted to 100 data replicates, where the true covariate effect is $\beta=3$. Subscript \texttt{fx} refers to a model in which no smoothing penalties were applied whereas no subscript models were smoothed. Results in grey are the three models that correspond to those used in Thaden and Kneib's simulation study.}\label{fig:sim_results}
\end{figure}

In the null model and the RSR model, the estimated covariate effect is the same and has a noticeably larger bias than the estimates in the other models. This is expected as for these models, $\betah=(\xb^T\xb)^{-1}\xb^T\yb$ is the ordinary least squares estimate, which, in addition to the true effect $\beta$, includes a contribution from the part of $\fb$ that is correlated with $\xb$. The fitted values in RSR, however, differ from those of the null model as the larger model matrix explains a part of $\yb$ that is treated as random noise in the null model. In fact, the column space of the model matrix is the same as that of the spatial model, and it is therefore not surprising that the fitted values in these two models are similar.

If no smoothing penalty is applied, models 2, 4 and 5 are essentially the same: they have the same fitted values and the same unbiased estimate for the covariate effect. This illustrates that spatial confounding bias is due to the combined effect of collinearity and smoothing, rather than collinearity alone. The spatial model is, in this case, an ordinary linear model where the columns in the model matrix are the covariate $\xb$ and the spatial basis vectors $\BBs$. This is the model from which the data is generated and, therefore, it is not surprising that the spatial model is able to recapture the true effects. The spatial+ model is a reparametrisation of the spatial model which preserves the overall column space, and simple linear model theory shows that the covariate effect estimate is preserved. Similarly, it is straightforward to show that the gSEM covariate effect estimate agrees with the spatial model in this case. (Derivations are included in Appendix E).

In the unsmoothed versions of models 2 - 5, the fitted values are all the same. When smoothing is applied, the MSE of fitted values reduces, indeed, this is the intended purpose of the smoothing penalty. Looking at the covariate effect estimate, in the RSR model, the (biased) ordinary least squares estimate is unaffected by smoothing. For the remaining three models, while the unsmoothed versions of the models give unbiased estimates of $\beta$, we see that smoothing introduces varying degrees of bias. In the spatial model, the bias is quite large illustrating our results in Section \ref{sec:asym_ps}. In contrast, while the covariate effect estimate is no longer the same in the gSEM and the spatial+ model, for both models, the bias is still negligible. This behaviour is therefore also consistent with what we would expect from our theoretical results.

Note that our analysis gives some intuition for why spatial+ works. If no smoothing penalty is applied, we saw that spatial+ has the same unbiased estimate for the covariate effect as the spatial model. In fact, any decomposition $\xb=\vb+\rb$ with $\vb$ in the column space of the spatial basis vectors $\BBs$ gives a reparametrisation (replacing $\xb$ by $\rb$) in which $\rb$ captures the original covariate effect. However, by choosing $\rb$ to be broadly orthogonal to the column space of $\BBs$ (as it is in the spatial+ model), the estimates of the covariate and spatial effects are broadly decorrelated. Thus, the covariate effect estimate is largely unaffected when smoothing is applied to the spatial term and thereby remains broadly unbiased.

\section{Application}\label{sec:app}
We illustrate how the spatial+ model can be used in practice by applying it to our forestry example. Details of the data can be found in \citet{augustin2009, eichhorn2016}. 
We consider here the data for spruce for a single observation year, namely, 2013 which has measurements from $n=186$ locations. We are interested in assessing the effect of the climate variable $\texttt{tminmay}$ (minimum temperature in May) on the response variable \texttt{ratio} (crown defoliation expressed as a proportion). We expect a high minimum temperature in May to be indicative of a warmer and drier year in general which, in turn, is likely to lead to higher levels of tree defoliation (measured later in summer). We also expect older trees to have significantly more defoliation than younger trees and have therefore included the variable \texttt{age} (age of trees) as an additional covariate in the models. Scatterplots of the data (not shown here) indicate the relationships between the covariates and the response variable are broadly as expected.

\subsection{Models}
A natural starting point is the null model
\begin{equation}\label{eqn:defol_null}
\texttt{ratio}_i=\alpha+\beta_1\texttt{age}_i+\beta_2\texttt{tminmay}_i+\epsilon_i, 
\end{equation}
where $\epsilon_i\sim N(0,\sigma^2)$ is iid noise and $\alpha,\beta_1,\beta_2$ and $\sigma$ are estimated parameters. However, numerous spatially dependent predictors have not been included in the model, for example, soil characteristics such as soil depth and base saturation; other climatic variables such as those related to radiation and precipitation; water budget of the trees etc. Therefore, we would expect residual spatial correlation in the response variable, and a more appropriate model may therefore be a spatial model, which we define as 
\[
\texttt{ratio}_i=\alpha+\beta_1\texttt{age}_i+\beta_2\texttt{tminmay}_i+f(\tb_i)+\epsilon_i,
\]
where $\epsilon_i\sim N(0,\sigma^2)$ is iid noise, $\alpha,\beta_1,\beta_2$ and $\sigma$ are estimated parameters and $f$ a thin plate regression spline (with basis size $k=100$) with $\tb_1,\ldots,\tb_n$ the observed data locations.

The covariate effects of interest are $\beta_1$ and $\beta_2$ but, as the results of Sections \ref{sec:asym_ps} and \ref{sec:sim} show, the estimates of these effects may be highly biased in both the null model and the spatial model. This disproportionate bias is avoided in the spatial+ model. Let $\rb^1=(r_1^{1},\ldots,r_n^{1})^T$ and $\rb^2=(r_1^{2},\ldots,r_n^{2})^T$ be the residuals when a thin plate regression spline (with basis size $k=100$) is fitted to $\texttt{age}$ and $\texttt{tminmay}$, respectively. The spatial+ model is then
\[
\texttt{ratio}_i=\alpha+\beta_1r^{1}_i+\beta_2r^{2}_i+f^+(\tb_i)+\epsilon_i, 
\]
where $\epsilon_i\sim N(0,\sigma^2)$ is iid noise, $\alpha,\beta_1,\beta_2$ and $\sigma$ are estimated parameters and $f^+$ a thin plate regression spline (with basis size $k=100$) with $\tb_1,\ldots,\tb_n$ the observed data locations. 

Finally, for comparison, we fit the gSEM as an alternative method for avoiding spatial confounding bias. Let $\rb^{y}=(r_1^{y},\ldots,r_n^{y})^T$ be the residuals when a thin plate regression spline (with basis size $k=100$) is fitted to the response variable $\texttt{ratio}$. The gSEM is then
\[
r^{y}_i=\beta_1r^{1}_i+\beta_2r^{2}_i+\epsilon_i, 
\]
where $\epsilon_i\sim N(0,\sigma^2)$ is iid noise and $\beta_1,\beta_2$ and $\sigma$ are estimated parameters.

\subsection{Results}
The results of fitting the above four models to the data are summarised in Table \ref{tab:defol_res}.
\begin{table}
\scriptsize
\setlength{\tabcolsep}{5pt}
\renewcommand{\arraystretch}{1.2}
\begin{tabular}{|l|rrr|rrr|rrr|rrr|}
  \hline& \multicolumn{3}{|c|}{\texttt{age}}& \multicolumn{3}{|c|}{\texttt{tminmay}}& \multicolumn{3}{|c|}{\texttt{s(x,y)}}&Dev&&\\ & $\betah$ & p-value &   & $\betah$ & p-value &   & edf & p-value &   & expl & $\hat{\sigma}$ & AIC \\ 
  \hline
Null & 0.00247 & $ <10^{-16}$ & *** & 0.0042 & 0.5049 &   &  &  &  & 0.490 & 0.00940 & -335 \\ 
  Spatial & 0.00237 & $ <10^{-16}$ &  *** & 0.0149 & 0.0307 &  * & 14.2 & 0.0243 &  * & 0.605 & 0.00789 & -355 \\ 
  Spatial+ & 0.00237 & $ <10^{-16}$ & *** & 0.0316 & 0.0073 & ** & 12.0 & 3.32e-05 &  *** & 0.598 & 0.00793 & -356 \\ 
  gSEM & 0.00232 & $ <10^{-16}$ &  *** & 0.0317 & 0.0058 &  ** &  &  &  & & & \\
   \hline
\end{tabular}
\caption{Forestry example: results of fitting models to the data. For each covariate: the estimate of the covariate effect $\beta$ and its p-value. $\texttt{s(x,y)}$ refers to the thin plate regression splines fitted to $f$ in the spatial model and $f^+$ in the spatial+ model. For each of these: the effective degrees of freedom (edf) and the p-value. For each significant p-value we write ‘***’ if it is $<0.001$, ‘**’ if $<0.01$ and ‘*’ if $<0.05$. Note that in the gSEM, deviance explained, estimated standard deviation and AIC do not compare directly with the other models as the response variable is different.}\label{tab:defol_res}
\end{table}

The spatial term in the spatial model is significant, which confirms there is residual spatial correlation in the data as expected. Furthermore, as the spatial term allows for more of the residual variation to be explained, the deviance explained is higher and the estimated standard deviation is lower than in the null model. As the AIC is also lower, we conclude that the spatial model is an overall better fitting model than the null model for this data. However, while the spatial model may be appropriate for overall predictions of the response variable, the estimate of any individual covariate effect may be biased. Using the spatial+ model, we expect to obtain similar fitted values as the spatial model but with covariate effect estimates that have only negligible bias. Indeed, in terms of overall fit, we see that the deviance explained, estimated standard deviation and AIC in the spatial+ model are similar to those of the spatial model. For completeness, we have also included the gSEM. Note, however, that in the gSEM, since the response variable in the regression differs from that of the other three models, the deviance explained, estimated standard deviation and AIC cannot be directly compared to the other models. 

The covariate \texttt{age} is highly significant and has a positive effect as expected. This covariate does not appear to be affected by spatial confounding as the estimated effect and its p-value are largely robust to the choice of model. This happens, for example, if a covariate is independent of the true underlying residual spatial effect. Also, in the case of \texttt{age}, not only is this a covariate that is not very well explained by spatial location (a spatial smooth fitted to this variable has deviance explained of only $13\%$), but its estimated spatial pattern looks dominated by linear spatial basis functions which are unpenalised in the spatial model. Therefore, penalisation of the spatial term $f$ in the spatial model is less likely to interfere with the covariate effect estimate (see \citet{rice1986} Proposition D).

In contrast, the estimated effect of the covariate $\texttt{tminmay}$ is not significant in the null model but is significant in the spatial model and is even more significant in the spatial+ model. Furthermore, while in all models the effect estimate is positive as expected (i.e.\ higher temperature in May leads to more defoliation later in summer), the size of the estimate more than triples when a spatial effect is added to the null model and the estimate in the spatial+ model is more than double of that in the spatial model. This shows that, if we were to use the spatial model for our inference, the effect of temperature on crown defoliation would likely be underestimated in both size and significance due to spatial confounding. Note that, as expected, the gSEM gives similar results to spatial+.

\section{Non-Gaussian response data}\label{sec:glm}
A distribution is in the exponential family of distributions if its probability density function $p$ can be written in the form
\[
p(y)=\exp \big[ \{y\theta-b(\theta)\}/a(\phi)+c(y,\phi)\big]
\]
where $\theta$ and $\phi$ are parameters of the distribution and $a,b$ and $c$ are functions. This family includes a large number of commonly used distributions in applied statistics, e.g.\ Gaussian, Poisson, exponential and binomial.
\subsection{Spatial model}\label{sec:glm_spatial}
Suppose we have response data $\yb=(y_1,\ldots,y_n)^T$ where each $y_i$ is assumed to be an observation of a random variable $Y_i$ whose distribution is from the exponential family with $\E(Y_i)=\mu_i$, and suppose $\xb=(x_1,\ldots,x_n)^T$ and $\tb_1,\ldots,\tb_n$ are covariate observations and spatial locations as before. A generalised version of (\ref{eqn:spatial}) can then be formulated as
\begin{equation}\label{eqn:spatial_glm}
g(\mu_i)=\beta x_i+f(\tb_i)
\end{equation}
where $\beta$ is an unknown parameter, $f$ a thin plate spline and $g:\RR\rightarrow\RR$ a link function (i.e.\ a monotonic smooth function which ensures $g(\mu_i)$ is in the domain of the response variable). The partial thin plate spline estimates of $\beta$ and $\fb=(f(\tb_1),\ldots,f(\tb_n))^T$ are found using a penalised iterative re-weighted least squares (PIRLS) algorithm. Initialising the algorithm with $\muh_i=y_i$ and $\etah_i=g(\muh_i)$, we define so-called pseudodata as $z_i=g'(\muh_i)(y_i-\muh_i)+\etah_i$ and iterative weights $w_i=1/(g'(\muh_i)^2V(\muh_i))$ where $V(\mu_i)=\var(Y_i)\phi=b_i''(\theta)a_i(\phi)/\phi$ is the variance function for the distribution of $Y_i$. Let $\betah$ and $\fbh$ be the minimisers of
\begin{equation}\label{eqn:glm_min}
\Vert\sqrt{\WB}(\zb-\beta\xb-\fb)\Vert^2+n\phi\lambda\fb^T\boldsymbol{\Gamma}\fb
\end{equation}
where $\WB=\textrm{diag}(w_1,\ldots,w_n)$ is the weights matrix, $\zb=(z_1,\ldots,z_n)^T$, and $\lambda>0$ and $\boldsymbol{\Gamma}$ are as in (\ref{eqn:ps_min}). Now redefining $\etah_i=\betah\xb+\fbh$ and $\muh_i=g^{-1}(\etah_i)$, the algorithm is reiterated until convergence and the partial thin plate spline estimates $\betahps$ and $\fbhps$ are then the minimisers of (\ref{eqn:glm_min}) in the final iteration. Note that, if no smoothing is applied, $\betahps$ and $\fbhps$ are the maximum likelihood estimates in a generalized linear model (GLM), which are asymptotically unbiased.

\subsection{Spatial+ model}\label{sec:spatial_plus_glm}
Starting with the model (\ref{eqn:spatial_glm}), let $\WB$ and $\zb$ denote the weights matrix and pseudodata at convergence of the PIRLS algorithm. We then define the corresponding spatial+ model as follows. Let $\fbh^x$ and $\rb^x=\xb-\fbh^x=(r_1^x,\ldots,r_n^x)^T$ denote the fitted values and residuals in the weighted thin plate regression (\ref{eqn:co_model}) with weights $\WB$, i.e.\ $\fbh^x$ is the minimiser of $\Vert\sqrt{\WB}(\xb-\fb^x)\Vert^2+n\lambda_x\fb^{xT}\boldsymbol{\Gamma}\fb^x$ with smoothing parameter $\lambda_x>0$ and $\boldsymbol{\Gamma}$ defined as before. The spatial+ model is then the partial thin plate spline model defined by
\begin{equation}\label{eqn:spatial_plus_glm}
g(\mu_i)=\beta r_i^x+f^+(\tb_i)
\end{equation}
where $\beta$ and $f^+$ are estimated as described in Section \ref{sec:glm_spatial}. From Section \ref{sec:glm_spatial} we see that the estimates $\betahps$ and $\fbhps$ in the spatial model (\ref{eqn:spatial_glm}) are obtained as the minimisers of (\ref{eqn:ps_min}) if we replace $\yb,\xb,\fb,\boldsymbol{\Gamma}$ and $\lambda$ by $\tilde{\yb}=\sqrt{\WB}\zb$, $\tilde{\xb}=\sqrt{\WB}\xb$, $\tilde{\fb}=\sqrt{\WB}\fb$, $\tilde{\boldsymbol{\Gamma}}=\sqrt{\WB}^{-1}\boldsymbol{\Gamma}\sqrt{\WB}^{-1}$ and $\tilde{\lambda}=\phi\lambda$. Thus, at convergence of the PIRLS algorithm, estimation corresponds to that of a Gaussian model for which the model matrix has columns $\tilde{\xb}$ and $\sqrt{\WB}\BBs$. From our comment at the end of Section \ref{sec:sim_results}, the decorrelation trick that we used in Section \ref{sec:ts} would therefore work if we replace $\tilde{\xb}$ by $\tilde{\rb}$, obtained from a decomposition $\tilde{\xb}=\tilde{\vb}+\tilde{\rb}$ in which $\tilde{\vb}$ is in the column space of $\sqrt{\WB}\BBs$ and $\tilde{\rb}$ is broadly orthogonal to the columns of $\sqrt{\WB}\BBs$. By the properties of weighted thin plate spline regressions, $\sqrt{\WB}\rb^x$ is broadly orthogonal to $\sqrt{\WB}\BBs$. Therefore, letting $\tilde{\vb}=\sqrt{\WB}\fbh^x$ and $\tilde{\rb}=\sqrt{\WB}\rb^x$, the required decorrelation is achieved. Finally, replacing $\tilde{\xb}$ by $\tilde{\rb}$ is equivalent to replacing $\xb$ by $\rb^x$ in the spatial model, leading to the model (\ref{eqn:spatial_plus_glm}).

\subsection{Simulations}
The models (\ref{eqn:spatial_glm}) and (\ref{eqn:spatial_plus_glm}) can once again be implemented using thin plate regression splines in \texttt{mgcv}.
To test the performance of the spatial+ model (\ref{eqn:spatial_plus_glm}), we repeat the simulations from Section \ref{sec:sim} for three different response distributions, namely, the Poisson distribution with canonical link function $g(\mu)=\log(\mu)$, the exponential distribution with (non-canonical) link function $g(\mu)=\log(\mu)$ and the binomial distribution with size parameter $n_{\textrm{bin}}=10$ and canonical link function $g(\mu)=\log(\mu/(n_{\textrm{bin}}-\mu))$. 

For each response distribution, we simulate 100 replicates of the response data $\yb=(y_1,\ldots,y_n)^T$ by independently sampling each $y_i$ from the given distribution with mean $\mu_i=g^{-1}(\eta_i)$ where $\eta_i=\beta x_i+f_i$ with $\xb=(x_1,\ldots,x_n)^T$ simulated as in Section \ref{sec:sim_data},  $\sigma_x=0.1$, and true effects $\beta=3$, $\fb=(f_1,\ldots,f_n)^T=-\zb-\zb'$ as before. The results of fitting the models (\ref{eqn:spatial_glm}) and (\ref{eqn:spatial_plus_glm}) are summarised in Figure \ref{fig:sim_results_glm}. For comparison we have also included the results of fitting the corresponding null model (i.e.\ the GLM defined by $g(\mu_i)=\beta x_i$) and the models (\ref{eqn:spatial_glm}) and (\ref{eqn:spatial_plus_glm}) with no smoothing penalty applied. Finally, we have fitted a generalised version of the RSR model (for details see Appendix E). Note that we have not included the gSEM here as it is not immediately clear how to generalise this model to non-Gaussian response distributions.

We see that for all three response distributions, the overall behaviour of the models is similar to what we saw in the Gaussian case. As before, the null model and RSR model both have highly biased covariate effect estimates, however, note that unlike the Gaussian case, the estimate is not the same in the two models. This is because, while in both models the estimate is given by $\betah=(\xb^T\WB\xb)^{-1}\xb^T\WB\zb$, the fitted values, and hence the weights and pseudodata at convergence, differ. Without smoothing, as expected, the spatial and spatial+ models give the same results, however, while the covariate effect estimate looks unbiased for the Poisson and exponential response distributions, it looks slightly biased for the binomial distribution, though not materially. This is not surprising as GLMs are only asymptotically unbiased and may have some bias in practice, particularly, when the number of estimated parameters is relatively large as it is in this case \citep{cox1968}. When smoothing is applied, MSE reduces as intended, but the covariate effect estimate in the spatial model becomes significantly biased while it remains broadly unbiased in the spatial+ model.

\begin{figure}
\begin{tabular}{cc}
\includegraphics[width=0.45\textwidth]{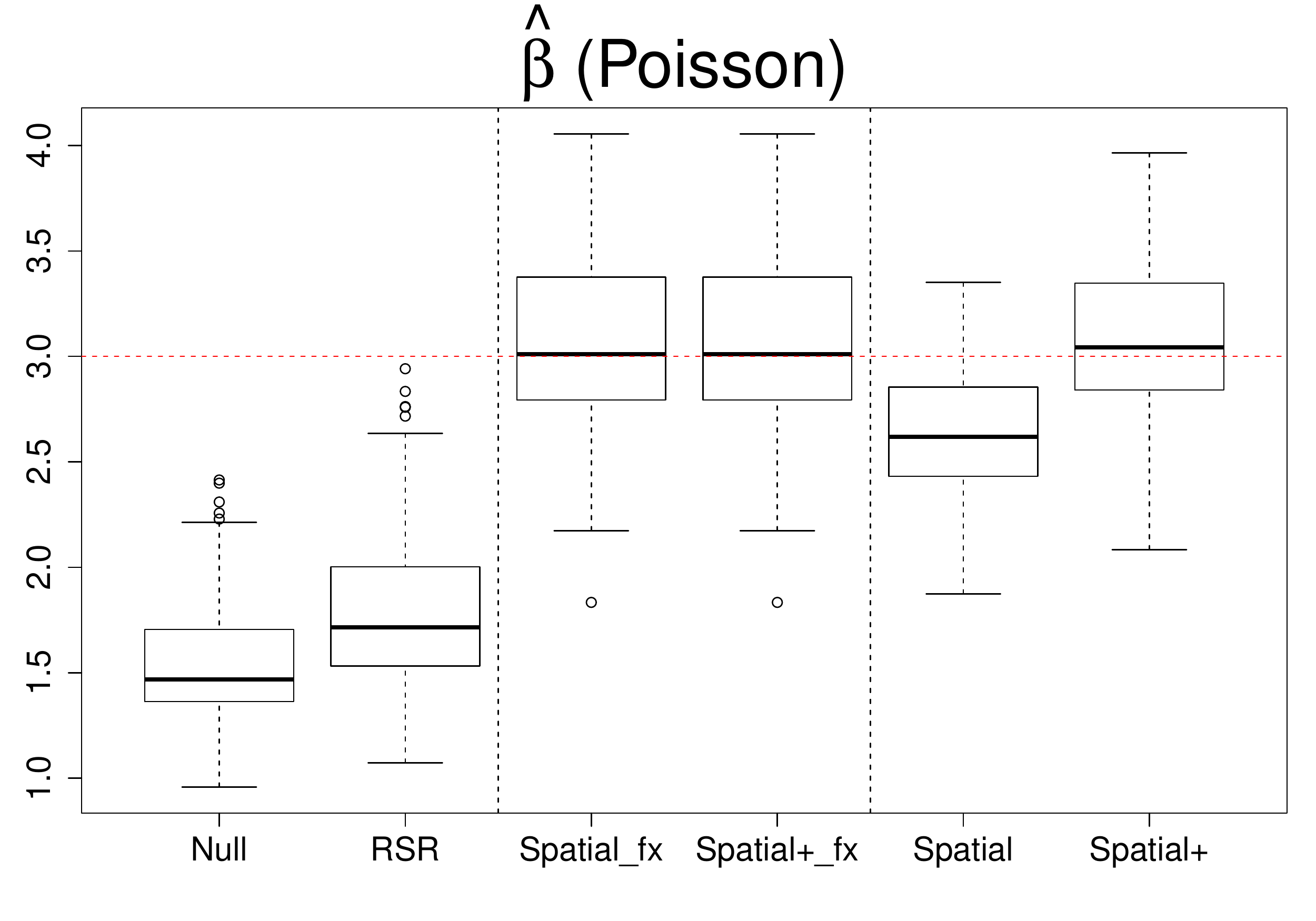}&
\includegraphics[width=0.45\textwidth]{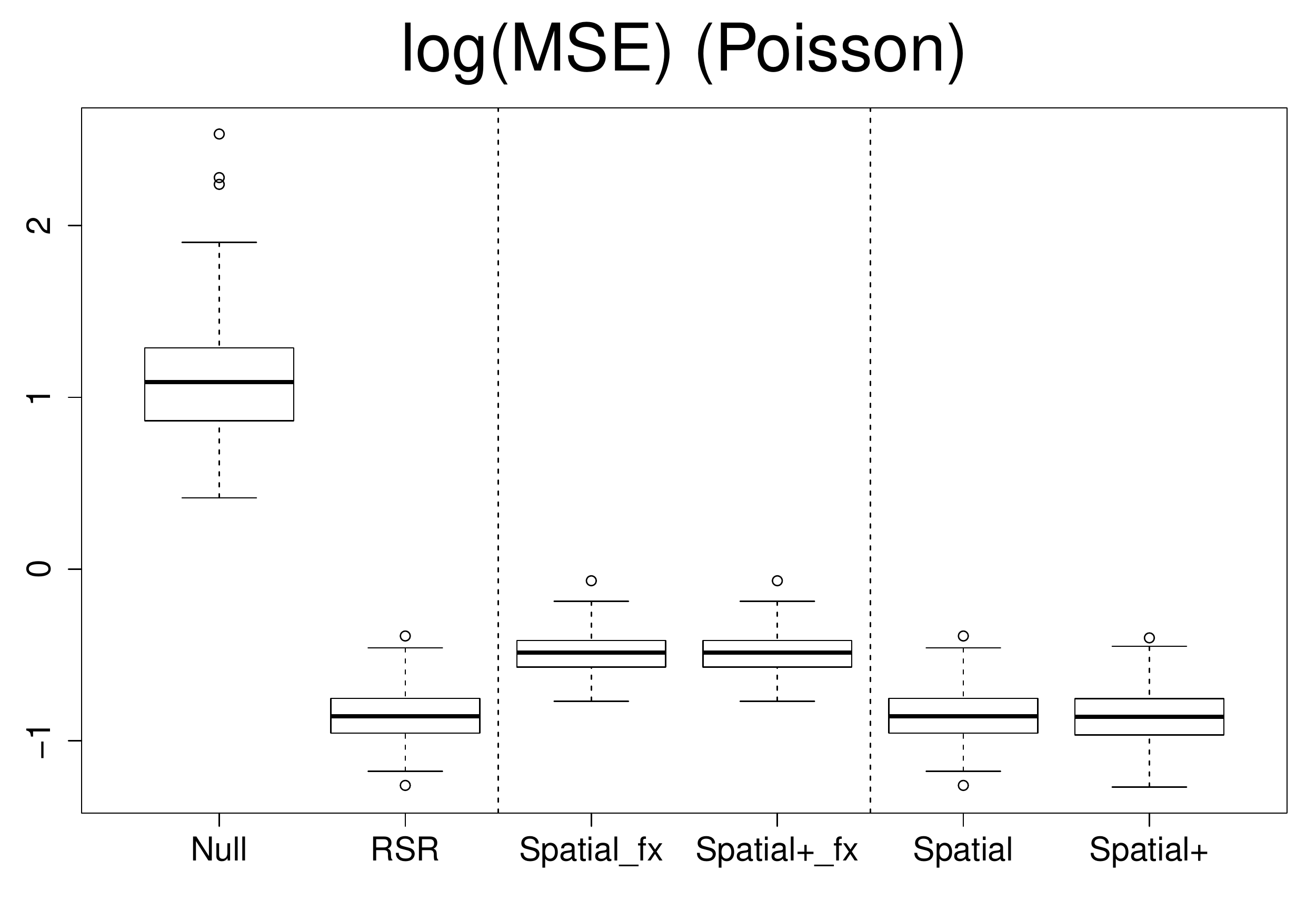}\\
\includegraphics[width=0.45\textwidth]{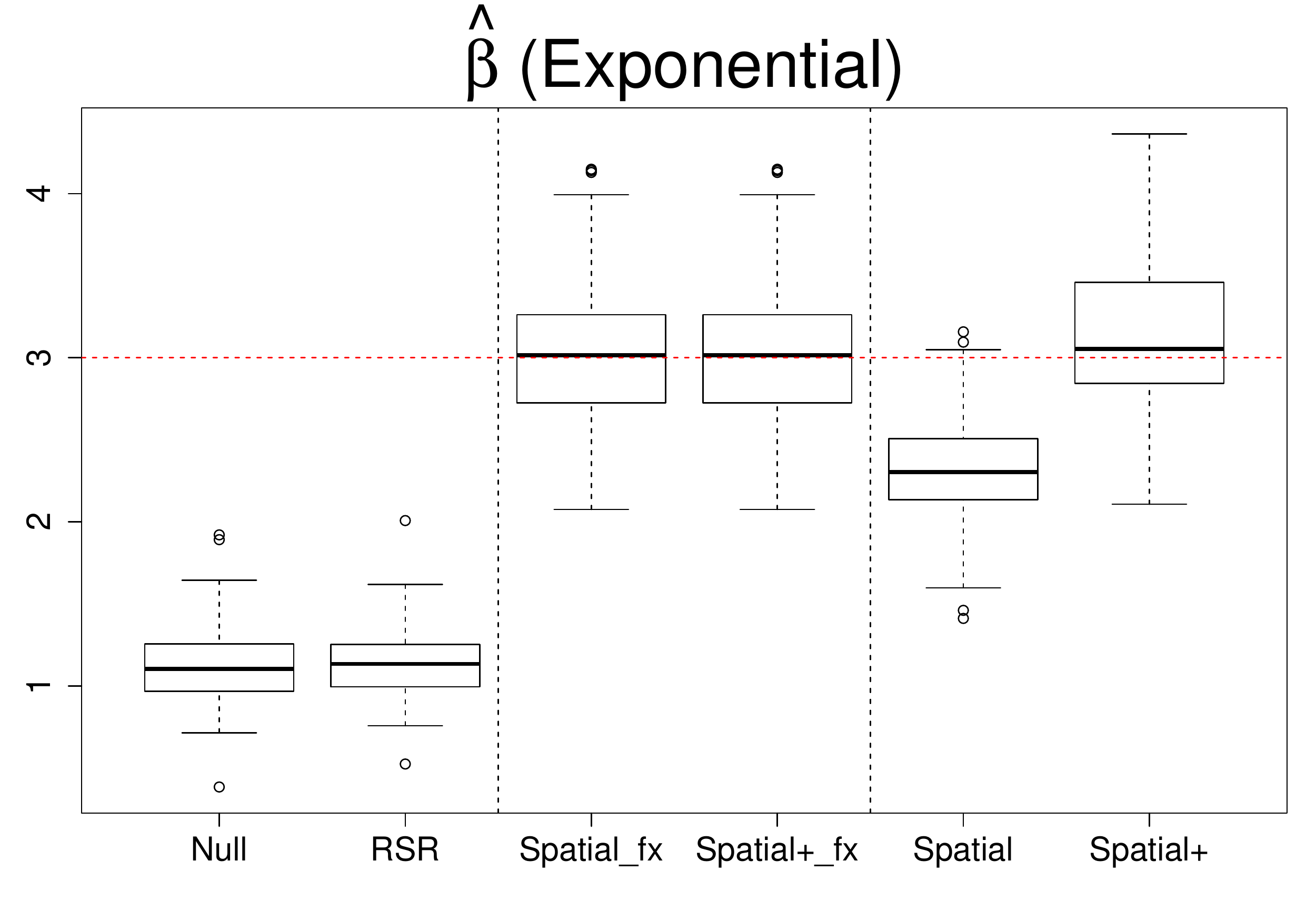}&
\includegraphics[width=0.45\textwidth]{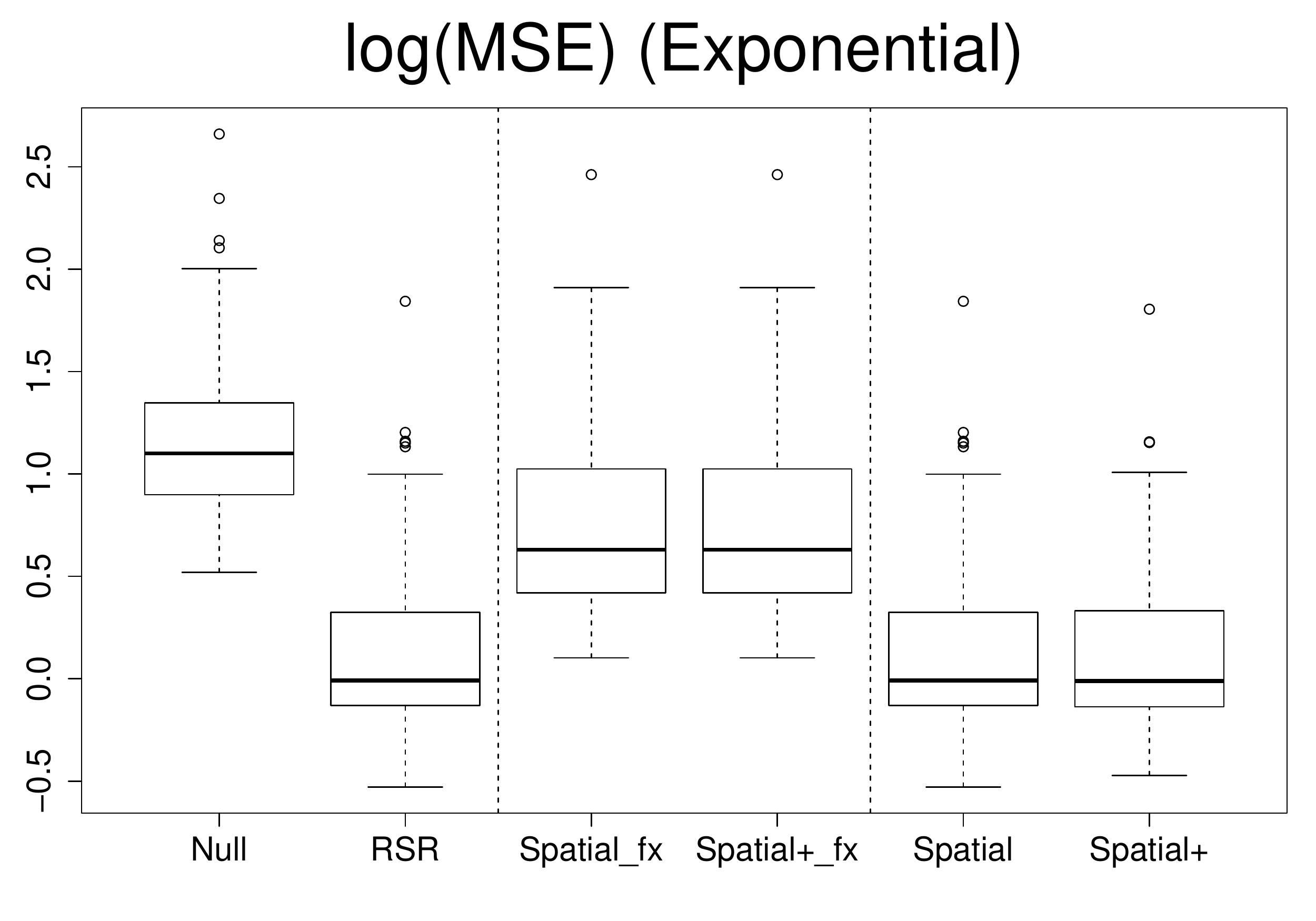}\\
\includegraphics[width=0.45\textwidth]{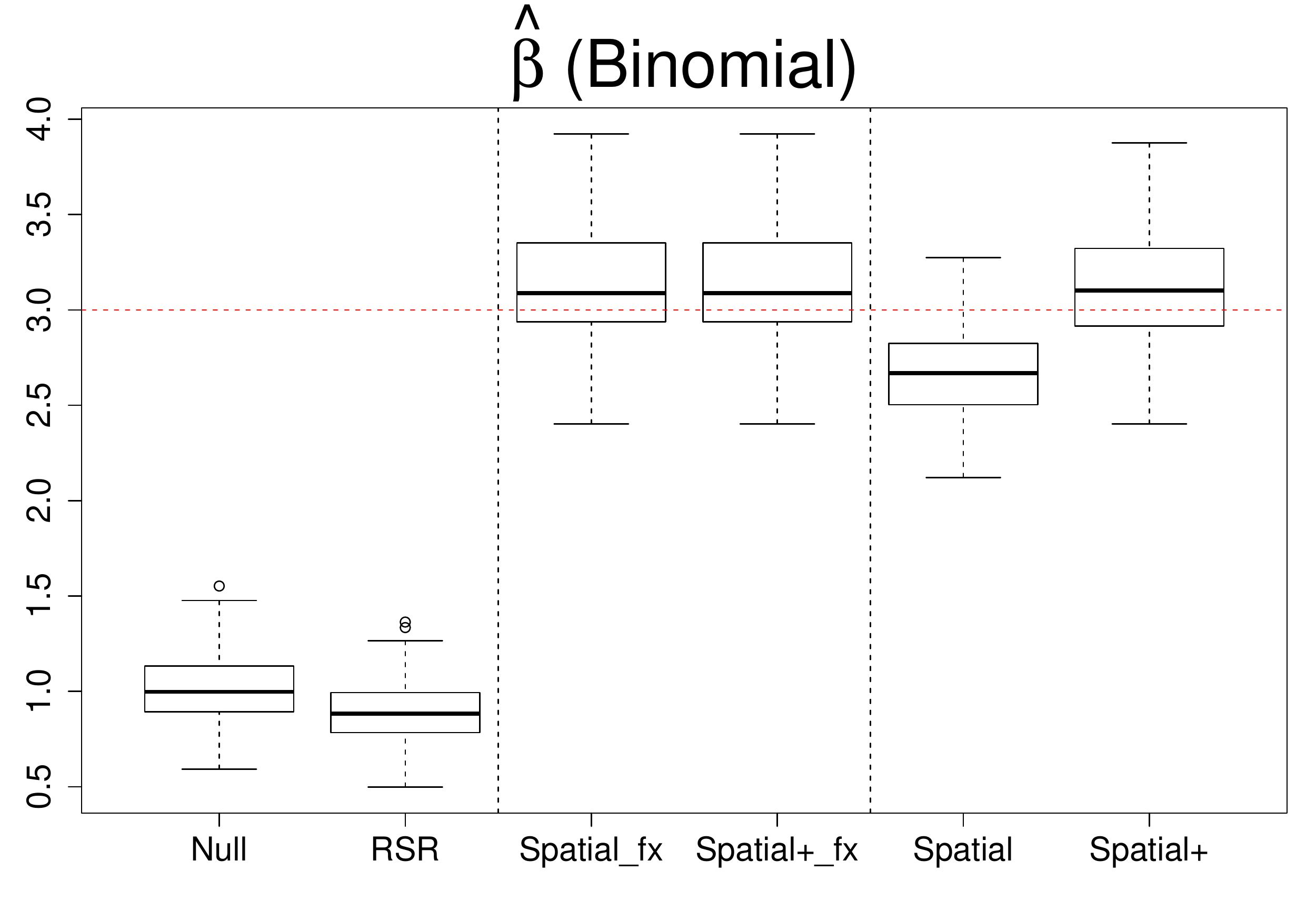}&
\includegraphics[width=0.45\textwidth]{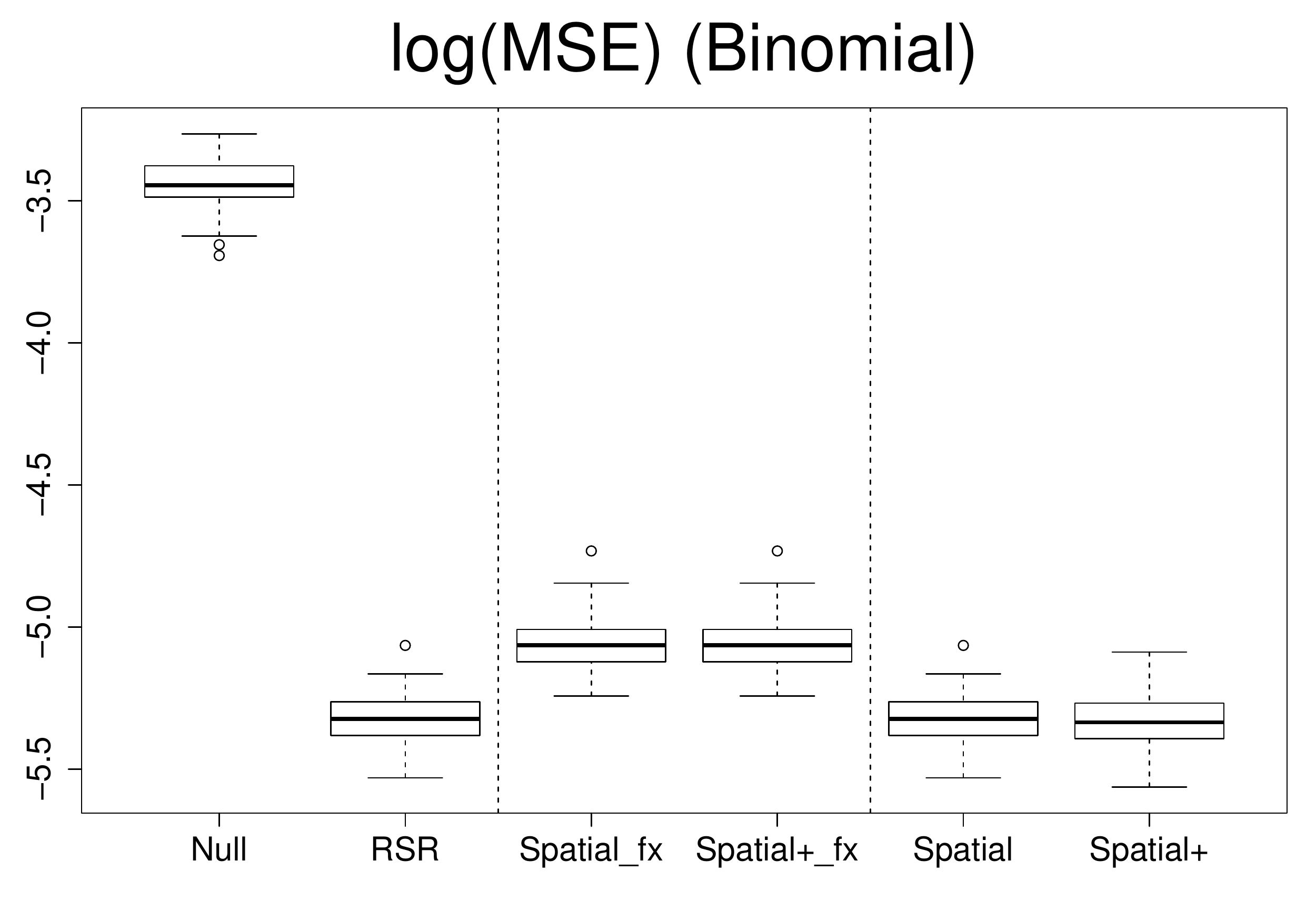}\\
\end{tabular}
\bigskip
\caption{For each of the distributions Poisson (top), exponential (middle), binomial (bottom): the estimated covariate effect $\hat{\beta}$ (left) and log(MSE) of fitted values (right) for each model fitted to 100 data replicates, where the true covariate effect is $\beta=3$. Subscript \texttt{fx} refers to a model in which no smoothing penalties were applied.}\label{fig:sim_results_glm}
\end{figure}

\section{Discussion}
We have shown that the proposed spatial+ model can be used to avoid unreliable covariate effect estimates in spatial regression with clear advantages over existing methods.
Our analysis also gives a clearer understanding of why spatial confounding happens. Spatial models, whether formulated in terms of spatially induced prior distributions or smoothing penalties, usually apply some form of spatial smoothing to reflect spatial correlation in the data and avoid overfitting. However, from the model formulation (\ref{eqn:spatial}), we see that it is exactly this smoothing that causes spatial confounding bias. If spatial location is highly explanatory for a covariate in the model, the model matrix has nearly collinear columns, and therefore the smoothing penalty can heavily influence the way in which the "total covariate spatial effect" (i.e.\ the observed effect of the covariate spatial pattern, including any unmeasured effects) is split between the covariate term and the spatial term. As we have seen, this can lead to significantly biased results.

The excessive smoothing-induced bias is avoided in both spatial+ and the gSEM. If no smoothing penalty is applied, both models give the same unbiased covariate effect estimates as the unsmoothed spatial model. 
Spatial+ reparametrises the spatial model so that, rather than splitting the total covariate spatial effect into two separate terms, it is fully contained in the term $f^+$. This makes fixed effect estimates broadly independent of the spatial effects, in particular, they remain largely unbiased under spatial smoothing. The idea of decorrelating covariate and spatial terms is also used in RSR. But in RSR, as it is achieved by restricting the spatial effects, it is the covariate effect estimates that contain the total covariate spatial effects, leading to bias by construction. 
In the gSEM, the elimination of all spatial information means that fixed effect estimates are once again decorrelated from the spatial effects and thereby protected from spatial smoothing. The resulting model of residuals only, however, seems less intuitive than spatial+ and, the change in response variable means that standard model selection criteria cannot be used for comparisons with the other models. A major advantage of spatial+ is also that the method generalises easily to models with non-Gaussian response distributions and our simulations illustrate that the method still works well here.

Our above discussion shows that the decorrelation of effect estimates is the underlying reason why the spatial+ approach works. As mentioned in Section \ref{sec:intro}, the modification of the model matrix that achieves this is easily transferable to other spatial model formulations, and we would therefore expect the method to work well in general. However, as our theoretical derivations are specific to thin plate spline estimates, similar derivations or simulations could be done to confirm our results in other settings. One limitation to the spatial+ approach is that the covariate effects in the model must be linear. This assumption is needed for the spatial residuals to capture the true covariate effects.
The spatial model (\ref{eqn:spatial}) is easily extended, using the generalized additive model (GAM) framework, to include non-linear covariate terms in the form of smooths (i.e.\ unknown functions of the covariates estimated from the data). It would be interesting to see if any of the ideas of spatial+, as well as our increased understanding of spatial confounding, can be used to develop methods for avoiding spatial confounding in this context.

Finally, applying spatial+ to the forestry example, we see that the effect of temperature on crown defoliation appears to be positive and significant as expected, and that this effect would likely be underestimated in both size and significance in the spatial model (and even more so in the null model). The other covariate, age of trees, in this example also illustrates that, if a covariate is not spatially confounded, this can be confirmed by showing that its effect estimate in the spatial and spatial+ models agree. It is possible that this idea could be used to develop a diagnostic or test that practitioners could use to identify spatial confounding in applications.

\section*{Acknowledgements}
Emiko Dupont is supported by a scholarship from the EPSRC Centre for Doctoral Training in Statistical Applied Mathematics at Bath (SAMBa), under the project EP/L015684/1. We thank The Forest Institute Baden-W\"urttemberg (Germany) for making the 2013 Terrestrial Crown
Condition Inventory (TCCI) forest health monitoring survey data available.

\appendix
\numberwithin{equation}{section}
\section*{Appendices}
Appendices referenced in Sections \ref{sec:intro}, \ref{sec:asym}, \ref{sec:sim} and \ref{sec:glm} of the paper 
are included below.

\section{Technical lemmas}\label{app:asym_pre}
In this appendix we set out the technical lemmas that we use for the derivations of the main results in Sections $\secasymps$ and $\secasymts$ of the paper which generalise the results of \citet{rice1986,chen_shiau1991} from $d=1$ to dimensions $d\ge 1$. Key to this generalisation is the following result by \citet{utreras1988} on the asymptotics of thin plate splines. 

\begin{lemma}\label{lem:eigen}
Suppose $\Omega$ has Lipschitz boundary and satisfies a uniform cone condition (as defined in \cite{utreras1988}). Assume that the points $\{\tb_1,\ldots,\tb_n\}\subset\Omega$ are regularly distributed in the sense that there exists a constant $B>0$ such that
\[
\frac{h_{\min}}{h_{\max}}\le B
\]
where $h_{\max}=\sup_{\tb\in\Omega}\inf_{i}\vert \tb-\tb_i\vert$ and $h_{\min}=\min_{i\ne j}\vert \tb_i-\tb_j\vert$. 
Let $\mu_1\le\cdots\le\mu_n$ denote the eigenvalues of the matrix $n\boldsymbol{\Gamma}$ and assume $m>d/2$. Then 
\[
\mu_1=\cdots=\mu_M=0
\]
and there exist constants $C_1,C_2>0$ such that
\[
C_1k^{2m/d}\le\mu_k\le C_2 k^{2m/d} \quad \text{for }M+1\le k\le n.
\]
\end{lemma}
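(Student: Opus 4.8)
The plan is to recognise the statement as, in essence, the main spectral estimate of \citet{utreras1988}: I would recast the eigenvalue problem for $n\boldsymbol{\Gamma}$ as a discretised continuous eigenvalue problem and then read off the two-sided power-law bounds in the form he proves them.

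First I would settle the null space. Recall that $\fb^T\boldsymbol{\Gamma}\fb$ equals the thin plate penalty $J_m(s_\fb)$ of the natural thin plate spline $s_\fb$ interpolating the values $\fb$ at $\tb_1,\ldots,\tb_n$, and that this equals the minimum of $J_m(g)$ over all admissible $g$ with $g(\tb_i)=f_i$. Since $J_m$ vanishes exactly on polynomials of total degree $<m$, which form a space of dimension $M=\binom{m+d-1}{d}$, and since the regular distribution hypothesis forces the points, for all $n$ large enough (the only relevant regime), to be unisolvent for that polynomial space, the identity $\fb^T\boldsymbol{\Gamma}\fb=0$ holds precisely on the $M$-dimensional family of $\fb$ obtained by sampling such polynomials. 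Hence $n\boldsymbol{\Gamma}$ has kernel of dimension exactly $M$, i.e.\ $\mu_1=\cdots=\mu_M=0$.

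For the nonzero eigenvalues I would use the Courant--Fischer min--max characterisation,
\begin{align*}
\mu_k&=\min_{\substack{V\subset\RR^n\\ \dim V=k}}\ \max_{\fb\in V\setminus\{0\}}\ \frac{\fb^T(n\boldsymbol{\Gamma})\fb}{\fb^T\fb}\\
&=\min_{\dim V=k}\ \max_{\fb\in V\setminus\{0\}}\ \frac{J_m(s_\fb)}{\tfrac1n\sum_{i=1}^n f_i^2}.
\end{align*}
The right-hand side is a discretised Rayleigh quotient for the continuous eigenvalue problem $J_m(g)=\nu\,\|g\|_{L^2(\Omega)}^2$, i.e.\ for the polyharmonic operator $(-\Delta)^m$ on $\Omega$ with its natural boundary conditions, whose eigenvalues satisfy $\nu_k\approx k^{2m/d}$ by Weyl's law; the Lipschitz boundary and uniform cone condition are exactly what make the required Sobolev embeddings and this spectral asymptotic valid on $\Omega$. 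The target bounds $C_1k^{2m/d}\le\mu_k\le C_2 k^{2m/d}$ then reduce to showing the discrete quotient is comparable to the continuous one, with constants independent of $n$, uniformly over $M+1\le k\le n$.

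The hard part is precisely this comparison: obtaining two-sided control of the quadrature functional $\tfrac1n\sum_i g(\tb_i)^2$ by $\|g\|_{L^2(\Omega)}^2$, modulo the penalty $J_m(g)$, uniformly over regularly distributed configurations. For the upper bound on $\mu_k$ I would feed good trial spaces into the min--max --- quasi-interpolants of the first $k$ continuous eigenfunctions, or low-order splines on a mesh of size $h_{\max}\approx n^{-1/d}$ --- and bound the quadrature error using $h_{\min}/h_{\max}\le B$. For the lower bound I would need a discrete Sobolev/Bernstein-type inequality, roughly $\tfrac1n\sum_i g(\tb_i)^2\le c\big(\|g\|_{L^2(\Omega)}^2+h_{\max}^{2m}J_m(g)\big)$, together with an inverse inequality on the spline space to absorb the high-frequency part. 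Assembling these uniform estimates is the technical core of \citet{utreras1988}, so the cleanest route is to match normalisations --- our $n\boldsymbol{\Gamma}$ is $n$ times his penalty matrix --- and invoke his theorem directly rather than reprove it.
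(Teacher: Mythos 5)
Your proposal is correct and lands exactly where the paper does: the paper's own ``proof'' of this lemma is simply a citation to Theorems 5.1(a) and 5.3 of Utreras (1988), and you correctly identify the statement as his spectral estimate (null space spanned by sampled degree-$<m$ polynomials of dimension $M$, plus Weyl-type growth $\mu_k\approx k^{2m/d}$ via the min--max comparison with the continuous polyharmonic problem) and conclude by invoking his result directly. Your sketch of the internal structure is accurate but, as you note yourself, the clean route is the citation, which is all the paper provides.
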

\begin{proof}
See the proof of Theorem 5.1 (a) and Theorem 5.3 of \citet{utreras1988}.
\end{proof}
Lemma \ref{lem:eigen} provides us with a convenient basis in which the smoother matrix $\SBl=(\IB+n\lambda\boldsymbol{\Gamma})^{-1}$ 
is diagonalised and, moreover, describes the asymptotic behaviour of its eigenvalues as the number of data points $n\rightarrow \infty$.
More specifically, if $\PPhi$ is the matrix whose columns are $\frac{1}{\sqrt{n}}\pphi_1,\ldots,\frac{1}{\sqrt{n}}\pphi_n$ where $\pphi_k$ is an eigenvector of $n\Gamma$ corresponding to the eigenvalue $\mu_k$, then (with appropriate scaling of the eigenvectors) $\PPhi$ has orthonormal columns and
\begin{eqnarray*}
\PPhi^T\SBl\PPhi&=&\text{diag}\big(1/(1+\lambda\mu_1),\ldots,1/(1+\lambda\mu_n)\big),\\
\PPhi^T(\IB-\SBl)\PPhi&=&\text{diag}\big((\lambda\mu_1)/(1+\lambda\mu_1),\ldots,(\lambda\mu_n)/(1+\lambda\mu_n)\big).
\end{eqnarray*}
This representation allows us to explicitly evaluate the estimates in the models of dimension $d\ge 1$ which, in turn, enables us to obtain asymptotic results in a similar way to \citet{rice1986, chen_shiau1991}. 

For the rest of these appendices, we assume that $m> d/2$ and that the domain $\Omega$ and the data points $\tb_1,\ldots,\tb_n$ satisfy the conditions of Lemma \ref{lem:eigen}. We will also use the notation $a(n)\approx b(n)$ to mean that $a(n)/b(n)$ is bounded away from zero and infinity as $n\rightarrow \infty$.

Lemmas \ref{lem:trace} and \ref{lem:sqbias} link the asymptotic behaviour of the smoother matrix $\SBl$ to the convergence rate of the smoothing parameter $\lambda$. 
Lemma \ref{lem:trace} generalises Lemma 2 of \citet{chen_shiau1991} to dimensions $d\ge 1$, and is proved using the asymptotic properties of the eigenvalues given in Lemma \ref{lem:eigen}.
The result in Lemma \ref{lem:sqbias} is proved by \citet{utreras1988}.
Lemmas \ref{lem:tech_rice} and \ref{lem:tech} prove a number of asymptotic results that are convenient for later proofs. Lemma \ref{lem:tech_rice} shows how the results used by \citet{rice1986} for the analysis in dimension $d=1$ generalise to dimensions $d\ge 1$, while Lemma \ref{lem:tech} generalises Lemma 3 of \citet{chen_shiau1991} to dimensions $d\ge 1$. Proofs of Lemmas \ref{lem:trace}, \ref{lem:tech_rice} and \ref{lem:tech} are given in Appendix B.

\begin{lemma}\label{lem:trace}
Suppose $\lambda\approx n^{-\delta}$ for some $0<\delta<1$. Then
\begin{description}
\item[(a)] $\text{Tr}(\SBl)= \sum_{k=1}^n (1+\lambda\mu_k)^{-1}=M+\LO(\lambda^{-d/2m})$,
\item[(b)] $\text{Tr}(\SBl^2)=\sum_{k=1}^n (1+\lambda\mu_k)^{-2}=M+\LO(\lambda^{-d/2m})$.
\end{description}
In particular, if $m\ge d$, then both of these sums are of the form $\LO(n^{1/2-\tau})$ where $0<\tau<1/2$ depends only on $\delta$.
\end{lemma}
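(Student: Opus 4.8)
The plan is to exploit the diagonalisation of $\SBl$ in the orthonormal basis $\PPhi$ recorded just above: there $\SBl$ has eigenvalues $(1+\lambda\mu_k)^{-1}$, so the first equalities in (a) and (b) are immediate and the problem reduces to estimating $\sum_{k=1}^n(1+\lambda\mu_k)^{-1}$ and $\sum_{k=1}^n(1+\lambda\mu_k)^{-2}$. By Lemma \ref{lem:eigen}, $\mu_1=\cdots=\mu_M=0$, so the first $M$ terms of each sum equal $1$ and contribute exactly $M$; it then remains to show that the tails $\sum_{k=M+1}^n(1+\lambda\mu_k)^{-j}$, $j=1,2$, are $\LO(\lambda^{-d/2m})$.

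For the tail with $j=1$ I would substitute the lower bound $\mu_k\ge C_1 k^{2m/d}$ from Lemma \ref{lem:eigen}, extend the sum over all $k\ge 1$ (legitimate since every term is positive), and compare the series $\sum_{k\ge 1}(1+\lambda C_1 k^{2m/d})^{-1}$ with $\int_0^\infty (1+\lambda C_1 x^{2m/d})^{-1}\,dx$, the summand being decreasing in $k$. The substitution $u=(\lambda C_1)^{d/2m}x$ then extracts a factor $(\lambda C_1)^{-d/2m}$ and leaves $\int_0^\infty (1+u^{2m/d})^{-1}\,du$, which is a finite constant precisely because $2m/d>1$, i.e.\ because of the standing hypothesis $m>d/2$. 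This yields (a). For (b) I would simply bound $(1+\lambda\mu_k)^{-2}\le(1+\lambda\mu_k)^{-1}$, so the $j=2$ tail is dominated by the $j=1$ tail already estimated, and (b) follows.

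For the final assertion I would put $\lambda\approx n^{-\delta}$, giving $\lambda^{-d/2m}\approx n^{\delta d/2m}$; since $m\ge d$ and $0<\delta<1$ the exponent satisfies $0<\delta d/2m\le\delta/2<1/2$, so with $\tau=1/2-\delta d/2m\in(0,1/2)$ (depending only on $\delta$, since $m$ and $d$ are fixed throughout) we get $\lambda^{-d/2m}=\LO(n^{1/2-\tau})$, and the additive constant $M$ is absorbed into this bound.

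I do not expect a genuine obstacle here. The only points that need care are that the eigenvalue estimates of Lemma \ref{lem:eigen} hold only for $k\ge M+1$, which is why the first $M$ terms must be split off and accounted for by the additive $M$, and that the sum-to-integral comparison produces a \emph{convergent} integral exactly under the assumption $m>d/2$ maintained throughout the appendices.
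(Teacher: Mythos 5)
Your proof is correct and follows essentially the same route as the paper: split off the $M$ zero eigenvalues, bound the tail using the eigenvalue growth $\mu_k\ge C_1k^{2m/d}$ from Lemma \ref{lem:eigen} together with a sum-to-integral comparison that converges precisely because $2m/d>1$, and deduce (b) from (a) via $(1+\lambda\mu_k)^{-2}\le(1+\lambda\mu_k)^{-1}$. The only cosmetic difference is that the paper splits the tail at $k=\lambda^{-d/2m}$ (trivially bounding the first block by its length) rather than performing your single change of variables over the whole range; both give the same $\LO(\lambda^{-d/2m})$ bound.
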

\begin{proof}
See Appendix B.
\end{proof}

\begin{lemma}\label{lem:sqbias}
For any $g\in H^m(\Omega)$, let $\gb=(g(\tb_1),\ldots,g(\tb_n))^T$. The averaged squared bias $\Btp(g,\lambda)$ of the thin plate spline $\SBl\gb$ (i.e.\ the fitted values in a model of the form (\eqnspatial) in our paper with $\beta=0$) is given by
\[
\Btp(g,\lambda)=\frac{1}{n}\gb^T(\IB-\SBl)^2\gb=\LO(\lambda).
\]
\end{lemma}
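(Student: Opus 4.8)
Since \citet{utreras1988} establishes an $\LO(\lambda)$ bound on the bias of multivariate smoothing splines for regression functions in the relevant Sobolev space, the shortest route is to invoke that result directly; for completeness I would also record the following self-contained argument. First note that the displayed identity $\Btp(g,\lambda)=\frac1n\gb^T(\IB-\SBl)^2\gb$ is essentially definitional: in a model of the form (\eqnspatial) with $\beta=0$ and $\E(\yb)=\gb$ the thin plate spline $\SBl\yb$ has mean $\SBl\gb$, so its bias vector is $\SBl\gb-\gb=-(\IB-\SBl)\gb$ and the averaged squared bias is $\frac1n\Vert(\IB-\SBl)\gb\Vert^2$.

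Next I would diagonalise $\SBl$ using the orthogonal matrix $\PPhi$ of Lemma \ref{lem:eigen}. Putting $\cb=\PPhi^T\gb$, one has $\gb^T(\IB-\SBl)^2\gb=\sum_{k=1}^n c_k^2\bigl(\lambda\mu_k/(1+\lambda\mu_k)\bigr)^2$ by the diagonalisation of $\IB-\SBl$ recorded after Lemma \ref{lem:eigen}, while $\gb^T(n\boldsymbol{\Gamma})\gb=\sum_{k=1}^n\mu_k c_k^2$ since $\PPhi$ also diagonalises $n\boldsymbol{\Gamma}$, its columns $\tfrac{1}{\sqrt n}\pphi_k$ being eigenvectors with eigenvalues $\mu_k$ by construction. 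The elementary pointwise bound $\bigl(\lambda\mu_k/(1+\lambda\mu_k)\bigr)^2\le\lambda\mu_k$, valid for every $k\ge 1$, then gives
\[
\Btp(g,\lambda)=\frac1n\sum_{k=1}^n c_k^2\Bigl(\frac{\lambda\mu_k}{1+\lambda\mu_k}\Bigr)^2\le\frac{\lambda}{n}\sum_{k=1}^n\mu_k c_k^2=\lambda\,\gb^T\boldsymbol{\Gamma}\gb.
\]

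It remains to bound $\gb^T\boldsymbol{\Gamma}\gb$ by a constant independent of $n$ and $\lambda$. Here I would use that $\gb^T\boldsymbol{\Gamma}\gb=J_m(s_{\gb})$, where $s_{\gb}$ is the natural thin plate spline of order $m$ interpolating $\gb$ at $\tb_1,\ldots,\tb_n$ and $J_m$ is the thin plate penalty functional (so that $\fb^T\boldsymbol{\Gamma}\fb=J_m(f)$ for natural thin plate splines $f$ with values $\fb$). By the variational characterisation of thin plate splines, $s_{\gb}$ minimises $J_m$ over all functions in $H^m(\RR^d)$ interpolating $\gb$; since $\Omega$ satisfies the Lipschitz boundary and cone conditions of Lemma \ref{lem:eigen}, there is a bounded extension operator $E\colon H^m(\Omega)\to H^m(\RR^d)$, and $h=Eg$ interpolates $\gb$ because $\tb_i\in\Omega$. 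Hence $\gb^T\boldsymbol{\Gamma}\gb=J_m(s_{\gb})\le J_m(Eg)\le C\Vert g\Vert_{H^m(\Omega)}^2$, with $C$ depending only on $m$, $d$ and $\Omega$. Combining this with the previous display yields $\Btp(g,\lambda)\le C\Vert g\Vert_{H^m(\Omega)}^2\,\lambda=\LO(\lambda)$, as required.

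I expect the only non-routine step to be the last one: the uniform-in-$n$ control of $\gb^T\boldsymbol{\Gamma}\gb$, which is precisely where the regularity hypotheses on $\Omega$ and on the data locations enter and which is supplied by \citet{utreras1988}; everything else is elementary linear algebra once Lemma \ref{lem:eigen} provides the diagonalisation of $\SBl$.
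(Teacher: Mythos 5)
Your primary route—citing \citet{utreras1988} for the $\LO(\lambda)$ bound—is exactly the paper's proof, which consists of the single reference to Lemma 2.2 of that work. Your supplementary self-contained argument is also sound and consistent with facts the paper itself relies on elsewhere (the diagonalisation $\gb^T(\IB-\SBl)^2\gb=\sum_k c_k^2(\lambda\mu_k/(1+\lambda\mu_k))^2$ and the uniform bound $\fb^T\boldsymbol{\Gamma}\fb\le\vert f\vert_m^2$ both appear in the proof of Lemma 4(c) in Appendix B), so it is a legitimate, more explicit substitute for the citation.
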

\begin{proof}
See \citet{utreras1988} Lemma 2.2.
\end{proof}

\begin{lemma}\label{lem:tech_rice}
Suppose $\lambda\approx n^{-\delta}$ for some $0<\delta<1$, $f, f^x\in H^m(\Omega)$ are bounded and $m\ge d$. Let $\fb=(f(\tb_1),\ldots,f(\tb_n))^T$. Then
\begin{description}
\item[(a)] $n^{-1}\xb^T(\IB-\SBl)\xb=\sigma_x^2+\lo(1)$,
\item[(b)] $n^{-1}\xb^T(\IB-\SBl)^2\xb=\sigma_x^2+\lo(1)$,
\item[(c)] $n^{-1}\xb^T(\IB-\SBl)\fb=\lo(n^{-1/2})+\LO(\lambda^{-1/2})$,
\item[(d)] $n^{-1}\xb^T\SBl^2\xb=\LO(1)$
\end{description}
\end{lemma}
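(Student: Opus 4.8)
The plan is to substitute the covariate model $\xb=\fb^x+\eepsilon^x$, with $\fb^x=(f^x(\tb_1),\dots,f^x(\tb_n))^T$, into each quadratic form and expand it as
\[
\xb^T A\xb=(\fb^x)^T A\fb^x+2(\fb^x)^T A\eepsilon^x+(\eepsilon^x)^T A\eepsilon^x ,
\]
where $A$ is a fixed polynomial in $\SBl$; part (c) is the same with $\fb^x$ replaced by $\fb=(f(\tb_1),\dots,f(\tb_n))^T$ in one factor of one term. I will estimate the ``smooth'' term $(\fb^x)^T A\fb^x$ (or $(\fb^x)^T A\fb$), the ``noise'' term $(\eepsilon^x)^T A\eepsilon^x$, and the cross term separately. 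Throughout I use only two structural facts about the smoother matrix: by Lemma \ref{lem:eigen}, $\SBl$ is symmetric with eigenvalues in $(0,1]$, so $\IB-\SBl$ is symmetric with eigenvalues in $[0,1)$ (hence $(\IB-\SBl)^{2j}\preceq(\IB-\SBl)^2$ and $\SBl^2\preceq\IB$); and the trace and averaged-squared-bias asymptotics of Lemmas \ref{lem:trace} and \ref{lem:sqbias}. All the asymptotic identities are asserted in probability as $n\to\infty$, the randomness of $\xb$ entering only through the law of large numbers and elementary first- and second-moment bounds for quadratic forms in Gaussian vectors.

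\textbf{Smooth terms.} Here the only input is Lemma \ref{lem:sqbias}: for $g\in\{f,f^x\}$ bounded, $\|(\IB-\SBl)\gb\|^2=\gb^T(\IB-\SBl)^2\gb=n\,\Btp(g,\lambda)=\LO(n\lambda)$, while trivially $\|\gb\|^2=\sum_i g(\tb_i)^2=\LO(n)$. Cauchy--Schwarz on the vectors $\gb$ and $(\IB-\SBl)\gb$ then gives $0\le\gb^T(\IB-\SBl)\gb\le\|\gb\|\,\|(\IB-\SBl)\gb\|=\LO(n\lambda^{1/2})$, and similarly $|(\fb^x)^T(\IB-\SBl)\fb|\le\|(\IB-\SBl)\fb^x\|\,\|\fb\|=\LO(n\lambda^{1/2})$. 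Dividing by $n$, the smooth contributions to (a) and (b) are $\LO(\lambda^{1/2})=\lo(1)$, and the smooth contribution to (c) is $\LO(\lambda^{1/2})$ --- this is the term that determines the $\lambda$-exponent in (c). For (b) I also use $(\fb^x)^T(\IB-\SBl)^2\fb^x=n\,\Btp(f^x,\lambda)=\LO(n\lambda)$ directly.

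\textbf{Noise and cross terms.} For $j=1,2$, $\E\big[(\eepsilon^x)^T(\IB-\SBl)^j\eepsilon^x\big]=\sigma_x^2\,\mathrm{Tr}\big((\IB-\SBl)^j\big)$, and expanding the power and applying Lemma \ref{lem:trace} --- which gives $\mathrm{Tr}(\SBl)=M+\LO(\lambda^{-d/2m})$ and $\mathrm{Tr}(\SBl^2)=M+\LO(\lambda^{-d/2m})$, both $\lo(n)$ since $M=\binom{m+d-1}{d}$ is a fixed constant and $\lambda^{-d/2m}$ is of order $n^{\delta d/(2m)}$ with $\delta d/(2m)<1$ --- this expectation equals $\sigma_x^2 n(1+\lo(1))$. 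The variance of the Gaussian quadratic form is $2\sigma_x^4\,\mathrm{Tr}\big((\IB-\SBl)^{2j}\big)\le 2\sigma_x^4 n$, so Chebyshev's inequality gives $(\eepsilon^x)^T(\IB-\SBl)^j\eepsilon^x=\sigma_x^2 n(1+\lo(1))$. Each cross term $(\fb^x)^T(\IB-\SBl)^j\eepsilon^x$ (and $\fb^T(\IB-\SBl)\eepsilon^x$ in (c)) has mean $0$ and variance $\sigma_x^2\|(\IB-\SBl)^j\gb\|^2\le\sigma_x^2\|(\IB-\SBl)\gb\|^2=\LO(n\lambda)$ by Lemma \ref{lem:sqbias}, hence is $\LO\big((n\lambda)^{1/2}\big)=\lo(n^{1/2})$; after dividing by $n$ it is $\lo(n^{-1/2})$, negligible against the $\sigma_x^2$ of (a),(b) and matching the $\lo(n^{-1/2})$ target in (c). Collecting the three pieces and dividing by $n$ gives (a), (b) and (c).

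\textbf{Part (d) and the main obstacle.} Part (d) is immediate: $\SBl^2\preceq\IB$ forces $0\le\xb^T\SBl^2\xb\le\|\xb\|^2$, and $n^{-1}\|\xb\|^2=n^{-1}\sum_i(f^x(\tb_i)+\epsilon^x_i)^2=\LO(1)$ by the law of large numbers (using that $f^x$ is bounded). I do not anticipate a serious obstacle: the substantive content is entirely inside Lemmas \ref{lem:eigen}, \ref{lem:trace} and \ref{lem:sqbias}. The points needing care are (i) order bookkeeping --- verifying that $\lambda^{-d/2m}$ and $n\lambda$ clear the thresholds $\lo(n)$, $\lo(1)$ and $\lo(n^{-1/2})$, which is where $0<\delta<1$ and $m\ge d$ are used; (ii) being explicit about the mode of convergence of the Gaussian quadratic forms (in probability suffices for everything downstream); and (iii) recognising that the bound in (c) must be obtained by chaining Cauchy--Schwarz with the squared-bias estimate of Lemma \ref{lem:sqbias}, rather than estimating $\IB-\SBl$ directly --- its largest eigenvalue is merely $<1$, not small, so a crude bound would lose all decay in $\lambda$.
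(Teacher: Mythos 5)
Your proof is correct, and it reaches the same conclusions by a genuinely different route from the paper's. The paper passes to the eigenbasis $\PPhi$ of Lemma \ref{lem:eigen}, writes $\xb=\fb^x+\eepsilon^x$ in coordinates $c^x_k+\xi^x_k$, and controls the noise contributions \emph{deterministically} under the realisation-wise assumptions (A1)--(A3) (in particular the sup bound $\sup_k|\xi^x_k|=\LO(\log n)$, which it combines with Lemma \ref{lem:trace} and, for part (c), with the seminorm bound $n^{-1}\sum_k c_k^2\mu_k=\fb^T\boldsymbol{\Gamma}\fb<\infty$). You keep the same smooth/noise/cross decomposition but stay in operator form, replacing (A1)--(A3) by exact first and second moments of Gaussian quadratic forms plus Chebyshev; your treatment of the noise part of (c) is notably shorter --- the single computation $\var\big(\fb^T(\IB-\SBl)\eepsilon^x\big)=\sigma_x^2\Vert(\IB-\SBl)\fb\Vert^2=\LO(n\lambda)$ replaces the paper's chain through $\LO(\log n)$, $\lambda^{-d/4m}$ and Lemma \ref{lem:trace} --- and your part (d) is a one-line operator bound where the paper redoes the coordinate expansion. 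The trade-off is the mode of convergence: the paper's statements hold for every realisation satisfying (A1)--(A3), hence almost surely, which is the form implicitly used when $\xb$ is conditioned on in Theorems \thmpsest--\thmtslambda; yours hold in probability, which you flag explicitly and which suffices for the asymptotic conclusions, but is formally weaker (it could be upgraded to almost sure by Borel--Cantelli or by importing (A2)--(A3) as you would then need no new ideas). Two small remarks: the $\LO(\lambda^{-1/2})$ in the lemma statement is a typo for $\LO(\lambda^{1/2})$ --- your proof establishes the correct bound, consistent with the paper's own proof and its use in Theorem \thmpsest; and your smooth-term Cauchy--Schwarz in (c) puts $(\IB-\SBl)$ on $\fb^x$ where the paper puts it on $\fb$, which is immaterial since both functions lie in $H^m(\Omega)$ and are bounded.
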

\begin{proof}
See Appendix B.
\end{proof}

\begin{lemma}\label{lem:tech}
Suppose $\lambda\approx n^{-\delta}, \lambda_x\approx n^{-\delta_x}$ for some $0<\delta,\delta_x< 1$, $f,f^x\in H^m(\Omega)$ and $m\ge d$. Let $\fb=(f(\tb_1),\ldots,f(\tb_n))^T$. Then
\begin{description}
\item[(a)] $n^{-1}\xb^T(\IB-\SBlx)(\IB-\SBl)(\IB-\SBlx)\xb=\sigma_x^2+\lo(1)$,
\item[(b)] $n^{-1}\xb^T(\IB-\SBlx)(\IB-\SBl)^2(\IB-\SBlx)\xb=\sigma_x^2+\lo(1)$,
\item[(c)] $n^{-1}\xb^T(\IB-\SBlx)(\IB-\SBl)\fb=\lo(n^{-1/2})+\LO((\lambda_x\lambda)^{1/2})$,
\item[(d)] $n^{-1}\xb^T(\IB-\SBlx)(\IB-\SBl)\SBlx\xb=\lo(n^{-1/2})+\LO((\lambda_x\lambda)^{1/2})$,
\item[(e)] $n^{-1}\xb^T\SBlx(\IB-\SBl)^2\SBlx\xb=\LO(\lambda)+\LO(n^{-1}\lambda_x^{-d/2m}\log^2n )$
\item[(f)] $n^{-1}\xb^T[\SBl+(\IB-\SBl)\SBlx]^T[\SBl+(\IB-\SBl)\SBlx]\xb=\LO(1)$,
\end{description}
\end{lemma}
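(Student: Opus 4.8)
The plan is to reduce all six estimates to the diagonal representation of the smoother matrices furnished by Lemma~\ref{lem:eigen}, and then to separate in each case a deterministic ``signal'' contribution from a stochastic ``noise'' contribution.

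\emph{Common setup.} By Lemma~\ref{lem:eigen} write $\SBl=\PPhi\,\mathrm{diag}(s_k)\,\PPhi^T$ and $\SBlx=\PPhi\,\mathrm{diag}(t_k)\,\PPhi^T$ with $s_k=(1+\lambda\mu_k)^{-1}$, $t_k=(1+\lambda_x\mu_k)^{-1}$, using the \emph{same} orthogonal matrix $\PPhi$ of (scaled) eigenvectors of $n\boldsymbol{\Gamma}$ for both smoothers; then $\SBl,\SBlx,\IB-\SBl,\IB-\SBlx$ all commute, and every matrix $M$ in (a)--(f) equals $\PPhi\,\mathrm{diag}(\rho_k)\,\PPhi^T$ with $\rho_k$ an explicit product of factors from $s_k,1-s_k,t_k,1-t_k\in[0,1]$. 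Substituting $\xb=\fb^x+\eepsilon^x$ from (\ref{eqn:co_model}), with $\fb^x=(f^x(\tb_1),\ldots,f^x(\tb_n))^T$ and $\eepsilon^x\sim N(\boldsymbol{0},\sigma_x^2\IB)$, each quadratic form splits as
\[
n^{-1}\xb^TM\xb=n^{-1}(\fb^x)^TM\fb^x+2n^{-1}(\eepsilon^x)^TM\fb^x+n^{-1}(\eepsilon^x)^TM\eepsilon^x,
\]
with the analogous split lacking the last term for the bilinear form in (c). I then control each piece using a few elementary facts: (i) $\gb^T\boldsymbol{\Gamma}\gb=\LO(1)$ when $\gb$ collects the values of a bounded $H^m$ function, since the interpolating natural thin plate spline has penalty no larger than that of the function; (ii) the operator (Loewner) inequalities $\IB-\SBl\preceq n\lambda\boldsymbol{\Gamma}$, $(\IB-\SBl)(\IB-\SBlx)\preceq\sqrt{\lambda\lambda_x}\,n\boldsymbol{\Gamma}$ (because $(1-s_k)(1-t_k)\le\sqrt{(1-s_k)(1-t_k)}\le(\lambda\mu_k)^{1/2}(\lambda_x\mu_k)^{1/2}$) and $\SBlx(\IB-\SBl)^2\SBlx\preceq(\IB-\SBl)^2$, together with the fact that all products of the four commuting matrices have operator norm $\le1$; (iii) Lemma~\ref{lem:sqbias}; and (iv) the trace estimates of Lemma~\ref{lem:trace}, which with $m\ge d$ give $\mathrm{Tr}(\SBl),\mathrm{Tr}(\SBl^2),\mathrm{Tr}(\SBlx),\mathrm{Tr}(\SBlx^2)=M+\LO(\lambda^{-d/2m})=\LO(n^{1/2-\tau})$ for some $\tau>0$.

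\emph{The routine parts.} For (a),(b): the signal matrix is $\preceq(\IB-\SBlx)^2$, so Lemma~\ref{lem:sqbias} bounds the signal term by $\LO(\lambda_x)=\lo(1)$; the cross term has variance $\LO(n^{-1}\lambda_x)=\lo(n^{-1})$; and for the noise term $\E[n^{-1}(\eepsilon^x)^TM\eepsilon^x]=\sigma_x^2n^{-1}\mathrm{Tr}(M)$ with $n-\mathrm{Tr}(M)\le 2\mathrm{Tr}(\SBl)+2\mathrm{Tr}(\SBlx)=\lo(n)$ (from $1-(1-t_k)^2(1-s_k)\le 2s_k+2t_k$ and Lemma~\ref{lem:trace}) while its variance is $\le 2\sigma_x^4n^{-2}\mathrm{Tr}(M)\le 2\sigma_x^4n^{-1}$; hence (a),(b)$=\sigma_x^2+\lo(1)$. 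For (c): Cauchy--Schwarz on the PSD form $M_0=(\IB-\SBlx)(\IB-\SBl)$ gives $|n^{-1}(\fb^x)^TM_0\fb|\le\big(n^{-1}(\fb^x)^TM_0\fb^x\big)^{1/2}\big(n^{-1}\fb^TM_0\fb\big)^{1/2}$, each factor $\LO((\lambda\lambda_x)^{1/2})$ by the mixed inequality in (ii), while the linear-in-$\eepsilon^x$ remainder has variance $\le\sigma_x^2n^{-1}\cdot n^{-1}\fb^TM_0\fb=\LO(n^{-1}\lambda)=\lo(n^{-1})$. For (d): the signal term is now a genuine quadratic form, and $M=(\IB-\SBlx)(\IB-\SBl)\SBlx\preceq\sqrt{\lambda\lambda_x}\,n\boldsymbol{\Gamma}$ gives $n^{-1}(\fb^x)^TM\fb^x=\LO((\lambda\lambda_x)^{1/2})$; the linear remainder again has variance $\lo(n^{-1})$, and the noise-quadratic term has mean $\sigma_x^2n^{-1}\mathrm{Tr}(M)\le\sigma_x^2n^{-1}\mathrm{Tr}(\SBlx)=\LO(n^{-1/2-\tau})$ and variance $\lo(n^{-1})$, so it is $\lo(n^{-1/2})$; hence (c),(d)$=\LO((\lambda\lambda_x)^{1/2})+\lo(n^{-1/2})$. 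Part (f) needs no eigenvalue analysis: $\|\SBl+(\IB-\SBl)\SBlx\|\le\|\SBl\|+\|\IB-\SBl\|\,\|\SBlx\|\le2$, so the left side is $\le 4n^{-1}\|\xb\|^2=\LO(1)$, since $n^{-1}\|\fb^x\|^2=\LO(1)$ ($f^x$ bounded) and $n^{-1}\|\eepsilon^x\|^2\to\sigma_x^2$.

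\emph{Part (e) and the main obstacle.} With $M=\SBlx(\IB-\SBl)^2\SBlx$, the signal term satisfies $n^{-1}(\fb^x)^TM\fb^x\le n^{-1}(\fb^x)^T(\IB-\SBl)^2\fb^x=\LO(\lambda)$ by Lemma~\ref{lem:sqbias}, and the cross term has variance $\LO(n^{-1}\lambda)$; both are absorbed into the stated bound. The noise-quadratic term is $n^{-1}\sum_k\nu_k\zeta_k^2$, where $\zeta_k$ are i.i.d.\ $N(0,\sigma_x^2)$ (the coordinates of $\PPhi^T\eepsilon^x$) and $\nu_k=t_k^2(1-s_k)^2\in[0,1]$, with mean $\sigma_x^2n^{-1}\mathrm{Tr}(M)\le\sigma_x^2n^{-1}\mathrm{Tr}(\SBlx^2)=\LO(n^{-1}\lambda_x^{-d/2m})$ by Lemma~\ref{lem:trace}; the fluctuation of this Gaussian quadratic form about its mean is then controlled by a concentration/maximal bound (e.g.\ $\sum_k\nu_k\zeta_k^2\le(\max_k\zeta_k^2)\,\mathrm{Tr}(M)$, or Hanson--Wright), and it is precisely this step that contributes the logarithmic inflation yielding the $\LO(n^{-1}\lambda_x^{-d/2m}\log^2 n)$ term. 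I expect this to be the main obstacle: the deterministic pieces and all expectations reduce cleanly to Lemma~\ref{lem:eigen} together with Lemmas~\ref{lem:sqbias} and \ref{lem:trace}, whereas obtaining the correct power of $\log n$ for the random fluctuation in (e) — and presenting all six bounds in one consistent mode of convergence — requires the careful bookkeeping of \citet{chen_shiau1991}, now routed through the higher-dimensional eigenvalue asymptotics of Lemma~\ref{lem:eigen} in place of the one-dimensional Demmler--Reinsch expansion.
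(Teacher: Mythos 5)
Your proposal is correct and follows the same skeleton as the paper's proof: simultaneous diagonalisation of $\SBl$ and $\SBlx$ via Lemma~\ref{lem:eigen}, the split $\xb=\fb^x+\eepsilon^x$, Lemma~\ref{lem:sqbias} (plus $(\fb^x)^T\boldsymbol{\Gamma}\fb^x=\LO(1)$) for the signal pieces, Lemma~\ref{lem:trace} for the noise pieces, and in (e) exactly the paper's bound $\sup_k(\xi^x_k)^2\cdot n^{-1}\mathrm{Tr}(\SBlx^2)$ as the source of the $\log^2 n$ factor. The two genuine departures are as follows. First, for the stochastic terms in (a)--(d) you compute means and variances of Gaussian quadratic and linear forms and conclude in probability, whereas the paper fixes the realisation of $\eepsilon^x$ and works deterministically under assumptions (A1)--(A3) on the coefficients $\xxi^x=\PPhi^T\eepsilon^x$ (in particular $\sup_k|\xi^x_k|=\LO(\log n)$, which is the only probabilistic input and is imported from Rice and Chen--Shiau); the paper's conditional-on-$\xb$ formulation is the one actually used downstream, since Theorems \thmpsest--\thmtslambda{} take expectations over $\yb$ with $\xb$ held fixed, so your in-probability versions would force a (harmless but real) reinterpretation of those statements. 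Your elementary inequality $1-(1-t)^2(1-s)\le 2t+s$ replaces the paper's slightly cruder three-factor expansion and is fine. Second, your part (f) is genuinely simpler and cleaner than the paper's: the eigenvalues $s_k+(1-s_k)t_k$ of $\SBl+(\IB-\SBl)\SBlx$ lie in $[0,1]$, so the quadratic form is bounded by $n^{-1}\Vert\xb\Vert^2=\LO(1)$ directly, whereas the paper expands into three terms and recycles Lemma~\ref{lem:tech_rice}(a),(d) and part (e). No step of your outline would fail; the only caveat beyond the mode-of-convergence point is that your ``Hanson--Wright'' alternative in (e) is unnecessary --- the crude $\max_k\zeta_k^2\cdot\mathrm{Tr}(M)$ bound you list first is precisely what the paper uses, and it already delivers the stated $\log^2 n$.
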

\begin{proof}
See Appendix B.
\end{proof}

\section{Proofs of technical lemmas}\label{app:proofs_lem}
In this appendix we prove the lemmas set out in Appendix A. We start by introducing some notation. Recall the assumption from our paper 
that
\[
x_i=f^x(\tb_i)+\epsilon^x_i,\quad \epsilon^x_i\sim_{\text{iid}} N(0,\sigma_x^2)
\]
which means that the covariate $\xb$ is correlated with the smooth $f$ in the spatial model. 
Therefore, $\xb$ decomposes as 
\begin{equation}\label{eqn:fx_decomp}
\xb=\fb^x+\eepsilon^x
\end{equation} 
with $\fb^x=(f^x(\tb_1),\ldots,f^x(\tb_n))^T$ and $\eepsilon^x=(\epsilon^x_1,\ldots\epsilon^x_n)^T$. For the asymptotic analysis, it is often convenient to consider the behaviour of the components in this decomposition separately. Let $\cb^x=(c^x_1,\ldots,c^x_n)^T$ and $\xxi^x=(\xi^x_1,\ldots,\xi^x_n)^T$ denote the coefficients of $\fb^x$ and $\eepsilon^x$, respectively, in the basis $\PPhi$ introduced in Appendix A, i.e.\ 
\begin{eqnarray*}
\fb^x&=&\PPhi\cb^x\quad\text{ where }\cb^x=\PPhi^T\fb^x,\\
\eepsilon^x&=&\PPhi\xxi^x\quad\text{ where }\xxi^x=\PPhi^T\eepsilon^x.
\end{eqnarray*}
Note that since $f^x\in H^m(\Omega)$ is bounded, we have that
\begin{equation}\label{eqn:fx_bounded}
n^{-1}\sum_{k=1}^n(c^x_k)^2=n^{-1}(\fb^x)^T(\fb^x)\rightarrow 0 \text{ as } n\rightarrow\infty.
\end{equation}
As in \cite{rice1986} and \cite{chen_shiau1991}, we also note that the following assumptions hold for the coefficients $\xxi^x$ of the iid noise $\eepsilon^x$.
\begin{description}
\item[(A1)] $n^{-1}\sum_{k=1}^n\xi^x_k\rightarrow 0$ as $n\rightarrow\infty$,
\item[(A2)] $n^{-1}\sum_{k=1}^n(\xi^x_k)^2=n^{-1}(\eepsilon^x)^T\eepsilon^x\rightarrow \sigma_x^2>0$ as $n\rightarrow\infty$,
\item[(A3)] $\sup_{1\le k\le n}\vert \xi^x_k\vert = \LO(\log n)$.
\end{description}

\subsection*{Proof of Lemma \ref{lem:trace}}
From Lemma \ref{lem:eigen}, $\mu_k=0$ for $k=1,\ldots,M$, so $\sum_{k=1}^M (1+\lambda\mu_k)^{-1}=M$. Split the remaining range of the summation into $I_1=[M+1,\lambda^{-d/2m}]$, $I_2=[\lambda^{-d/2m},n]$. 

$I_1$: Since $(1+\lambda\mu_k)^{-1}\le 1$ for all $k$
\[
\sum_{I_1}(1+\lambda\mu_k)^{-1}\le \sum_{I_1}1\le\lambda^{-d/2m}.
\]

$I_2$: By Lemma \ref{lem:eigen}, $(1+\lambda\mu_k)^{-1}\le (C_1\lambda  k^{2m/d})^{-1}$ for all $k$ in $I_2$. Since $\{\mu_k\}_k$ is an increasing sequence, we have that
\begin{eqnarray*}
\sum_{I_2}(1+\lambda\mu_k)^{-1}&\le&\int_{\lambda^{-d/2m}}^{\infty}(C_1 \lambda x^{2m/d})^{-1} dx\\
&=&C\lambda^{-d/2m}
\end{eqnarray*}
where $C= (C_1 (2m/d-1))^{-1}$. This proves part (a).




%

For part (b) we note that $\sum_{k=1}^M (1+\lambda\mu_k)^{-2}=M$ as before and that $(1+\lambda\mu_k)^{-2} <(1+\lambda\mu_k)^{-1}$ for all the remaining $k$. Therefore (b) follows from (a).

\subsection*{Proof of Lemma \ref{lem:tech_rice}}
To prove (a), we use the decomposition $\xb=\fb^x+\eepsilon^x$ from (\ref{eqn:fx_decomp}) and the corresponding basis expansions in the basis $\PPhi$ to get
\[
n^{-1}\xb^T(\IB-\SBl)\xb=n^{-1}\sum_k(c^x_k+\xi^x_k)^2\frac{\lambda\mu_k}{1+\lambda\mu_k}.
\]
We note that while 
\[
(c^x_k+\xi^x_k)^2=(c^x_k)^2+(\xi^x_k)^2+2c^x_k\xi^x_k,
\]
due to the Cauchy-Schwarz inequality, the term $2c^x_k\xi^x_k$ will never dominate the rate of convergence. Therefore, we only need to consider the parts of the sum relating to the other two terms.
Using Cauchy-Schwarz again we see that
\begin{eqnarray*}
\sum_k(c^x_k)^2\frac{\lambda\mu_k}{1+\lambda\mu_k}
&\le&\bigg(\sum_k(c^x_k)^2\big(\frac{\lambda\mu_k}{1+\lambda\mu_k}\big)^2\bigg)^{1/2}\bigg(\sum_k(c^x_k)^2\bigg)^{1/2}\\
&=&\big(n\Btp(f^x,\lambda)\big)^{1/2}\big(\sum_k(c^x_k)^2\big)^{1/2}\\
&=&\LO(n\lambda^{1/2})=\LO(n^{1-\delta/2})=\lo(n).
\end{eqnarray*}
Here we have used Lemma \ref{lem:sqbias} and (\ref{eqn:fx_bounded}).

For the term involving $(\xi^x_k)^2$ we have that
\begin{eqnarray*}
\sum_k(\xi^x_k)^2-\sum_k(\xi^x_k)^2\frac{\lambda\mu_k}{1+\lambda\mu_k}&=&\sum_k(\xi^x_k)^2\frac{1}{1+\lambda\mu_k}\\
&\le&\sup_k(\xi^x_k)^2\sum_k\frac{1}{1+\lambda\mu_k}\\
&=&\LO(\log^2n)\LO( n^{1/2-\tau})=\lo(n)
\end{eqnarray*}
by assumption (A3) and Lemma \ref{lem:trace}.
Hence, by assumption (A2),
\[
n^{-1}\sum_k(\xi^x_k)^2\frac{\lambda\mu_k}{1+\lambda\mu_k}\rightarrow \sigma_x^2\quad \text{as }n\rightarrow \infty,
\]
and therefore (a) is proved.

For (b) we write
\[
n^{-1}\xb^T(\IB-\SBl)^2\xb=n^{-1}\sum_k(c^x_k+\xi^x_k)^2\bigg(\frac{\lambda\mu_k}{1+\lambda\mu_k}\bigg)^2.
\]
By Lemma \ref{lem:sqbias} we have that
\[
n^{-1}\sum_k(c^x_k)^2\bigg(\frac{\lambda\mu_k}{1+\lambda\mu_k}\bigg)^2=\Btp(f^x,\lambda)=\LO(\lambda)=\lo(1).
\]
For $a>0$ we have $\frac{1}{1+a}\le1$ and $\frac{a}{1+a}\le1$ and therefore
\[
1-\bigg(\frac{a}{1+a}\bigg)^2=\frac{(1+a)^2-a^2}{(1+a)^2}=\frac{(1+a)+a}{(1+a)^2}\le\frac{2}{1+a}.
\]
Using this with $a=\lambda\mu_k$ we see from assumption (A3) and Lemma \ref{lem:trace} that
\begin{eqnarray*}
\sum_k(\xi^x_k)^2-\sum_k(\xi^x_k)^2\bigg(\frac{\lambda\mu_k}{1+\lambda\mu_k}\bigg)^2&\le& \sup_k(\xi^x_k)^2\sum_k\frac{2}{1+\lambda\mu_k}\\
&=&\LO\big((\log^2n) n^{1/2-\tau}\big)=\lo(n).
\end{eqnarray*}
So by assumption (A2), (b) is proved.

For (c) let $\cb=\PPhi^T\fb$ be the coefficients of $\fb$ in the basis $\PPhi$. Then
\[
n^{-1}\xb^T(\IB-\SBl)\fb=n^{-1}\sum_k (c^x_kc_k+\xi^x_kc_k)\frac{\lambda\mu_k}{1+\lambda\mu_k}.
\]
For the term involving $c^x_k$, we use Cauchy-Schwarz and \eqref{eqn:fx_bounded} to see that
\begin{eqnarray*}
\bigg\vert n^{-1}\sum_k c^x_kc_k\frac{\lambda\mu_k}{1+\lambda\mu_k}\bigg\vert&\le& \bigg(n^{-1}\sum_k(c^x_k)^2\bigg)^{1/2}\bigg(n^{-1}\sum_k\bigg(\frac{c_k\lambda\mu_k}{1+\lambda\mu_k}\bigg)^2\bigg)^{1/2}\\
&=&\LO\big((\Btp(f,\lambda)\big)^{1/2})=\LO(\lambda^{1/2})
\end{eqnarray*}
by Lemma \ref{lem:sqbias}. For the term involving $\xi^x_k$, we use  Cauchy-Schwarz again to obtain
\begin{eqnarray*}
\bigg\vert n^{-1}\sum_k \xi^x_kc_k\frac{\lambda\mu_k}{1+\lambda\mu_k}\bigg\vert&\le&\lambda^{1/2}\sup_k\vert \xi^x_k\vert \bigg\vert n^{-1}\sum_k c_k\mu_k^{1/2}\frac{(\lambda\mu_k)^{1/2}}{1+\lambda\mu_k}\bigg\vert\\
&\le&\lambda^{1/2}\sup_k\vert \xi^x_k\vert\bigg(n^{-1}\sum_k c_k^2\mu_k\bigg)^{1/2}\bigg(n^{-1}\sum_k\frac{\lambda\mu_k}{(1+\lambda\mu_k)^2}\bigg)^{1/2}\\
&\le& \LO(\lambda^{1/2}\log n)\LO(n^{-1/2}\lambda^{-d/4m})=\lo(n^{-1/2})
\end{eqnarray*}
Here we have used assumption (A3), Lemma \ref{lem:trace} (since $\frac{\lambda\mu_k}{(1+\lambda\mu_k)^2}\le \frac{1}{1+\lambda\mu_k}$) and the fact that
\[
n^{-1}\sum_k c_k^2\mu_k=\fb^T \Gamma\fb \le \vert f\vert_m^2 < \infty
\]
since $f\in H^m(\Omega)$. 
The rate of convergence of $\lo(n^{-1/2})$ follows from the fact that 
\[
n^{-1/2}(\log n) \lambda^{-d/4m+1/2}\approx n^{-1/2}(\log n) n^{-\delta(1-d/2m)/2}=\lo(n^{-1/2})
\]
since $1-d/2m>0$. This proves (c).

For (d) we have that
\[
n^{-1}\xb^T\SBl^2\xb=n^{1}\sum_k(c^x_k+\xi^x_k)^2\frac{1}{(1+\lambda_x\mu_k)^2}.
\]
For the term involving $(c^x_k)^2$ we see that
\[
n^{-1}\sum_k(c^x_k)^2\frac{1}{(1+\lambda_x\mu_k)^2}\le n^{-1}\sum_k(c^x_k)^2=\LO(1)
\]
by \eqref{eqn:fx_bounded}. 
For the term involving $(\xi^x_k)^2$ we see from assumption (A3) and Lemma \ref{lem:trace} that
\begin{eqnarray*}
n^{-1}\sum_k(\xi^x_k)^2\frac{1}{(1+\lambda_x\mu_k)^2}&\le& n^{-1}\sup_k(\xi^x_k)^2\sum_k\frac{1}{(1+\lambda_x\mu_k)^2}\\
&=&\LO\big((\log^2n)n^{-1/2-\tau}\big)=\LO(1).
\end{eqnarray*}
Hence $n^{-1}\xb^T\SBl^2\xb=\LO(1)$.

\subsection*{Proof of Lemma \ref{lem:tech}}
As in the proof of Lemma \ref{lem:tech_rice} we write
\[
n^{-1}\xb^T(\IB-\SBlx)(\IB-\SBl)(\IB-\SBlx)\xb=n^{-1}\sum_k(c^x_k+\xi^x_k)^2\bigg(\frac{\lambda_x\mu_k}{1+\lambda_x\mu_k}\bigg)^2\frac{\lambda\mu_k}{1+\lambda\mu_k}
\]
and once again, by Cauchy-Schwarz, we only need to consider the terms involving $(c^x_k)^2$ and $(\xi^x_k)^2$. Since $\frac{\lambda\mu_k}{1+\lambda\mu_k}\le 1$, Lemma \ref{lem:sqbias} shows that
\[
n^{-1}\sum_k(c^x_k)^2\bigg(\frac{\lambda_x\mu_k}{1+\lambda_x\mu_k}\bigg)^2\frac{\lambda\mu_k}{1+\lambda\mu_k}\le n^{-1}(\fb^x)^T(\IB-\SBlx)^2\fb^x = \Btp(f^x,\lambda_x)=\LO(\lambda_x)=\lo(1).
\]
For the term involving $(\xi^x_k)^2$, firstly note that if $a_1,a_2,a_3>0$, then
\begin{eqnarray*}
1-\frac{a_1a_2a_3}{(1+a_1)(1+a_2)(1+a_3)}&=&\frac{(1+a_1)(1+a_2)(1+a_3)-a_1a_2a_3}{(1+a_1)(1+a_2)(1+a_3)}\\
&=&\frac{1+a_1+a_2+a_3+a_1a_2+a_1a_3+a_2a_3}{(1+a_1)(1+a_2)(1+a_3)}\\
&\le&\frac{3}{1+a_1}+\frac{2}{1+a_2}+\frac{2}{1+a_3}
\end{eqnarray*}
where in the last step we have used the fact that $\frac{1}{1+a_i}\le 1$ and $\frac{a_i}{1+a_i}\le 1$ for all $i$. Using this with $a_1=a_2=\lambda_x\mu_k$ and $a_3=\lambda\mu_k$ we see that
\begin{eqnarray*}
\sum_k(\xi^x_k)^2-\sum_k(\xi^x_k)^2\bigg(\frac{\lambda_x\mu_k}{1+\lambda_x\mu_k}\bigg)^2\frac{\lambda\mu_k}{1+\lambda\mu_k}
&\le&
\sup_k(\xi^x_k)^2\bigg(\sum_k\frac{5}{1+\lambda_x\mu_k}+\sum_k\frac{2}{1+\lambda_x\mu_k}\bigg)\\
&=&\LO(\log^2n)\LO(n^{1/2-\tau})=\lo(n)
\end{eqnarray*}
by assumption (A3) and Lemma \ref{lem:trace}. Therefore,
\[
n^{-1}\sum_k(\xi^x_k)^2\bigg(\frac{\lambda_x\mu_k}{1+\lambda_x\mu_k}\bigg)^2\frac{\lambda\mu_k}{1+\lambda\mu_k}\rightarrow \sigma_x^2
\]
by assumption (A2). This shows (a).

For (b) we have that
\[
n^{-1}\xb^T(\IB-\SBlx)(\IB-\SBl)^2(\IB-\SBlx)\xb=n^{-1}\sum_k(c^x_k+\xi^x_k)^2\bigg(\frac{\lambda_x\mu_k}{1+\lambda_x\mu_k}\bigg)^2\bigg(\frac{\lambda\mu_k}{1+\lambda\mu_k}\bigg)^2.
\]
For the term involving $(c^x_k)^2$, the same argument as in (a) shows that this is $\lo(1)$. For the $(\xi^x_k)^2$ term we note that
\[
1-\frac{a_1a_2a_3a_4}{(1+a_1)(1+a_2)(1+a_3)(1+a_4)}
\le\frac{5}{1+a_1}+\frac{4}{1+a_2}+\frac{4}{1+a_3}+\frac{2}{1+a_4}
\]
for $a_1,a_2,a_3,a_4>0$ and using this with $a_1=a_2=\lambda_x\mu_k$ and $a_3=a_4=\lambda\mu_k$ shows that
\[
n^{-1}\sum_k(\xi^x_k)^2\bigg(\frac{\lambda_x\mu_k}{1+\lambda_x\mu_k}\bigg)^2\bigg(\frac{\lambda\mu_k}{1+\lambda\mu_k}\bigg)^2\rightarrow \sigma_x^2
\]
as in (a). This proves (b).

For (c) let $\cb=\PPhi^T\fb$ be the coefficients of $\fb$ in the basis $\PPhi$. Then
\[
n^{-1}\xb^T(\IB-\SBlx)(\IB-\SBl)\fb=n^{-1}\sum_k (c^x_kc_k+\xi^x_kc_k)\frac{\lambda_x\mu_k}{1+\lambda_x\mu_k}\frac{\lambda\mu_k}{1+\lambda\mu_k}.
\]
For the term involving $c^x_k$, we use Cauchy-Schwarz to see that
\begin{eqnarray*}
\bigg\vert n^{-1}\sum_k c^x_kc_k\frac{\lambda_x\mu_k}{1+\lambda_x\mu_k}\frac{\lambda\mu_k}{1+\lambda\mu_k}\bigg\vert&\le& \bigg(n^{-1}\sum_k\bigg(\frac{c^x_k\lambda_x\mu_k}{1+\lambda_x\mu_k}\bigg)^2\bigg)^{1/2}\bigg(n^{-1}\sum_k\bigg(\frac{c_k\lambda\mu_k}{1+\lambda\mu_k}\bigg)^2\bigg)^{1/2}\\
&=&\bigg(\Btp(f^x,\lambda_x)\Btp(f,\lambda)\bigg)^{1/2}=\LO\big((\lambda_x\lambda)^{1/2}\big)
\end{eqnarray*}
by Lemma \ref{lem:sqbias}. For the term involving $\xi^x_k$, since $\frac{\lambda_x\mu_k}{1+\lambda_x\mu_k}\le 1$,
\[
\bigg\vert n^{-1}\sum_k \xi^x_kc_k\frac{\lambda_x\mu_k}{1+\lambda_x\mu_k}\frac{\lambda\mu_k}{1+\lambda\mu_k}\bigg\vert\le\bigg\vert n^{-1}\sum_k \xi^x_kc_k\frac{\lambda\mu_k}{1+\lambda\mu_k}\bigg\vert=\lo(n^{-1/2})
\]
by the proof of Lemma \ref{lem:tech_rice} (c). This proves (c).


For (d) we have that
\[
n^{-1}\xb^T(\IB-\SBlx)(\IB-\SBl)\SBlx\xb=n^{-1}\sum_k(c^x_k+\xi^x_k)^2\frac{\lambda_x\mu_k}{(1+\lambda_x\mu_k)^2}\frac{\lambda\mu_k}{1+\lambda\mu_k}.
\]
For the term involving $(c^x_k)^2$, Cauchy-Schwarz implies that
\begin{eqnarray*}
n^{-1}\sum_k(c^x_k)^2\frac{\lambda_x\mu_k}{(1+\lambda_x\mu_k)^2}\frac{\lambda\mu_k}{1+\lambda\mu_k}&\le&n^{-1}\sum_k(c^x_k)^2\frac{\lambda_x\mu_k}{1+\lambda_x\mu_k}\frac{\lambda\mu_k}{1+\lambda\mu_k}\\
&\le& \big(\Btp(f^x,\lambda_x)\Btp(f^x,\lambda)\big)^{1/2}=\LO\big((\lambda\lambda_x)^{1/2})
\end{eqnarray*}
by Lemma \ref{lem:sqbias}.
For the term involving $(\xi^x_k)^2$ we use (A3) and Lemma \ref{lem:trace} to see that
\begin{eqnarray*}
n^{-1}\sum_k(\xi^x_k)^2\frac{\lambda_x\mu_k}{(1+\lambda_x\mu_k)^2}\frac{\lambda\mu_k}{1+\lambda\mu_k}&\le&\sup_k(\xi^x_k)^2 n^{-1}\sum_k\frac{1}{1+\lambda_x\mu_k}\\
&=&\LO\big((\log^2n)n^{-1/2-\tau}\big)=\lo(n^{-1/2}).
\end{eqnarray*}
This proves (d)

For (e) we have that
\[
n^{-1}\xb^T\SBlx(\IB-\SBl)^2\SBlx\xb=n^{-1}\sum_k(c^x_k+\xi^x_k)^2\frac{1}{(1+\lambda_x\mu_k)^2}\bigg(\frac{\lambda\mu_k}{1+\lambda\mu_k}\bigg)^2.
\]
For the term involving $(c^x_k)^2$ we see that
\begin{eqnarray*}
n^{-1}\sum_k(c^x_k)^2\frac{1}{(1+\lambda_x\mu_k)^2}\bigg(\frac{\lambda\mu_k}{1+\lambda\mu_k}\bigg)^2&\le&n^{-1}\sum_k(c^x_k)^2\bigg(\frac{\lambda\mu_k}{1+\lambda\mu_k}\bigg)^2\\
&=&\Btp(f^x,\lambda)=\LO(\lambda).
\end{eqnarray*}
For the term involving $(\xi^x_k)^2$
\begin{eqnarray*}
n^{-1}\sum_k(\xi^x_k)^2\frac{1}{(1+\lambda_x\mu_k)^2}\bigg(\frac{\lambda\mu_k}{1+\lambda\mu_k}\bigg)^2&\le&n^{-1}\sup_k(\xi^x_k)^2\sum_k\frac{1}{(1+\lambda_x\mu_k)^2}\\
&=&\LO\big(n^{-1}(\log^2n)\lambda_x^{-d/2m}\big)
\end{eqnarray*}
by assumption (A3) and Lemma \ref{lem:trace}. This proves (e).

%

For (f) we write
\begin{align}\label{eqn:tech_lem_g}
n^{-1}\xb^T\big[\SBl+(\IB-\SBl)\SBlx]^T[\SBl+(\IB-\SBl)\SBlx\big]\xb\qquad\qquad\qquad\\\nonumber
=n^{-1}\big(\xb^T\SBl^2\xb+2\xb^T\SBl(\IB-\SBl)\SBlx\xb+\xb^T\SBlx(\IB-\SBl)^2\SBlx\big).
\end{align}
For the first term in \eqref{eqn:tech_lem_g}, $n^{-1}\xb^T\SBl^2\xb=\LO(1)$ by Lemma \ref{lem:tech_rice} (d). 
For the second term in \eqref{eqn:tech_lem_g} we see that
\begin{eqnarray*}
n^{-1}\xb^T\SBl(\IB-\SBl)\SBlx\xb&=&n^{-1}\sum_k(c^x_k+\xi^x_k)^2\frac{1}{1+\lambda_x\mu_k}\frac{\lambda\mu_k}{(1+\lambda\mu_k)^2}\\
&\le&n^{-1}\sum_k(c^x_k+\xi^x_k)^2\frac{\lambda\mu_k}{1+\lambda\mu_k}=n^{-1}\xb^T(\IB-\SBl)\xb=\LO(1)
\end{eqnarray*}
by Lemma \ref{lem:tech_rice} (a).
From (e), the third term in \eqref{eqn:tech_lem_g} is given by
\begin{eqnarray*}
n^{-1}\xb^T\SBlx(\IB-\SBl)^2\SBlx&=&\LO(\lambda)+\LO(n^{-1}\lambda_x^{-d/2m}\log^2n)\\
&\approx& \LO(n^{-\delta})+\LO(n^{-(1-\delta_x d/2m)}\log^2n)=\LO(1).
\end{eqnarray*}
This proves (f).
\section{Proofs of main results}\label{app:proofs_main}
In this appendix we prove the results given in Sections $\secasymps$ and $\secasymts$ of the paper.

\subsection*{Proof of Theorem \thmpsest}
Let $\fb=(f(\tb_1),\ldots,f(\tb_n))^T$. Since $\E(\yb)=\beta\xb+\fb$, the expression (\eqnpsestimates) in the paper shows that
\begin{eqnarray*}
\E(\betahps)-\beta&=&\left(\xb^T(\IB-\SBl)\xb\right)^{-1}\xb^T(\IB-\SBl)(\beta\xb+\fb)-\beta\\
&=&\big(n^{-1}\xb^T(\IB-\SBl)\xb\big)^{-1}\big(n^{-1}\xb^T(\IB-\SBl)\fb\big)\\
&=&\lo(n^{-1/2})+\LO(\lambda^{1/2})
\end{eqnarray*}
by Lemma \ref{lem:tech_rice} (a) and (c).

Similarly, since $\var(\yb)=\sigma^2\IB$, (\eqnpsestimates) in the paper shows that 
\begin{eqnarray*}
n\var(\betahps)&=&n\sigma^2 \big(\xb^T(\IB-\SBl)\xb\big)^{-1}\xb^T(\IB-\SBl)^2\xb\big(\xb^T(\IB-\SBl)\xb\big)^{-1}\\
&=&\sigma^2\big(n^{-1}\xb^T(\IB-\SBl)\xb\big)^{-1}\big(n^{-1}\xb^T(\IB-\SBl)^2\xb\big)\big(n^{-1}\xb^T(\IB-\SBl)\xb\big)^{-1}\\
&\rightarrow& \sigma^2/\sigma_x^2 \quad\text{as }n\rightarrow \infty
\end{eqnarray*}
by Lemma \ref{lem:tech_rice} (a) and (b). 

\subsection*{Proof of Theorem \thmpslambda}
Let $\fb=(f(\tb_1),\ldots,f(\tb_n))^T$ and $\eepsilon=(\epsilon_1,\ldots,\epsilon_n)^T$ so that $\yb=\beta\xb+\fb+\eepsilon$. 

Since $\fbhps=\SBl(\yb-\betahps\xb)$ by (\eqnpsestimates) in the paper,
\[
\E(\fbhps)-\fb=-\big(\E(\betahps)-\beta\big)\SBl\xb-(\IB-\SBl)\fb.
\]
We therefore see that
\begin{eqnarray*}
\Bps(f,\lambda)&=&n^{-1}\Vert\E(\fbhps)-\fb\Vert^2\\
&\le&n^{-1}\big\Vert\big(\E(\betahps)-\beta\big)\SBl\xb\big\Vert^2+n^{-1}\Vert(\IB-\SBl)\fb\Vert^2\\
&=&\big(\E(\betahps)-\beta\big)^2n^{-1}\xb^T\SBl^2\xb+\Btp(f,\lambda)\\
&=&\big(\lo(n^{-1})+\LO(\lambda)\big)\LO(1)+\LO(\lambda)=\LO(\lambda)
\end{eqnarray*}
by Theorem \thmpsest (a), Lemma \ref{lem:tech_rice} (d) and Lemma \ref{lem:sqbias}. This proves part (a).

For (b), firstly note that
\[
\fbhps-\E(\fbhps)=\SBl\eepsilon-\big(\betahps-\E(\betahps)\big)\SBl\xb.
\]
We therefore see that
\begin{eqnarray*}
\Vps(f,\lambda)&=&n^{-1}\E\big(\Vert \fbhps-\E(\fbhps)\Vert^2\big)\\
&\le&n^{-1}\E(\eepsilon^T\SBl^2\eepsilon)+\E\big[(\betahps-\E(\betahps))^2\big]n^{-1}\xb^T\SBl^2\xb\\
&=&n^{-1}\sigma^2\mathrm{Tr}(\SBl^2)+\var(\betahps)\LO(1)\\
&=&\LO(n^{-1}\lambda^{-d/2m})+\LO(n^{-1})=\LO(n^{-1}\lambda^{-d/2m})
\end{eqnarray*}
by Lemma \ref{lem:trace}, Theorem \thmpsest (b) and Lemma \ref{lem:tech_rice} (d). This proves part (b).

Finally, recall that
\[
\amse(\fbhps)=\Bps(f,\lambda)+\Vps(f,\lambda).
\]
From the above, we see that the bias term increases with $\lambda$ while the variance term decreases with $\lambda$ so that the optimal rate for minimising $\amse(\fbhps)$ is achieved when $\LO(\lambda)=\LO(n^{-1}\lambda^{-d/2m})$. This leads to an optimal rate of $\lambda=\LO(n^{-2m/(2m+d)})$. At this rate for $\lambda$, $\Bps(f,\lambda)$ and $\Vps(f,\lambda)$ converge at the same rate of $\LO(n^{-2m/(2m+d)})$.

\subsection*{Proof of Theorem \thmtsest}
Let 
\begin{eqnarray*}
\bb&=&(\IB-\SBl)(\IB-\SBlx)\xb\\
a_1&=&n^{-1}\bb^T(\IB-\SBlx)\xb=n^{-1}\xb^T(\IB-\SBlx)(\IB-\SBl)(\IB-\SBlx)\xb\\
a_2&=&n^{-1}\bb^T\bb=n^{-1}\xb^T(\IB-\SBlx)(\IB-\SBl)^2(\IB-\SBlx)\xb.
\end{eqnarray*}
By Lemma \ref{lem:tech} (a) and (b), $a_1\rightarrow \sigma_x^2$ and $a_2\rightarrow \sigma_x^2$ as $n\rightarrow \infty$. From (\eqntsestimatesb) in the paper we see that
\[
\betahts=(na_1)^{-1}\bb^T\yb.
\]
Therefore, since $\E(\yb)=\beta\xb+\fb$ where $\fb=(f(\tb_1),\ldots,f(\tb_n))$,
\begin{eqnarray*}
\E(\betahts)-\beta&=&(na_1)^{-1}\big((\bb^T\xb-na_1)\beta+\bb^T\fb\big)\\
&=&(na_1)^{-1}\big(\bb^T\SBlx\xb\beta+\bb^T\fb\big)\\
&=&a_1^{-1}\big(n^{-1}\xb^T(\IB-\SBlx)(\IB-\SBl)\SBlx\xb\beta+n^{-1}\xb^T(\IB-\SBlx)(\IB-\SBl)\fb\big)\\
&=&\lo(n^{-1/2})+\LO((\lambda\lambda_x)^{1/2})
\end{eqnarray*}
by Lemma \ref{lem:tech} (d) and (c). This proves part (a).

For part (b), since $\var(\yb)=\sigma^2\IB$, we see that
\begin{eqnarray*}
n\var(\betahts)&=&n(na_1)^{-2}\bb^T(\sigma^2\IB)\bb\\
&=&(\sigma^2a_2)/a_1^2\\
&\rightarrow& \sigma^2/\sigma_x^2\quad \text{as }n\rightarrow\infty.
\end{eqnarray*}
This proves (b).

\subsection*{Proof of Theorem \thmtslambda}
Let $\fb=(f(\tb_1),\ldots,f(\tb_n))^T$ and $\eepsilon=(\epsilon_1,\ldots,\epsilon_n)^T$ so that $\yb=\beta\xb+\fb+\eepsilon$. 
Since by (\eqntsestimatesf) in the paper
\[
\fbhts=\SBl\yb-\big(\SBl+(\IB-\SBl)\SBlx\big)\betahts\xb,
\]
we have that
\[
\E(\fbhts)-\fb=-(\IB-\SBl)\fb-(\E(\betahts)-\beta)\big(\SBl+(\IB-\SBl)\SBlx\big)\xb-\beta(\IB-\SBl)\SBlx\xb.
\]
Since $n^{-1}\big\Vert(\IB-\SBl)\fb\big\Vert^2=\Btp(f,\lambda)$, we therefore see that
\begin{eqnarray*}
\Bts(f,\lambda,\lambda_x)&=&n^{-1}\Vert\E(\fbhts)-\fb\Vert^2\\
&\le&n^{-1}\big\Vert(\IB-\SBl)\fb\big\Vert^2+\big(\E(\betahts)-\beta\big)^2n^{-1}\big\Vert\big(\SBl+(\IB-\SBl)\SBlx\big)\xb\big\Vert^2\\
& &+ \beta^2n^{-1}\big\Vert(\IB-\SBl)\SBlx\xb\big\Vert^2\\
&=&\LO(\lambda)+\big(\lo(n^{-1})+\LO(\lambda\lambda_x)\big)\LO(1)+\LO(\lambda)+\LO(n^{-1}\lambda_x^{d/2m}\log^2n)\\
&=&\LO(\lambda)+\LO(n^{-1}\lambda_x^{-d/2m}\log^2 n)
\end{eqnarray*}
by Lemma \ref{lem:sqbias}, Theorem \thmtsest (a) and Lemma \ref{lem:tech_rice} (f) and (e). This proves part (a).

For (b), note that
\[
\fbhts-\E(\fbhts)=\SBl\eepsilon-\big(\betahts-\E(\betahts)\big)\big(\SBl+(\IB-\SBl)\SBlx\big)\xb.
\]
We therefore see that
\begin{eqnarray*}
\Vts(f,\lambda,\lambda_x)&=&n^{-1}\E\big[\Vert \fbhts-\E(\fbhts)\Vert^2\big]\\
&\le&n^{-1}\E[\eepsilon^T\SBl^2\eepsilon]-\E\big[(\betahts-\E(\betahts))^2\big]n^{-1}\big\Vert\big(\SBl+(\IB-\SBl)\SBlx\big)\xb\big\Vert^2\\
&=&n^{-1}\sigma^2\mathrm{Tr}(\SBl^2)+\var(\betahts)\LO(1)\\
&=&\LO(n^{-1}\lambda^{-d/2m})+\LO(n^{-1})=\LO(n^{-1}\lambda^{-d/2m})
\end{eqnarray*}
by Lemma \ref{lem:tech} (f), Lemma \ref{lem:trace} and Theorem \thmtsest (b). This proves part (b).

Finally, recall that
\[
\amse(\fbhts)=\Bts(f,\lambda,\lambda_x)+\Vts(f,\lambda,\lambda_x).
\]
From the above we see that the bias term increases with $\lambda$ while the variance term decreases with $\lambda$ so that the optimal rate for minimising $\amse(\fbhts)$ is achieved when $\LO(\lambda)=\LO(n^{-1}\lambda^{-d/2m})$. This leads to an optimal rate of $\lambda=\LO(n^{-2m/(2m+d)})$. Since we have assumed that the convergence rates for $\Bts(f,\lambda,\lambda_x)$ and $\Vts(f,\lambda,\lambda_x)$ are equal, the optimal rate for $\lambda_x$ is then obtained when $\LO(n^{-1}\lambda_x^{-d/2m}\log^2 n)=\LO(n^{-2m/(2m+d)})$ which leads to $\LO(\lambda_x)=n^{-2m/(2m+d)}(\log n)^{4m/d}$.

\subsection*{Proof of Corollary \corts}
Theorem \thmtsest (b) shows that we need $\E(\betahts)-\beta=\lo(n^{-1/2})$ to ensure that the bias converges faster than the standard deviation. Part (a) of the same theorem shows that this required rate can be achieved if $\lambda\lambda_x=\lo(n^{-1})$. Suppose $\lambda$ and $\lambda_x$ converge at their optimal rates from Theorem \thmtslambda. Then since for any $\epsilon>0$,
\[
n^{-2m/(2m+d)}(\log n)^{4m/d}=\lo(n^{-2m/(2m+d)+\epsilon}),
\]
we have that
\[
\lambda\lambda_x=\lo(n^{-4m/(2m+d)+\epsilon})=\lo(n^{-1})
\]
if we choose $\epsilon=\frac{2m-d}{2m+d}$. This proves the result.

\section{Partial residual estimates}\label{app:pr}
In this appendix we consider, as an aside, the asymptotic behaviour of the partial residual estimates introduced by Denby (1986) and, independently, by \citet{speckman1988}, which are the estimates we obtain using the gSEM approach of \citet{thaden2018}. Here we adapt the method used in Sections $\secasymps$ and $\secasymts$ of the paper for estimates in the spatial and spatial+ models to show how the asymptotic results in \citet{chen_shiau1991} for the partial residual estimates generalise from the one-dimensional model to dimensions $d\ge 1$. We show that, as is the case for the spatial+ model, the smoothing-induced bias in the covariate effect estimate goes to $0$ faster than the standard deviation, i.e.\ the partial residual estimates also avoid the problem of disproportionate smoothing-induced bias.

For a given value $\lambda>0$ of the smoothing parameter, the partial residual estimates for the covariate effect $\beta$ and the unknown smooth effect $\fb=(f(\tb_1),\ldots,f(\tb_n))^T$ in the model (\eqnspatial) of the paper, are defined as
\begin{eqnarray}\label{eqn:pr_estimates}
\betahpr&=&\big(\xb^T(\IB-\SBl)^2\xb\big)^{-1}\xb^T(\IB-\SBl)^2\yb,\\
\fbhpr&=&\SBl(\yb-\betahpr\xb)\nonumber
\end{eqnarray}
where $\SBl$ is the smoother matrix. A similar argument to that of Section $\sects$ of the paper shows that these estimates are the ones we would obtain in the gSEM if, for simplicity, we used the same smoothing parameter in all regressions. That is, the estimate $\betahpr$ is the same as the estimated effect in the linear model given by
\[
r^y_i=\beta r^x_i+\epsilon_i, \quad \epsilon_i\underset{\text{iid}}\sim N(0,\sigma^2)
\]
where $\rb^x=(\IB-\SBl)\xb$ and $\rb^y=(\IB-\SBl)\yb$ are the residuals after fitting a thin plate spline to $\xb$ and $\yb$, respectively.

Minor adjustments to the proofs of Theorems $\thmpsest$ and $\thmpslambda$ and Corollary $\corps$ for the spatial model estimates lead to the following results. These results show that the asymptotic behaviour of the estimates $\betahpr$ and $\fbhpr$ is the same as that of the corresponding spatial model estimates, except for the rate of convergence of the bias of the covariate effect estimate $\betahpr$. More specifically, $\E(\betahpr)-\beta=\lo(n^{-1/2})+\LO(\lambda)$, whereas $\E(\betahps)-\beta=\lo(n^{-1/2})+\LO(\lambda^{1/2})$ and this difference is enough to ensure that the bias converges faster than the standard deviation when $\lambda$ converges at the optimal rate (for minimising the AMSE of the estimated spatial effect).
\begin{theorem}\label{thm:pr_est}
Suppose $\lambda\approx n^{-\delta}$ for some $0<\delta<1$, $f, f^x \in H^m(\Omega)$ are bounded and $m\ge d$.  Then for the partial residual estimate of $\beta$ we have that
\begin{description}
\item[(a)] $\E(\betahpr)-\beta=\lo(n^{-1/2})+\LO(\lambda)$,
\item[(b)] $n\var(\betahpr)\rightarrow \sigma^2/\sigma_x^2$ as $n\rightarrow\infty$.
\end{description}
In particular, $\var(\betahpr)=\LO(n^{-1})$ and we need $\lambda=\lo(n^{-1/2})$ to ensure that the bias converges faster than the standard deviation of $\betahpr$.
\end{theorem}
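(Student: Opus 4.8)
The plan is to mirror, step for step, the proofs of Theorems $\thmpsest$ and $\thmpslambda$ for the spatial model estimates, simply substituting the defining formulas \eqref{eqn:pr_estimates} for $\betahpr$ in place of the partial thin plate spline formulas. The only structural difference is that every occurrence of $(\IB-\SBl)$ in the relevant quadratic forms is replaced by $(\IB-\SBl)^2$, and it is precisely this extra power that upgrades the bias rate from $\LO(\lambda^{1/2})$ to $\LO(\lambda)$.

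For part (a), since $\E(\yb)=\beta\xb+\fb$ with $\fb=(f(\tb_1),\ldots,f(\tb_n))^T$, formula \eqref{eqn:pr_estimates} gives
\[
\E(\betahpr)-\beta=\big(n^{-1}\xb^T(\IB-\SBl)^2\xb\big)^{-1}\big(n^{-1}\xb^T(\IB-\SBl)^2\fb\big).
\]
The denominator is $\sigma_x^2+\lo(1)$ by Lemma \ref{lem:tech_rice}(b). The numerator is a new quadratic form which I would estimate exactly as the technical lemmas are proved: expand $\xb=\fb^x+\eepsilon^x$ in the basis $\PPhi$, write $\cb=\PPhi^T\fb$, $\cb^x=\PPhi^T\fb^x$, $\xxi^x=\PPhi^T\eepsilon^x$, so that the numerator becomes $n^{-1}\sum_k(c^x_k+\xi^x_k)c_k\big(\lambda\mu_k/(1+\lambda\mu_k)\big)^2$. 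For the $c^x_kc_k$ part I would distribute one factor $\lambda\mu_k/(1+\lambda\mu_k)$ onto each of $c^x_k$ and $c_k$ and apply Cauchy--Schwarz, obtaining $\big(\Btp(f^x,\lambda)\,\Btp(f,\lambda)\big)^{1/2}=\LO(\lambda)$ by Lemma \ref{lem:sqbias}. For the $\xi^x_kc_k$ part I would note that $\big(\lambda\mu_k/(1+\lambda\mu_k)\big)^2\le \lambda\mu_k/(1+\lambda\mu_k)$, so the term is bounded in absolute value by the corresponding expression in the proof of Lemma \ref{lem:tech_rice}(c) and is therefore $\lo(n^{-1/2})$. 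Combining, the numerator is $\lo(n^{-1/2})+\LO(\lambda)$, which gives (a).

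For part (b), since $\var(\yb)=\sigma^2\IB$,
\[
n\var(\betahpr)=\sigma^2\big(n^{-1}\xb^T(\IB-\SBl)^2\xb\big)^{-2}\big(n^{-1}\xb^T(\IB-\SBl)^4\xb\big).
\]
By the same argument as in Lemma \ref{lem:tech_rice}(a),(b) --- using the elementary inequality $1-\big(a/(1+a)\big)^p\le p/(1+a)$ valid for any fixed $p\ge1$, together with (A2), (A3) and Lemma \ref{lem:trace} for the noise coefficients and Lemma \ref{lem:sqbias} and \eqref{eqn:fx_bounded} for the $\fb^x$ coefficients --- both $n^{-1}\xb^T(\IB-\SBl)^2\xb$ and $n^{-1}\xb^T(\IB-\SBl)^4\xb$ converge to $\sigma_x^2$, so $n\var(\betahpr)\to\sigma^2/\sigma_x^2$. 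The concluding assertions are then immediate: (b) gives $\var(\betahpr)=\LO(n^{-1})$, so the standard deviation is of order $n^{-1/2}$, and comparison with (a) shows we need $\LO(\lambda)=\lo(n^{-1/2})$, i.e.\ $\lambda=\lo(n^{-1/2})$.

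The computations are routine adaptations of those already done for the spatial and spatial+ models, so I do not anticipate a genuine obstacle. The one place needing care --- and the crux of why the partial residual estimate behaves better than $\betahps$ --- is the bound on $n^{-1}\xb^T(\IB-\SBl)^2\fb$: one must resist estimating it the way $n^{-1}\xb^T(\IB-\SBl)\fb$ is handled in Lemma \ref{lem:tech_rice}(c), and instead exploit the second power of $(\IB-\SBl)$ by splitting the two factors symmetrically between the $\fb^x$- and $\fb$-coefficients, which is exactly what yields $\LO(\lambda)$ rather than $\LO(\lambda^{1/2})$ and hence makes the bias negligible at the optimal smoothing rate.
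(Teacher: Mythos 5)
Your proposal is correct and follows essentially the same route as the paper: the paper also reduces (a) and (b) to the quadratic forms $n^{-1}\xb^T(\IB-\SBl)^2\fb$ and $n^{-1}\xb^T(\IB-\SBl)^{2p}\xb$, and obtains exactly your key bounds by citing Lemma \ref{lem:tech_rice}(b) together with Lemma \ref{lem:tech}(b) and (c) specialised to $\lambda_x=\lambda$, rather than re-deriving them inline as you do. Your identification of the crux --- splitting the two factors of $(\IB-\SBl)$ symmetrically between the $\fb^x$- and $\fb$-coefficients via Cauchy--Schwarz to upgrade $\LO(\lambda^{1/2})$ to $\LO(\lambda)$ --- is precisely the mechanism in the paper's proof of Lemma \ref{lem:tech}(c).
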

\begin{proof}
Let $\fb=(f(\tb_1),\ldots,f(\tb_n))^T$. Since $\E(\yb)=\beta\xb+\fb$, the expression \eqref{eqn:pr_estimates} shows that
\begin{eqnarray*}
\E(\betahpr)-\beta&=&\big(\xb^T(\IB-\SBl)^2\xb\big)^{-1}\xb^T(\IB-\SBl)^2(\beta\xb+\fb)-\beta\\
&=&\big(n^{-1}\xb^T(\IB-\SBl)^2\xb\big)^{-1}\big(n^{-1}\xb^T(\IB-\SBl)^2\fb\big)\\
&=&\lo(n^{-1/2})+\LO(\lambda)
\end{eqnarray*}
by Lemma \ref{lem:tech_rice} (b) and Lemma \ref{lem:tech} (c).

Similarly, since $\var(\yb)=\sigma^2\IB$, \eqref{eqn:pr_estimates} shows that 
\begin{eqnarray*}
n\var(\betahpr)&=&n\sigma^2 \big(\xb^T(\IB-\SBl)^2\xb\big)^{-1}\xb^T(\IB-\SBl)^4\xb\big(\xb^T(\IB-\SBl)^2\xb\big)^{-1}\\
&=&\sigma^2\big(n^{-1}\xb^T(\IB-\SBl)^2\xb\big)^{-1}\big(n^{-1}\xb^T(\IB-\SBl)^4\xb\big)\big(n^{-1}\xb^T(\IB-\SBl)^2\xb\big)^{-1}\\
&\rightarrow& \sigma^2/\sigma_x^2 \quad\text{as }n\rightarrow \infty
\end{eqnarray*}
by Lemma \ref{lem:tech_rice} (b) and Lemma \ref{lem:tech} (b). 
\end{proof}

\begin{theorem}\label{thm:pr_lambda}
Suppose $\lambda\approx n^{-\delta}$ for some $0<\delta<1$, $f, f^x\in H^m(\Omega)$ are bounded and $m\ge d$. Then the average squared bias $\Bpr(f,\lambda)$ and average variance $\Vpr(f,\lambda)$ of the partial residual estimate of $f$ satisfy
\begin{description}
\item[(a)] $\Bpr(f,\lambda)=n^{-1}\sum_i (\E((\fbhpr)_i)-f(\tb_i))^2=\LO(\lambda)$,
\item[(b)] $\Vpr(f,\lambda)=n^{-1}\sum_i\var ((\fbhpr)_i)=\LO(n^{-1}\lambda^{-d/2m})$.
\end{description}
In particular, the optimal rate for $\lambda$ in terms of minimising $\amse(\fbhpr)$ is 
$\lambda=\LO(n^{-2m/(2m+d)})$, and when $\lambda$ converges at this optimal rate, $\amse(\fbhpr)=\LO(n^{-2m/(2m+d)})$.
\end{theorem}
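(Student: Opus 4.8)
The plan is to run through the proof of Theorem~\thmpslambda essentially unchanged, with the partial residual estimate $\betahpr$ in place of the spatial model estimate $\betahps$ and Theorem~\ref{thm:pr_est} in place of Theorem~\thmpsest. Write $\fb=(f(\tb_1),\ldots,f(\tb_n))^T$ and $\eepsilon=(\epsilon_1,\ldots,\epsilon_n)^T$, so that $\yb=\beta\xb+\fb+\eepsilon$ and $\E(\yb)=\beta\xb+\fb$. From the definition $\fbhpr=\SBl(\yb-\betahpr\xb)$ in \eqref{eqn:pr_estimates} I would first record that
\[
\E(\fbhpr)-\fb=-(\IB-\SBl)\fb-\big(\E(\betahpr)-\beta\big)\SBl\xb .
\]
Using $\Vert u+v\Vert^2\le 2\Vert u\Vert^2+2\Vert v\Vert^2$ this gives
\[
\Bpr(f,\lambda)=n^{-1}\Vert\E(\fbhpr)-\fb\Vert^2\le 2\,\Btp(f,\lambda)+2\big(\E(\betahpr)-\beta\big)^2\,n^{-1}\xb^T\SBl^2\xb ,
\]
and part (a) then follows since $\Btp(f,\lambda)=\LO(\lambda)$ by Lemma~\ref{lem:sqbias}, $n^{-1}\xb^T\SBl^2\xb=\LO(1)$ by Lemma~\ref{lem:tech_rice}~(d), and $\big(\E(\betahpr)-\beta\big)^2=\lo(n^{-1})+\LO(\lambda^2)$ by Theorem~\ref{thm:pr_est}~(a); because $\lambda\approx n^{-\delta}$ with $0<\delta<1$ we have $n^{-1}=\lo(\lambda)$ and $\lambda^2=\lo(\lambda)$, so the right-hand side is $\LO(\lambda)$.

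For part (b) I would use $\fbhpr-\E(\fbhpr)=\SBl\eepsilon-\big(\betahpr-\E(\betahpr)\big)\SBl\xb$ together with the same quadratic inequality to obtain
\[
\Vpr(f,\lambda)=n^{-1}\E\big[\Vert\fbhpr-\E(\fbhpr)\Vert^2\big]\le 2\,n^{-1}\sigma^2\,\mathrm{Tr}(\SBl^2)+2\,\var(\betahpr)\,n^{-1}\xb^T\SBl^2\xb ,
\]
which is $\LO(n^{-1}\lambda^{-d/2m})+\LO(n^{-1})=\LO(n^{-1}\lambda^{-d/2m})$ by Lemma~\ref{lem:trace}~(b), Theorem~\ref{thm:pr_est}~(b) and Lemma~\ref{lem:tech_rice}~(d). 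Finally, writing $\amse(\fbhpr)=\Bpr(f,\lambda)+\Vpr(f,\lambda)=\LO(\lambda)+\LO(n^{-1}\lambda^{-d/2m})$, the bias bound increases in $\lambda$ while the variance bound decreases in $\lambda$, so the two balance when $\LO(\lambda)=\LO(n^{-1}\lambda^{-d/2m})$, i.e.\ $\lambda^{(2m+d)/2m}\approx n^{-1}$; this yields $\lambda=\LO(n^{-2m/(2m+d)})$ and $\amse(\fbhpr)=\LO(n^{-2m/(2m+d)})$ at that rate.

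I do not expect a genuine obstacle here: all the substantive asymptotics are already contained in Lemmas~\ref{lem:eigen}, \ref{lem:trace}, \ref{lem:sqbias}, \ref{lem:tech_rice} and in Theorem~\ref{thm:pr_est}. The only points needing a line of care are the cross terms discarded when expanding the two squared norms above --- in particular the one appearing in $\Vpr(f,\lambda)$, where $\betahpr$ depends on $\eepsilon$; by Cauchy--Schwarz each such term is at most twice the geometric mean of the two diagonal terms, so it cannot change the rates, exactly as in the proof of Theorem~\thmpslambda. It is also worth noting that, although $\betahpr$ has a better bias rate than $\betahps$ --- $\LO(\lambda)$ rather than $\LO(\lambda^{1/2})$ --- this does not improve the order of $\Bpr(f,\lambda)$, because the contribution $\Btp(f,\lambda)=\LO(\lambda)$ coming from $(\IB-\SBl)\fb$ already dominates; the improvement only shows up later, in Corollary~\corps's analogue for the partial residual estimate.
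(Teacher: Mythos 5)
Your proposal is correct and follows essentially the same route as the paper's own proof: the same decompositions of $\E(\fbhpr)-\fb$ and $\fbhpr-\E(\fbhpr)$, the same appeals to Lemma~\ref{lem:sqbias}, Lemma~\ref{lem:trace}, Lemma~\ref{lem:tech_rice}~(d) and Theorem~\ref{thm:pr_est}, and the same balancing argument for the optimal rate. Your explicit handling of the cross terms via $\Vert u+v\Vert^2\le 2\Vert u\Vert^2+2\Vert v\Vert^2$ is a small point of extra care over the paper's bare inequality, but it does not change the argument.
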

\begin{proof}
Let $\fb=(f(\tb_1),\ldots,f(\tb_n))^T$ and $\eepsilon=(\epsilon_1,\ldots,\epsilon_n)^T$ so that $\yb=\beta\xb+\fb+\eepsilon$. 

By (\ref{eqn:pr_estimates}), $\fbhpr=\SBl(\yb-\betahpr\xb)$ has the same format as the corresponding partial thin plate spline estimate, and therefore, 
\[
\E(\fbhpr)-\fb=-\big(\E(\betahpr)-\beta\big)\SBl\xb-(\IB-\SBl)\fb
\]
and
\[
\fbhpr-\E(\fbhpr)=\SBl\eepsilon-\big(\betahpr-\E(\betahpr)\big)\SBl\xb.
\]
as in the proof of Theorem \thmpslambda. For the derivation of $\Bpr(f,\lambda)$ and $\Vpr(f,\lambda)$, we can therefore apply the same proof where the only adjustment needed is the rate of convergence of the bias $\E(\betahpr)-\beta$.
\begin{eqnarray*}
\Bpr(f,\lambda)&=&n^{-1}\Vert\E(\fbhpr)-\fb\Vert^2\\
&\le&n^{-1}\big\Vert\big(\E(\betahpr)-\beta\big)\SBl\xb\big\Vert^2+n^{-1}\Vert(\IB-\SBl)\fb\Vert^2\\
&=&\big(\E(\betahpr)-\beta\big)^2n^{-1}\xb^T\SBl^2\xb+\Btp(f,\lambda)\\
&=&\big(\lo(n^{-1})+\LO(\lambda^2)\big)\LO(1)+\LO(\lambda)=\LO(\lambda)
\end{eqnarray*}
by Theorem \ref{thm:pr_est} (a), Lemma \ref{lem:tech_rice} (d) and Lemma \ref{lem:sqbias}. This proves part (a).

\begin{eqnarray*}
\Vpr(f,\lambda)&=&n^{-1}\E\big(\Vert \fbhpr-\E(\fbhpr)\Vert^2\big)\\
&\le&n^{-1}\E(\eepsilon^T\SBl^2\eepsilon)+\E\big[\big(\betahps-\E(\betahpr)\big)^2\big]n^{-1}\xb^T\SBl^2\xb\\
&=&n^{-1}\sigma^2\mathrm{Tr}(\SBl^2)+\var(\betahpr)\LO(1)\\
&=&\LO(n^{-1}\lambda^{-d/2m})+\LO(n^{-1})=\LO(n^{-1}\lambda^{-d/2m})
\end{eqnarray*}
by Lemma \ref{lem:trace}, Theorem \ref{thm:pr_est} (b) and Lemma \ref{lem:tech_rice} (d). This proves part (b).

The same argument as we used for the partial thin plate spline estimate $\fbhps$ shows that the optimal rate of convergence for minimising $\amse(\fbhpr)$ is achieved when $\LO(\lambda)=\LO(n^{-1}\lambda^{-d/2m})$, which leads to $\lambda=\LO(n^{-2m/(2m+d)})$ and $\amse(\fbhpr)=\LO(n^{-2m/(2m+d)})$.
\end{proof}

\begin{corollary}\label{cor:pr}
Suppose $\lambda\approx n^{-\delta}$ for some $0<\delta<1$, $f,f^x\in H^m(\Omega)$ are bounded and $m\ge d$. If $\lambda$ converges at the optimal rate in terms of minimising $\amse(\fbhpr)$, then
\[
\lambda=\lo(n^{-1/2}).
\]
In particular, the optimal rate for $\lambda$ ensures that the bias of the partial residual estimate $\betahpr$ converges faster than the standard deviation of the estimate. 
\end{corollary}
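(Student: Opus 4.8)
The plan is to obtain the corollary by feeding the optimal rate identified in Theorem \ref{thm:pr_lambda} into the bias and variance expansions of Theorem \ref{thm:pr_est}, in exact parallel with the proof of Corollary \ref{cor:ts} for the spatial+ model. First I would recall that Theorem \ref{thm:pr_lambda} pins the optimal smoothing rate for minimising $\amse(\fbhpr)$ at $\lambda=\LO(n^{-2m/(2m+d)})$. Written as $\lambda\approx n^{-\delta}$ this corresponds to $\delta=2m/(2m+d)$, which lies in $(0,1)$ (indeed in $(2/3,1)$ under $m\ge d$), so the optimal choice is admissible for the standing hypothesis of the corollary.

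The only genuine computation is the exponent comparison. One has $n^{-2m/(2m+d)}=\lo(n^{-1/2})$ precisely when $\frac{2m}{2m+d}>\frac{1}{2}$, i.e.\ when $4m>2m+d$, i.e.\ when $2m>d$; this is guaranteed by $m\ge d$ (and already by the standing assumption $m>d/2$). Hence at the optimal rate $\lambda=\lo(n^{-1/2})$, which is the first assertion. For the ``in particular'' clause I would then substitute $\lambda=\lo(n^{-1/2})$ into Theorem \ref{thm:pr_est}(a): the bias expansion $\E(\betahpr)-\beta=\lo(n^{-1/2})+\LO(\lambda)$ collapses to $\E(\betahpr)-\beta=\lo(n^{-1/2})$, while Theorem \ref{thm:pr_est}(b) gives $n\var(\betahpr)\to\sigma^2/\sigma_x^2>0$, so the standard deviation of $\betahpr$ is of exact order $n^{-1/2}$. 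Comparing the two, the bias vanishes strictly faster than the standard deviation, as claimed.

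I do not expect any real obstacle here: the corollary is bookkeeping stacked on the two preceding theorems. The one point worth emphasising is the contrast with the spatial model. In Corollary \ref{cor:ps} the AMSE-optimal $\lambda$ is the same $\LO(n^{-2m/(2m+d)})$, but Theorem \ref{thm:ps_est}(a) only supplies $\E(\betahps)-\beta=\lo(n^{-1/2})+\LO(\lambda^{1/2})$, so one would need the far stronger $\lambda=\lo(n^{-1})$ to beat the $n^{-1/2}$ standard deviation, and the optimal rate fails this. It is exactly the extra power of $\lambda$ in the partial-residual bias term, an $\LO(\lambda)$ rather than $\LO(\lambda^{1/2})$ (which comes from Lemma \ref{lem:tech}(c) applied with two factors $\IB-\SBl$ in the proof of Theorem \ref{thm:pr_est}, versus the single factor in Lemma \ref{lem:tech_rice}(c) used for the spatial model), that reduces the requirement to the milder $\lambda=\lo(n^{-1/2})$, which the optimal rate does satisfy. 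I would state the proof in three short lines and simply cite Theorems \ref{thm:pr_lambda} and \ref{thm:pr_est}.
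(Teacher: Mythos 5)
Your argument is correct and is exactly the route the paper intends: the corollary follows by combining the optimal rate $\lambda=\LO(n^{-2m/(2m+d)})$ from Theorem \ref{thm:pr_lambda} with the requirement $\lambda=\lo(n^{-1/2})$ stated in Theorem \ref{thm:pr_est}, and the exponent check $2m/(2m+d)>1/2\iff 2m>d$ is the whole content (the paper leaves the proof implicit, mirroring its proof of Corollary \ref{cor:ts}). The only trivial imprecision is that $\delta=2m/(2m+d)$ lies in $[2/3,1)$ rather than $(2/3,1)$ when $m=d$, which does not affect anything.
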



\section{Additional derivations for simulation results}\label{app:sim_der}

\subsection{Estimated effects in the unsmoothed spatial+ and gSEM models}
If no smoothing penalty is applied, the the spatial model, the gSEM and the spatial+ model are essentially the same, i.e.\ they have the same fitted values and the same unbiased estimate for the covariate effect. The spatial model is in this case an ordinary linear model where the columns in the model matrix are the covariate $\xb$ and the spatial basis vectors $\BBs$. 

The unsmoothed spatial+ model is a reparametrisation of the spatial model where the column $\xb$ in the model matrix is replaced by the spatial residuals $\rb^x=\xb-\fbh^x$ (where $\fbh^x$ are the fitted values of a spatial thin plate spline fitted to $\xb$). This does not change the overall column space as the difference $\fbh^x$ lies in the column space of $\BBs$. By the data generation process,
\begin{eqnarray*}
\yb&=&\beta\xb+\fb+\eepsilon^y\\
&=&\beta\rb^x+\beta\fbh^x-\zb-\zb'+\eepsilon^y, 
\end{eqnarray*}
with $\beta\fbh^x-\zb-\zb'$ in the column space of $\BBs$ and, therefore, the true effect of $\rb^x$ is the same as that of $\xb$. In fact, since $\rb^x$ is orthogonal to the spatial basis vectors, the estimated effect $\betah$ in the spatial+ model (\eqnmodplus) of the paper (and therefore in the spatial model (\eqnmod) of the paper) is obtained as
\begin{eqnarray*}
\betah&=&\big((\rb^x)^T\rb^x\big)^{-1}(\rb^x)^T\yb\\
&=&\beta+\big((\rb^x)^T\rb^x\big)^{-1}(\rb^x)^T\eepsilon^y.
\end{eqnarray*}


Similarly, for the unsmoothed gSEM, since
\[
\rb^y=\yb-\fbh^y=\beta\rb^x+\beta\fbh^x-\zb-\zb'-\fbh^y+\eepsilon^y,
\]
with $\beta\fbh^x-\zb-\zb'-\fbh^y$ in the column space of $\BBs$, the estimated effect $\betah$ of $\rb^x$ in the gSEM model (\eqnmodgsemtwo) of the paper is the same, namely,
\begin{eqnarray*}
\betah&=&\big((\rb^x)^T\rb^x\big)^{-1}(\rb^x)^T\rb^y\\
&=&\beta+\big((\rb^x)^T\rb^x\big)^{-1}(\rb^x)^T\eepsilon^y.
\end{eqnarray*}

Note that, since $\rb^x$ and $\eepsilon^y$ are independent, $\E(\betah)=\beta$, i.e.\ the the estimated covariate effect is unbiased.

\subsection{Non-Gaussian version of RSR}
Recall that in the Gaussian version of RSR, correlation between the covariate and spatial effect estimates is eliminated by restricting the spatial effect to the orthogonal complement of $\xb$. In Section \secspatialplusglm, we saw that 
estimation in the generalised version of the spatial model (i.e.\ (\eqnspatialglm) in the paper) corresponds to that of a Gaussian model in which the model matrix has columns $\tilde{\xb}=\sqrt{\WB}\xb$ and $\sqrt{\WB}\BBs$ with $\WB$ the weights matrix at convergence of the PIRLS algorithm. We can therefore define the generalised RSR model to be the same as the generalised spatial model but with the spatial basis vectors $\BBs$ in the model matrix replaced by
\[
\BBst=(\IB-\xb(\xb^T\WB\xb)^{-1}\xb^T\WB)\BBs.
\]
Then, by construction, the generalised RSR model corresponds to a Gaussian model for which the columns $\tilde{\xb}=\sqrt{\WB}\xb$ and $\sqrt{\WB}\BBst$  are orthogonal:
\[
\tilde{\xb}^T\sqrt{\WB}\BBst=\xb^T\WB\BBst=\boldsymbol{0}.
\]

\end{document}